%% file: neurips_2026.tex
\documentclass{article}

\PassOptionsToPackage{numbers, compress}{natbib}
 \usepackage[preprint]{neurips_2026}

\usepackage[utf8]{inputenc} 
\usepackage[T1]{fontenc}    
\usepackage{url}            
\usepackage{booktabs}       
\usepackage{amsfonts}       
\usepackage{nicefrac}       
\usepackage{microtype}      
\usepackage{xcolor}         

\usepackage{amsmath}
\usepackage{amssymb}
\usepackage{mathtools}
\usepackage{amsthm}
\usepackage{tikz}
\usetikzlibrary{calc, arrows.meta, decorations.markings, shapes.geometric}
\usepackage{subcaption}
\usepackage{float}
\usepackage{physics}
\usepackage{multirow}
\usepackage[colorlinks=true, citecolor=blue, linkcolor=blue, urlcolor=blue]{hyperref}

\usepackage[ruled,vlined]{algorithm2e}

\usepackage[capitalize,noabbrev,nameinlink]{cleveref}

\theoremstyle{plain}
\newtheorem{theorem}{Theorem}[section]
\newtheorem{proposition}[theorem]{Proposition}
\newtheorem{lemma}[theorem]{Lemma}
\newtheorem{corollary}[theorem]{Corollary}
\theoremstyle{definition}
\newtheorem{definition}[theorem]{Definition}

\theoremstyle{remark}
\newtheorem{remark}[theorem]{Remark}

\title{Drift-React: One-step Generation of Reaction Pathways via SE(3) Drifting Fields}

\author{%
  Rémi Schlama$^{1,2}$ \quad Philippe Schwaller$^{1,2}$ \\[0.2cm]
  $^1$Laboratory of Artificial Chemical Intelligence (LIAC), EPFL, Lausanne, Switzerland \\
  $^2$National Centre of Competence in Research (NCCR) Catalysis, EPFL, Lausanne, Switzerland \\[0.15cm]
  \texttt{\{remi.schlama, philippe.schwaller\}@epfl.ch}
}

\begin{document}

\maketitle

\begin{abstract}
  Mapping reaction pathways and transition states (TS) is fundamental to chemistry but computationally expensive at scale. The minimum energy pathway (MEP) dictates reaction rates and mechanisms, yet recovering it via electronic-structure methods requires thousands of costly force evaluations. Recent generative models accelerate TS identification but rely on iterative inference and only predict isolated saddle-point snapshots, missing the continuous reaction trajectory. We introduce Drift-React, an $\mathrm{SE}(3)$-equivariant generative framework that predicts complete reaction pathways in a single forward pass from only reactant and product geometries. By shifting distribution evolution to training via a Sinkhorn-weighted drifting field, Drift-React eliminates both the iterative force evaluations of NEB-style methods and the sequential ODE/SDE integration of diffusion and flow matching models. Evaluated on the Transition1x and Halo8 datasets, our one-step model generates physically consistent MEPs that accurately capture energetic bottlenecks and enable arbitrary-resolution sampling along the reaction coordinate. For isolated TS prediction, Drift-React matches the sub-Ångström accuracy of state-of-the-art iterative models while delivering orders-of-magnitude acceleration, clearing a major computational bottleneck for large-scale reaction network exploration.
\end{abstract}

\section{Introduction}

The rate and selectivity of a chemical reaction are governed by the geometry and energy of a single, fleeting nuclear configuration: the transition state (TS). Formally, the TS is the first-order saddle point on the Born--Oppenheimer potential energy surface (PES), located at the apex of the minimum energy pathway (MEP) connecting reactant and product basins. Within transition state theory~\cite{eyringActivatedComplexChemical1935} and its variational extensions~\cite{truhlarVariationalTransitionState,fukuiVariationalPrinciplesChemical1981}, the activation free energy $\Delta G^\ddagger$, derived from the electronic structure and vibrational partition functions of the TS, dictates the reaction rate via the Eyring equation. Accurate TS prediction is therefore indispensable to the rational design of catalysts~\cite{kozuchHowConceptualizeCatalytic2011}, enzymes~\cite{kissComputationalEnzymeDesign2013}, and pharmaceutical compounds~\cite{schrammEnzymaticTransitionStates2018}. Classical computational approaches, chief among them the Nudged Elastic Band (NEB) method~\cite{henkelmanClimbingImageNudged2000,lindgrenScaledDynamicOptimizations2019} and the String Method~\cite{eStringMethodStudy2002}, locate the TS by iteratively relaxing a discretized chain of nuclear configurations along the MEP under \textit{ab initio} forces. Each force evaluation requires solving the Kohn--Sham equations to self-consistency at $\mathcal{O}(N^3)$ cost with standard implementations~\cite{goedeckerLinearScalingElectronic1999}, rendering these methods expensive at scale for high-throughput screening~\cite{gomez-bombarelliDesignEfficientMolecular2016} of even moderately sized organic molecules~\cite{unkeMachineLearningForce2021}.

Machine learning offers a promising alternative~\cite{beagleholeMachineLearningTransition2025,xuECTSUltrafastDiffusion2025,darouichTrainingDomainRobust2026,wuMachineLearnedLeftmostHessian2026,nikitinRightSaddleStereochemistryAware2026}. Generative models trained on reactant--TS--product triplets, including diffusion-based frameworks such as OA-ReactDiff~\cite{duanAccurateTransitionState2023} and TSDiff~\cite{kimDiffusionbasedGenerativeAI2024}, and flow-matching approaches~\cite{shprintsFragmentFlowScalableTransition2026,luoGeneratingTransitionStates2025,galustianGoFlowEfficientTransition2025a,darouichAdaptiveTransitionState2025,shenDrivingReactionTrajectories2026} such as ReactOT~\cite{duanOptimalTransportGenerating2025} and MolGEN~\cite{tuoFlowMatchingReaction2025}, generate TS geometries with sub-Ångström RMSD in seconds, bypassing DFT entirely at inference. Yet two structural limitations constrain their utility. First, continuous-time generative models require iterative inference, integrating an SDE or ODE across thousands of sequential network evaluations and reintroducing the bottleneck that machine learning was meant to eliminate~\cite{songDenoisingDiffusionImplicit2022}. Second, they predict a single static TS in isolation, discarding the continuous reaction coordinate and any information about metastable intermediates~\cite{rajaActionMinimizationMeetsGenerative2025,harePosttransitionStateBifurcations2017} essential for understanding mechanism, selectivity, and competing pathways. Recent methods that recover this context, including surrogate-potential approaches (NeuralNEB~\cite{schreinerNeuralNEBNeuralNetworks2022}) and continuous curve parametrizations (MEPIN~\cite{namTransferableLearningReaction2025}), still rely on iterative relaxation loops, energy-based training, or handcrafted geodesic priors.

We introduce Drift-React, an $\mathrm{SE}(3)$-equivariant pathway generator that builds on the Drifting Models paradigm of~\citet{dengGenerativeModelingDrifting2026} for molecular configuration space. The model maps simple priors (e.g., linear interpolations between reactant and product) into physically consistent reaction pathways in a single forward pass, eliminating both the iterative ODE/SDE integration of diffusion- and flow-based methods and the iterative force evaluations of NEB-style optimizers.

\paragraph{Contributions.}
\begin{enumerate}
    \item \textbf{One-step pathway generator.} The first generative model mapping a reactant--product pair to a complete minimum energy pathway in a single forward pass, eliminating both the thousands of force evaluations of NEB-style relaxers~\cite{henkelmanClimbingImageNudged2000,schreinerNeuralNEBNeuralNetworks2022} and the iterative ODE/SDE integration of diffusion- and flow-based generators~\cite{duanAccurateTransitionState2023,kimDiffusionbasedGenerativeAI2024,duanOptimalTransportGenerating2025}.
    \item \textbf{$\mathrm{SE}(3)$-equivariant drifting field for reaction pathways.} We extend the Drifting Models paradigm~\cite{dengGenerativeModelingDrifting2026} via (i) per-sample Kabsch alignment with a LEFTNet~\cite{duNewPerspectiveBuilding2023a} backbone, and (ii) a specialization of the joint-affinity construction to the single-positive regime characteristic of reaction-pathway data.
    \item \textbf{State-of-the-art accuracy at a fraction of the cost.} On Transition1x~\cite{schreinerTransition1xDatasetBuilding2022} and Halo8~\cite{leeDatasetChemicalReaction2025}, the best geometric fidelity (TS-RMSD, IRC-RMSD, Fréchet distance) across all classical interpolation and surrogate-NEB baselines, and on Halo8, the best activation-barrier and per-image energy MAE, at two to three orders of magnitude lower inference cost than NeuralNEB.
    \item \textbf{Unified pathways and TS treatment.} Reducing the pathway to three images recovers the standard reactant--TS--product setting without modifying architecture or training. Drift-React occupies the speed-optimal corner of the TS-RMSD vs.\ inference-time Pareto front jointly with ReactOT~\cite{duanOptimalTransportGenerating2025}, at a single network evaluation per reaction.
\end{enumerate}

\begin{figure}[h]
    \centering
    \includegraphics[width=\linewidth]{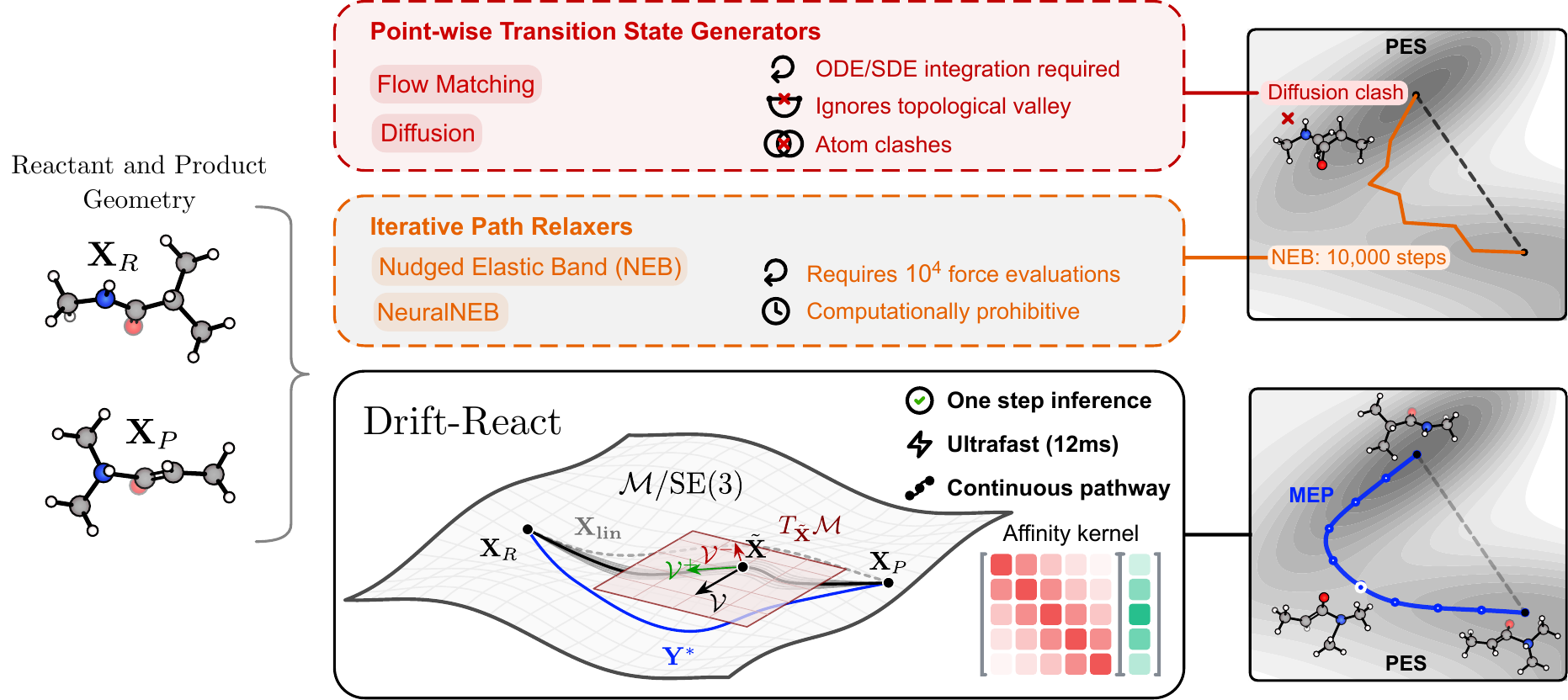}
    \caption{\textbf{Three paradigms for reaction pathway prediction.} \textbf{Top:} Point-wise generators (e.g., diffusion) predict isolated TS geometries at fast inference but ignore the surrounding PES valley, often producing structures with steric clashes. \textbf{Middle:} Iterative chain-of-states methods (e.g., NEB) recover the full MEP but require thousands of force evaluations per reaction. \textbf{Bottom:} Drift-React generates the complete reaction pathway in a single forward pass ($\sim$12~ms) by learning an $\mathrm{SE}(3)$-equivariant drifting field that balances target attraction with swarm repulsion.}
\label{fig:overview}
\end{figure}

\section{Reaction Pathway Generation via Drifting}

\subsection{Potential Energy Surface and Reaction Pathways}

Under the Born--Oppenheimer approximation, a molecular transformation occurs on a high-dimensional potential energy surface (PES) $V: \mathbb{R}^{3N} \to \mathbb{R}$, where $\mathbf{X} \in \mathbb{R}^{3N}$ denotes the Cartesian coordinates of $N$ atoms. The reactant and product geometries, $\mathbf{X}_R$ and $\mathbf{X}_P$, occupy stable local minima of this surface, characterized by a vanishing gradient $\nabla V = 0$ and a positive-definite Hessian. Connecting these minima requires navigating the full $3N-6$ dimensional space of internal degrees of freedom, a dimensionality that renders exhaustive grid-based searches intractable and necessitates targeted pathway algorithms. The transition state $\mathbf{X}^\ddagger$ is the first-order saddle point of $V$, with a vanishing gradient and a single negative Hessian eigenvalue along the reaction coordinate~\cite{schlegelGeometryOptimization2011}.

Fukui's intrinsic reaction coordinate (IRC)~\cite{fukuiPathChemicalReactions1981,tachibanaDifferentialGeometryChemically1978} formalizes the pathway connecting $\mathbf{X}^\ddagger$ to both minima as the steepest-descent path on the PES, defining the minimum energy pathway (MEP) that we denote $\mathbf{Y}^\star$. A continuous MEP is a curve $\gamma: [0,1] \to \mathbb{R}^{3N}$ with $\gamma(0) = \mathbf{X}_R$, $\gamma(1) = \mathbf{X}_P$, along which the potential gradient has no component perpendicular to the path tangent $\hat{\boldsymbol{\tau}}(\alpha) = \dot\gamma(\alpha)/\lVert\dot\gamma(\alpha)\rVert$~\cite{eStringMethodStudy2002}:
\begin{equation}
\bigl[\nabla V(\gamma(\alpha))\bigr]^{\perp} = 0, \quad \text{where}\; [\nabla V]^\perp := \nabla V - (\nabla V \cdot \hat{\boldsymbol{\tau}}) \hat{\boldsymbol{\tau}}.
\label{eq:mep}
\end{equation}

Classical chain-of-states methods discretize $\gamma$ into $F$ images $\{\mathbf{Y}^\star_1, \ldots, \mathbf{Y}^\star_F\}$ and iteratively relax them under \textit{ab initio} forces. The most widely used are the nudged elastic band (NEB)~\cite{henkelmanClimbingImageNudged2000} and the string method~\cite{eStringMethodStudy2002}. This procedure routinely demands thousands of quantum-mechanical gradient evaluations per reaction and constitutes the primary computational bottleneck in transition state search. Our goal is to replace this iterative inner loop with a one-step generator that amortizes the cost across a training set of precomputed reference MEPs.

\subsection{The Drifting Generative Framework}

We train a non-iterative neural network $f_\theta: (\mathbb{R}^{3N})^F \to (\mathbb{R}^{3N})^F$ that maps a prior pathway to a refined reaction pathway in a single forward pass, where $F$ is the number of images discretizing the MEP. 
We define $p_{\text{prior}}$ as a (possibly noised) interpolation between $\mathbf{X}_R$ and $\mathbf{X}_P$, which encodes boundary conditions but lacks barrier geometry. The generator induces a pushforward distribution $q_\theta = [f_\theta]_{\#}\,p_{\text{prior}}$, and the objective is to learn $f_\theta$ such that $q_\theta \approx p_{\text{data}}$, where $p_{\text{data}}$ is the distribution of ground-truth MEP configurations (e.g., from Transition1x~\cite{schreinerTransition1xDatasetBuilding2022} or Halo8~\cite{leeDatasetChemicalReaction2025}).

Rather than enforcing this match through iterative sampling, we follow the Drifting Models paradigm of~\citet{dengGenerativeModelingDrifting2026}. At training step $s$, each generated sample $\mathbf{X}^{(s)} = f_{\theta^{(s)}}(\mathbf{Z})$ with $\mathbf{Z} \sim p_{\text{prior}}$ undergoes an implicit evolution governed by an anti-symmetric drifting field $\mathcal{V}_{p,q}$:
\begin{equation}
    \mathbf{X}^{(s+1)} = \mathbf{X}^{(s)} + \mathcal{V}_{p, q_{\theta^{(s)}}}(\mathbf{X}^{(s)}),
\end{equation}
where the residual displacement is realized through the parameter update $\theta^{(s)} \to \theta^{(s+1)}$. The anti-symmetry of the continuous drifting ($\mathcal{V}_{p,q} = -\mathcal{V}_{q,p}$) ensures the generator stops drifting precisely when $q_\theta$ has recovered the target data distribution. Figure~\ref{fig:drifting_method} illustrates this evolution: at each training step, the empirical drifting field transports a swarm of $K$ generated pathways toward the reference MEP, with the swarm acting as both the evaluation points of $\mathcal{V}$ and the empirical samples of $q_\theta$.

\begin{figure*}[h]
    \centering
    \resizebox{\textwidth}{!}{%
    \input{figures/tikz/drifting}
    }
    \caption{%
    \textbf{Training of an $\mathrm{SE}(3)$-equivariant drifting model.}
    The generator $f_\theta$ transforms a linear interpolation prior $p_{\text{prior}}$ into a physically consistent reaction pathway, with reactant and product geometries pinned exactly via a sinusoidal envelope on the displacement output. During training, the empirical drifting field $\mathcal{V}$ (vectors) iteratively transports the generated pathways toward the ground-truth minimum energy pathway (MEP).
    }
    \label{fig:drifting_method}
\end{figure*}

\subsection{Empirical Drifting Field and \texorpdfstring{$\mathrm{SE}(3)$}{SE(3)}-Equivariance}

Following the kernelized mean-shift construction of~\citet{dengGenerativeModelingDrifting2026}, we decompose the drifting field into attractive (toward $p_{\text{data}}$) and repulsive (away from $q_\theta$) contributions. With kernel $k(\mathbf{X}, \mathbf{Y}) = \exp(-\lVert\mathbf{X} - \mathbf{Y}\rVert/\tau)$ and partition functions $Z_p(\mathbf{X}) := \mathbb{E}_{\mathbf{Y}^+ \sim p_{\text{data}}}[k(\mathbf{X}, \mathbf{Y}^+)]$, $Z_q(\mathbf{X}) := \mathbb{E}_{\mathbf{Y}^- \sim q_\theta}[k(\mathbf{X}, \mathbf{Y}^-)]$, the continuous drifting field of~\citet{dengGenerativeModelingDrifting2026} admits the product-kernel form
\begin{equation}\label{eq:deng_continuous}
    \mathcal{V}_{p, q}(\mathbf{X}) = \frac{\mathbb{E}\bigl[k(\mathbf{X}, \mathbf{Y}^+)\, k(\mathbf{X}, \mathbf{Y}^-)\, (\mathbf{Y}^+ - \mathbf{Y}^-)\bigr]}{Z_p(\mathbf{X})\, Z_q(\mathbf{X})},
\end{equation}
which is anti-symmetric under $(p, q) \leftrightarrow (q, p)$ and vanishes when the two distributions match. Our reaction-pathway setting specializes Eq.~\eqref{eq:deng_continuous} to a single positive sample $\mathbf{Y}^\star$ paired with a swarm of $K$ negatives drawn from $q_\theta$. Naive Monte Carlo estimation in this regime is degenerate, because the kernel weight on the single positive cancels its own empirical partition function. The joint normalization introduced below resolves this by treating $\mathbf{Y}^\star$ and the swarm members on equal footing. We defer the full analysis to Appendix~\ref{app:empirical}.

We evaluate this field empirically over a swarm of $K$ generated paths $\{\mathbf{X}_1, \ldots, \mathbf{X}_K\} \sim q_\theta$ and a reference MEP $\mathbf{Y}^\star \sim p_{\text{data}}$. To stabilize swarm dynamics and approximate optimal transport, we employ a jointly normalized Sinkhorn-style affinity matrix $\mathbf{A}$ (following~\citet[Algorithm 2]{dengGenerativeModelingDrifting2026}). Defining the augmented target set $\mathcal{T} = [\mathbf{Y}^\star, \mathbf{X}_1, \ldots, \mathbf{X}_K]$, we compute pairwise logits $L_{ij} = -\lVert \tilde{\mathbf{X}}_i - \mathcal{T}_j \rVert / \tau$ in a Kabsch-aligned frame (denoted by $\tilde{\cdot}$) and obtain the symmetric affinity matrix
\begin{equation}
    \mathbf{A}_{ij} = \sqrt{\sigma_{\mathrm{row}}(L)_{ij}\;\sigma_{\mathrm{col}}(L)_{ij}}.
\end{equation}
Partitioning $\mathbf{A}$ into an attractive affinity $A^{+}_i := \mathbf{A}_{i,0}$ and a repulsive affinity $A^{-}_{ij} := \mathbf{A}_{ij}$ for $j \ge 1$ yields the empirical drifting field in the aligned frame:
\begin{equation}\label{eq:drift_final}
    \tilde{\mathcal{V}}(\tilde{\mathbf{X}}_i) = A^{+}_i \sum_{j=1}^{K} A^{-}_{ij}\,(\mathbf{Y}^\star - \tilde{\mathbf{X}}_j),
\end{equation}
which is then mapped back to the original frame by the inverse Kabsch rotation~\cite{kabschSolutionBestRotation1976}, $\mathcal{V}(\mathbf{X}_i) = R_i^\top\,\tilde{\mathcal{V}}(\tilde{\mathbf{X}}_i)$. See Appendix~\ref{app:empirical} for the full alignment-and-inverse-rotation protocol.

\textbf{Equivariance.} Because physical reaction pathways are invariant under rigid-body motions, we parameterize $f_\theta$ using LEFTNet~\cite{duNewPerspectiveBuilding2023a}, guaranteeing $\mathrm{SE}(3)$-equivariant vector updates. The drifting field $\mathcal{V}$ is $\mathrm{SE}(3)$-equivariant by construction (see Appendix~\ref{app:equivariance}). The Sinkhorn affinities $A^{+}_i, A^{-}_{ij}$ are scalar invariants because they are computed from Kabsch-aligned pairwise distances. The displacement vectors $(\mathbf{Y}^\star - \tilde{\mathbf{X}}_j)$ transform covariantly under rotations applied jointly to all configurations. The inverse rotation $R_i^\top$ then maps the drift from the aligned frame back to the generator's output frame, preserving equivariance under the rigid-body action that aligned the inputs.

\textbf{Endpoint pinning.} To enforce the boundary conditions $f_\theta(\mathbf{Z})_1 = \mathbf{X}_R$ and $f_\theta(\mathbf{Z})_F = \mathbf{X}_P$ exactly, we apply a sinusoidal envelope $w(\alpha_k) = \sin(\pi \alpha_k)$ with $\alpha_k = (k-1)/(F-1)$ to the per-image displacement output of $f_\theta$. Since $w(\alpha_1) = w(\alpha_F) = 0$, the network predicts only the $F-2$ intermediate images. The reactant and product geometries are preserved regardless of network weights.

\textbf{Training.} The equilibrium condition $\mathcal{V} = \mathbf{0}$ motivates fixed-point loss enforced via the stop-gradient operator ($\mathrm{sg}$), following~\cite{dengGenerativeModelingDrifting2026}
\begin{equation}\label{eq:loss}
    \mathcal{L}(\theta) = \frac{1}{K} \sum_{i=1}^K \bigl\lVert \mathbf{X}_i - \mathrm{sg}\bigl(\mathbf{X}_i + \mathcal{V}(\mathbf{X}_i)\bigr)\bigr\rVert^2.
\end{equation}
Both $\mathbf{X}_i$ and $\mathcal{V}(\mathbf{X}_i)$ live in the original (unaligned) frame, so the regression target lies in the same frame as the generator output. Minimizing $\mathcal{L}(\theta)$ drives the swarm toward equilibrium without requiring back-propagation through the complex distribution dependence of $\mathcal{V}$.

\textbf{Inference.} At test time, $f_\theta$ is evaluated once on the linear interpolation between $\mathbf{X}_R$ and $\mathbf{X}_P$ to produce a complete $F$-image pathway in a single forward pass. The drifting field $\mathcal{V}$, swarm sampling, and Sinkhorn affinities are training-time constructs only, eliminating the iterative inference required by diffusion and flow-based models.

\subsection{Algorithm}

Algorithm~\ref{alg:drifting} summarizes the full training procedure. At each iteration, we sample a batch of $K$ linear interpolations from the prior, push them through the generator to produce a swarm $\{\mathbf{X}_i\}_{i=1}^K$. We then assemble the augmented target set with the reference MEP, compute the Sinkhorn affinities, evaluate the drifting field defined in Eq.~\eqref{eq:drift_final}, and take a gradient step on the fixed-point loss of Eq.~\eqref{eq:loss}.

\begin{algorithm}[H]
\DontPrintSemicolon
\caption{Training the Drifting Generative Model}
\label{alg:drifting}

\KwIn{Ground-truth MEP $\mathbf{Y}^\star$, prior $p_{\text{prior}}$, generator $f_\theta$, swarm size $K$, temperature $\tau$, learning rate $\eta$}
\KwOut{Updated generator weights $\theta$}
\While{not converged}{
    \For{$i = 1, \ldots, K$}{
        $\mathbf{Z}_i \sim p_{\text{prior}}(\mathbf{X}_R, \mathbf{X}_P)$ \tcp*{Sample prior}
        $\mathbf{X}_i \gets f_\theta(\mathbf{Z}_i)$ \tcp*{Generate pathway}
    }
    $\mathcal{T} \gets \{\mathbf{Y}^\star\} \cup \{\mathbf{X}_1, \dots, \mathbf{X}_K\}$ \tcp*{Augmented target set}

    \BlankLine
    Kabsch-align $\{\mathbf{X}_i\} \to \{\tilde{\mathbf{X}}_i\}$ with rotations $\{R_i\}$
    
    $L_{ij} \gets -\|\tilde{\mathbf{X}}_i - \mathcal{T}_j\| / \tau$
    
    $\mathbf{A} \gets \sqrt{\sigma_{\text{row}}(L) \odot \sigma_{\text{col}}(L)}$ \tcp*{Joint softmax}
    $A^+_i \gets \mathbf{A}_{i,0}$ \tcp*{Attractive affinity}
    $A^-_{ij} \gets \mathbf{A}_{ij} \; \forall j \in \{1 \dots K\}$ \tcp*{Repulsive affinity}

    \BlankLine
    \tcp{Drifting field}
    $\tilde{\mathcal{V}}(\tilde{\mathbf{X}}_i) \gets A^+_i \sum_{j=1}^K A^-_{ij} (\mathbf{Y}^\star - \tilde{\mathbf{X}}_j)$
    
    $\mathcal{V}(\mathbf{X}_i) \gets R_i^\top\,\tilde{\mathcal{V}}(\tilde{\mathbf{X}}_i)$
    
    \BlankLine
    $\mathcal{L}(\theta) \gets \frac{1}{K} \sum_{i=1}^K \big\| \mathbf{X}_i - \mathrm{sg}\big(\mathbf{X}_i + \mathcal{V}(\mathbf{X}_i)\big) \big\|^2$
    
    $\theta \gets \theta - \eta \nabla_\theta \mathcal{L}(\theta)$
}
\end{algorithm}

\textbf{Physical Interpretation.} The drifting field acts as a learned, single-step surrogate for the iterative force evaluations of classical NEB. Each displacement $(\mathbf{Y}^\star - \tilde{\mathbf{X}}_j)$ specifies how to bring swarm member $j$ onto the reference MEP, and the two affinities determine how this signal is weighted. The attractive affinity $A^{+}_i$ plays the role of the thermodynamic driving force, replacing the \textit{ab initio} potential gradient by pulling structures into the low-energy reaction valley. The repulsive affinity $A^{-}_{ij}$ replaces the classical NEB spring force, but acts as a global geometric constraint rather than a local nearest-neighbor coupling: clustered swarm members mutually repel, encouraging continuous, clash-free exploration of the saddle region.

\section{Results}
\label{sec:results}

We evaluate Drift-React on two complementary tasks: full reaction pathway generation, which is the primary contribution of our method, and point-wise transition state (TS) prediction, which we include for compatibility with existing single-structure generative baselines. Both tasks draw on Transition1x~\cite{schreinerTransition1xDatasetBuilding2022}, a dataset of $\sim$10k organic reaction pathways computed at the $\omega$B97x/6-31G(d) level of theory. The full-pathway evaluation additionally uses the recently released Halo8~\cite{leeDatasetChemicalReaction2025}, which extends Transition1x to halogenated chemistry: $\sim$19k reaction pathways at the $\omega$B97X-3c level, combining recalculated Transition1x reactions with new molecules containing fluorine, chlorine, and bromine.

\subsection{Continuous Reaction Pathway Generation}

We assess Drift-React's ability to predict the complete reaction pathway in a single forward pass given only the reactant ($\mathbf{X}_R$) and product ($\mathbf{X}_P$) geometries. Transition1x provides 10 images per reaction, Halo8 provides 8. For Transition1x we follow the test split convention of recent generative TS methods~\cite{duanOptimalTransportGenerating2025,duanAccurateTransitionState2023}, evaluating on the full pathway rather than the reactant--TS--product triplet alone. Since Halo8 lacks a canonical split, we construct one using a reactant-grouped 80/10/10 partition that places all reactions sharing a reactant molecule into the same partition. This prevents trivial leakage between train and test: reactions differing only in the leaving group, the attacking nucleophile, or minor halogen substitutions would otherwise appear on both sides under a naive random split. The exact grouping procedure is detailed in Appendix~\ref{app:splits}.

We benchmark against linear interpolation (a non-physical geometric baseline corresponding to our prior $p_{\mathrm{prior}}$), the Image-Dependent Pair Potential (IDPP) method~\cite{smidstrupImprovedInitialGuess2014}, geodesic interpolation~\cite{zhuGeodesicInterpolationReaction2019}, and NeuralNEB~\cite{schreinerNeuralNEBNeuralNetworks2022}, a learned surrogate-force-field NEB method. Geometric fidelity is measured via TS-RMSD, IRC-RMSD, and discrete Fréchet distance over the full pathway. Energetic fidelity is measured via the absolute activation barrier error $|\Delta E_{\mathrm{TS}}^\ddagger|$ and the per-image energy MAE $E_{\mathrm{image}}$. All energy metrics use single-point DFT on the generated geometries, evaluated at each dataset's reference level of theory. Further implementation details are in Appendix~\ref{app:baselines}.

\begin{figure}[p]
    \centering
    \makeatletter\def\@captype{table}\makeatother
\centering
\caption{%
    \textbf{Quantitative comparison of continuous pathway generation methods.} 
    Metrics include Transition State Root Mean Square Deviation (TS-RMSD), Intrinsic Reaction Coordinate RMSD (IRC-RMSD), Fréchet Distance, absolute activation barrier error ($|\Delta E^\ddagger|$), Per-image Energy Mean Absolute Error ($E_{\text{image}}$ MAE), and total generation time per pathway. Lower is better ($\downarrow$) for all metrics. $^*$NeuralNEB was retrained using its default hyperparameter configuration on the exact dataset splits used in our evaluation to ensure a fair comparison.
}
\label{tab:benchmark_results}
\resizebox{\textwidth}{!}{%
\begin{tabular}{llcccccc}
\toprule
\textbf{Dataset} & \textbf{Method} & \textbf{TS-RMSD (\AA) $\downarrow$} & \textbf{IRC-RMSD (\AA) $\downarrow$} & \textbf{Fr\'echet Dist. $\downarrow$} & \textbf{$\abs{\Delta E_{\text{TS}}^\ddagger}$ (eV) $\downarrow$} & \textbf{$E_{\text{image}}$ MAE (eV) $\downarrow$} & \textbf{Time (ms) $\downarrow$} \\ 
\midrule
\multirow{5}{*}{\textbf{Transition1x}}
& Linear Interpolation & 0.432 $\pm$ 0.196 & 0.364 $\pm$ 0.115 & 0.748 $\pm$ 0.230 & 1.881 $\pm$ 1.203 & 0.917 $\pm$ 0.279 & 0.2 $\pm$ 0.1 \\
& IDPP & 0.432 $\pm$ 0.196 & 0.364 $\pm$ 0.115 & 0.748 $\pm$ 0.230 & 1.896 $\pm$ 1.205 & 0.915 $\pm$ 0.278 & 8.4 $\pm$ 6.6 \\
& Geodesic Interpolation & 0.432 $\pm$ 0.193 & 0.364 $\pm$ 0.111 & 0.747 $\pm$ 0.230 & 1.910 $\pm$ 1.210 & 0.916 $\pm$ 0.278 & 338.2 $\pm$ 236.5 \\
& NeuralNEB$^*$ & 0.682 $\pm$ 0.322 & 0.536 $\pm$ 0.203 & 0.789 $\pm$ 0.248 & 1.977 $\pm$ 2.259 & 1.319 $\pm$ 1.617 & 6238.8 $\pm$ 3814.4 \\
& \textbf{Drift-React (Ours)} & \textbf{0.151 $\pm$ 0.093} & \textbf{0.223 $\pm$ 0.123} & \textbf{0.460 $\pm$ 0.258} & \textbf{1.631 $\pm$ 1.380} & 3.214 $\pm$ 2.069 & 20.5 $\pm$ 36.5 \\
\midrule
\multirow{5}{*}{\textbf{Halo8}}
& Linear Interpolation & 0.334 $\pm$ 0.190 & 0.205 $\pm$ 0.108 & 0.349 $\pm$ 0.201 & 8.826 $\pm$ 51.490 & 3.937 $\pm$ 16.814 & 0.1 $\pm$ 0.0 \\
& IDPP & 0.337 $\pm$ 0.184 & 0.207 $\pm$ 0.108 & 0.351 $\pm$ 0.196 & 1.651 $\pm$ 1.246 & 1.306 $\pm$ 0.701 & 26.5 $\pm$ 6.8 \\
& Geodesic Interpolation & 0.316 $\pm$ 0.206 & 0.205 $\pm$ 0.116 & 0.338 $\pm$ 0.202 & 1.247 $\pm$ 1.014 & 1.055 $\pm$ 0.558 & 289.0 $\pm$ 521.8 \\
& NeuralNEB$^*$ & 0.761 $\pm$ 0.555 & 0.477 $\pm$ 0.324 &  0.793 $\pm$ 0.595 & 8.031 $\pm$ 32.313 & 4.638 $\pm$ 17.575 & 7299.2 $\pm$ 1829.5 \\
& \textbf{Drift-React (Ours)} & \textbf{0.107 $\pm$ 0.086} & \textbf{0.123 $\pm$ 0.109} & \textbf{0.212 $\pm$ 0.189} & \textbf{0.914 $\pm$ 2.980} & \textbf{0.621 $\pm$ 1.145} & 12.7 $\pm$ 15.6 \\
\bottomrule
\end{tabular}%
}
\makeatletter\def\@captype{figure}\makeatother
    {
    \centering    {
    \includegraphics[width=\textwidth]{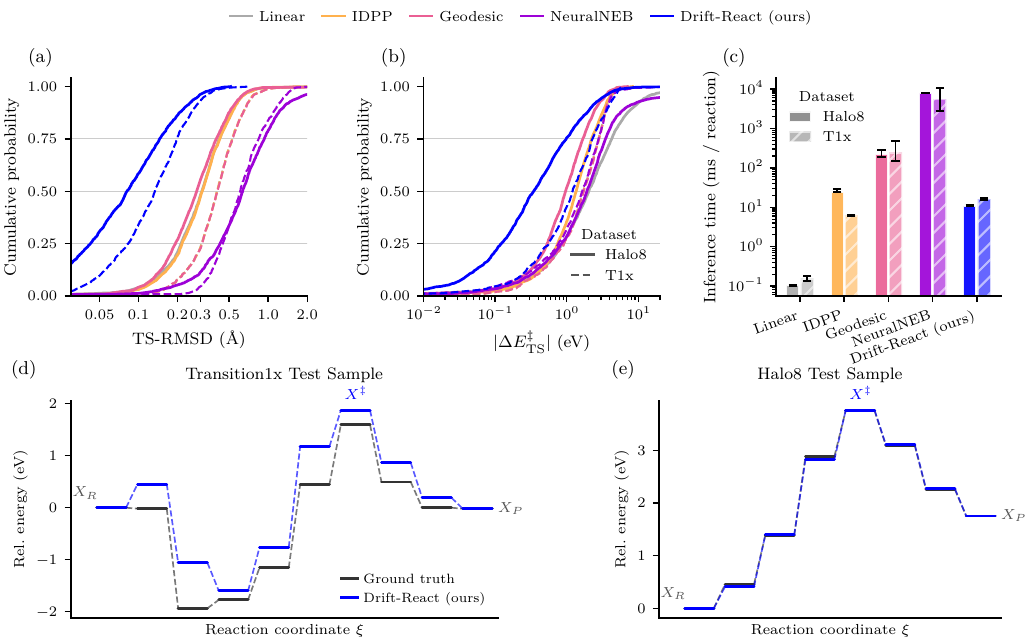}\par
    \includegraphics[width=\textwidth]{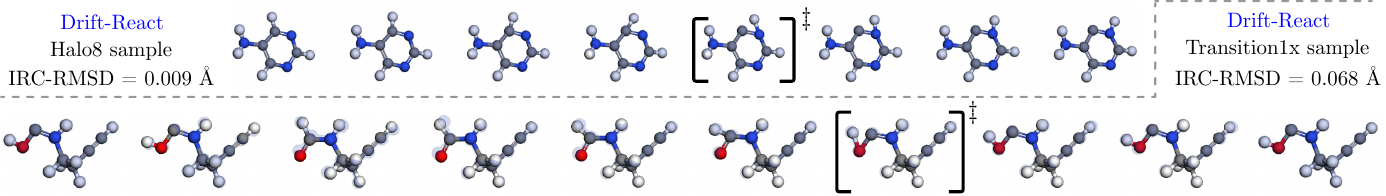}\par
    }
    \caption{%
    \textbf{Drift-React produces accurate, clash-free reaction pathways at orders-of-magnitude lower inference cost than baseline methods.}
    Cumulative distributions over test reactions on Halo8 (solid lines) and Transition1x (dashed lines):
    \textbf{(a)} TS-RMSD.
    \textbf{(b)} Absolute activation barrier error $|\Delta E^\ddagger_{\mathrm{TS}}|$.
    \textbf{(c)} Mean inference time per reaction (log scale), error bars indicate one standard deviation.
    \textbf{(d, e)} Representative test-set pathways generated by Drift-React on Transition1x and Halo8, comparing the predicted energy profile (blue) to the ground-truth IRC (black), with the corresponding atomic geometries shown below: at each image, the ground-truth structure is rendered in full color, and the Drift-React prediction is overlaid in low-opacity blue.
}
    \label{fig:baseline_comparison}
    }
\end{figure}

Table~\ref{tab:benchmark_results} summarizes the quantitative comparison. Drift-React achieves the best geometric fidelity across both datasets and all three geometric metrics. On Halo8 the TS-RMSD of 0.107 \AA~is more than three times lower than the best classical baseline (geodesic interpolation, 0.316 \AA) and seven times lower than the retrained NeuralNEB (0.761 \AA). On Transition1x, Drift-React achieves a TS-RMSD of 0.151 \AA, again surpassing all baselines. The IRC-RMSD and Fréchet distance show comparable margins, indicating that the geometric improvement extends along the full pathway rather than concentrating at the saddle point alone.

The energetic comparison is more nuanced. Heuristic interpolation baselines achieve deceptively competitive geometric metrics on Halo8 because coordinate-space interpolation bounds the maximum geometric deviation, but the resulting paths frequently traverse regions of catastrophic steric clash where atoms pass directly through one another. This is reflected in extreme energetic outliers visible in the standard deviations: linear interpolation has a barrier MAE of $8.826 \pm 51.5$~eV on Halo8, with the long tail driven by reactions where the linear path produces unphysical atomic clashes. NeuralNEB, in our retrained setup with the original authors' default hyperparameter configuration, attempts to relax these initializations under learned forces but exhibits convergence instabilities on a substantial subset of reactions. The result is high-variance pathway predictions with the worst geometric metrics in our comparison and barrier errors comparable to linear interpolation, despite three orders of magnitude more compute. NEB-style methods are well known to be sensitive to spring-constant tuning, climbing-image scheduling, and convergence criteria~\cite{asgeirssonNudgedElasticBand2021,lindgrenScaledDynamicOptimizations2019}. Trained to generate physically consistent pathways directly, Drift-React achieves the best barrier error on Halo8 ($0.914 \pm 2.98$~eV, compared to $1.247 \pm 1.01$~eV for geodesic interpolation) and the best per-image energy MAE ($0.621 \pm 1.15$~eV).

On Transition1x, the energetic story is similar on the barrier metric ($1.631$~eV vs.\ $1.881$~eV for linear, $1.977$~eV for NeuralNEB), but Drift-React's per-image energy MAE ($3.214$~eV) is higher than the classical interpolation baselines ($\sim 0.92$~eV). This reflects a structural property of the Transition1x split, which evaluates against CI-NEB-relaxed pathways that contain only the reactant–TS–product triplet plus interpolated intermediate images. The barrier is correctly recovered, but absolute image energies are sensitive to small geometric perturbations along the rest of the curve. We expand on this and discuss the inherent sensitivity of single-point DFT energies to sub-Ångström coordinate variations near the TS in Appendix~\ref{app:ts_energetics}.

In terms of efficiency, Drift-React generates a complete pathway in $12$--$20$~ms per reaction, three to four orders of magnitude faster than NeuralNEB ($\sim 6{,}000$--$7{,}000$~ms) and an order of magnitude faster than geodesic interpolation ($\sim 290$--$340$~ms). This speedup is achieved without iterative optimization or force evaluation: the trained generator amortizes the computational cost of pathway construction into a single forward pass.

We note that the concurrent MEPIN model~\cite{namTransferableLearningReaction2025} addresses a related but distinct problem formulation: it operates at the GFN1-xTB level of theory rather than DFT, evaluates against a different test split, and trains via energy-based objectives requiring PES queries. A direct head-to-head comparison would require retraining MEPIN at the DFT level on our splits. We leave this to future work and view the two methods as complementary explorations of the learned-pathway design space.

\subsection{Point-wise Transition State Generation}

To benchmark Drift-React against existing single-structure TS generators~\cite{kimDiffusionbasedGenerativeAI2024,duanAccurateTransitionState2023,duanOptimalTransportGenerating2025}, we evaluate it on the reactant--TS--product subset of Transition1x using the standardized data split released with ReactOT~\cite{duanOptimalTransportGenerating2025}. We restrict the pathway to $F = 3$ images corresponding to reactant, TS, and product, leaving the architecture, drifting field, and training procedure unchanged from the full-pathway model: the only change is the pathway resolution. Energy calculations use single-point DFT at the $\omega$B97x/6-31G(d) level, matching Transition1x's reference level of theory.

\begin{figure}[h]
    \centering    {
    \includegraphics[width=\textwidth]{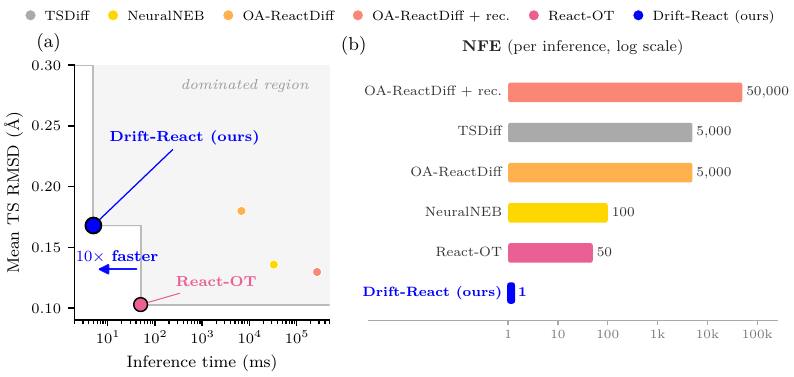}\par
    \includegraphics[width=\textwidth]{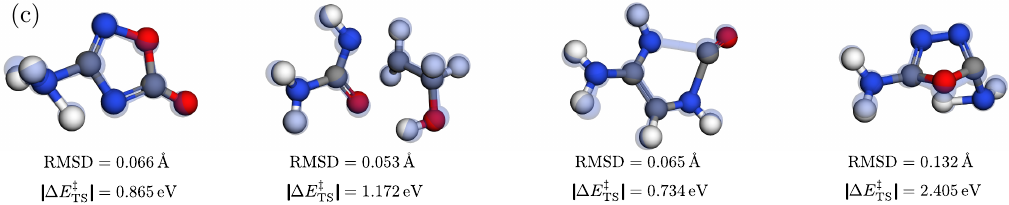}\par
    }
    \caption{
        \textbf{Drift-React advances the Pareto front of geometric accuracy and inference cost for point-wise TS prediction.}
        \textbf{(a)} Mean TS-RMSD versus inference time on the Transition1x reactant--TS--product split. Drift-React (blue) and ReactOT (pink) jointly define the Pareto front: methods in the shaded region are dominated. Drift-React is approximately $10\times$ faster than ReactOT while remaining within the geometric range of recent diffusion- and flow-based generators.
        \textbf{(b)} Number of network function evaluations (NFE) per inference, log scale. Drift-React requires a single forward pass, competing methods require 50 to 50,000 evaluations.
        \textbf{(c)} Representative Drift-React predictions on test-set TSs: ground-truth structures in full color, Drift-React predictions overlaid in low-opacity blue. Per-reaction TS-RMSD and absolute activation barrier error $|\Delta E^\ddagger_{\mathrm{TS}}|$ are annotated below each example.
    }
    \label{fig:ts-gen}
\end{figure}

Figure~\ref{fig:ts-gen} summarizes the comparison. Drift-React achieves a mean TS-RMSD of 0.168~\AA, comparable to OA-ReactDiff (0.18~\AA) and behind ReactOT (0.103~\AA), at substantially lower inference cost. A single forward pass suffices to generate the TS, compared to 50 ODE-integration steps for ReactOT, $5{,}000$ denoising steps for TSDiff and OA-ReactDiff, and up to $50{,}000$ evaluations for OA-ReactDiff with recommender-based selection over 40 samples (Figure~\ref{fig:ts-gen}b). The resulting inference time is roughly $10\times$ lower than ReactOT and two to three orders of magnitude lower than the diffusion-based methods.

The Pareto front in Figure~\ref{fig:ts-gen}a captures the resulting trade-off: Drift-React and ReactOT jointly define the achievable frontier of TS-RMSD versus inference time on this benchmark, with all other methods strictly dominated by one or both. Drift-React occupies the speed-optimal corner, ReactOT occupies the accuracy-optimal corner. The qualitative agreement between predicted and ground-truth geometries on representative test reactions is shown in Figure~\ref{fig:ts-gen}c, where the predicted TS structures align closely with the references at sub-ångström RMSD even on cases involving multi-bond rearrangement.

This trade-off reflects the different training objectives: ReactOT and OA-ReactDiff specialize in TS geometry by training and evaluating exclusively on single saddle-point structures, whereas Drift-React optimizes geometric continuity along the full reaction coordinate, with the TS emerging as the highest-energy image of the predicted pathway. The same trained model handles both regimes without modification. The same architecture and training procedure that produces full pathways at $F = 10$ produces point-wise TS structures at $F = 3$. We adopt TS-RMSD as the primary axis of comparison because it is the metric on which the speed--accuracy trade-off is cleanest: saddle-point energies on raw single-point DFT are heavily sensitive to sub-ångström coordinate variations along the unstable mode of the reaction coordinate, where the energy surface is locally unbounded. Median energy statistics on raw geometries, following the convention of~\citet{duanOptimalTransportGenerating2025}, are reported in Appendix~\ref{app:ts_energetics}. We note that Drift-React's $F = 3$, TS-only energy performance is degraded relative to dedicated single-structure generators (Appendix~\ref{app:ts_energetics}). We view this as the cost of the unified architecture and recommend running Drift-React in full-pathway mode (where the activation-barrier MAE is 1.631~eV) for applications requiring accurate point-wise TS energetics. We view this trade-off as a principled choice for the high-throughput reaction-network exploration setting Drift-React is designed for, where a single trained model produces both pathways and TS structures at unified inference cost.

\section{Conclusion}

We introduced Drift-React, a generative framework that predicts complete chemical reaction pathways in a single forward pass, eliminating both the iterative force-evaluation loops of classical chain-of-states methods such as NEB and the sequential SDE/ODE integration that constrains modern diffusion and flow-based models. By amortizing the evolution of the pushforward distribution into training via an $\mathrm{SE}(3)$-equivariant drifting field with Sinkhorn-weighted swarm repulsion, Drift-React removes the central inference-time bottleneck that has limited learned pathway models in high-throughput settings.

On full reaction pathway generation, Drift-React achieves the best geometric fidelity across both Transition1x and Halo8 and all three geometric metrics (TS-RMSD, IRC-RMSD, Fréchet distance), surpassing classical interpolation baselines and the iterative-ML NeuralNEB approach. Energetic accuracy is dataset-dependent: on Halo8, Drift-React achieves the best activation barrier error and per-image energy MAE. On Transition1x, the activation barrier is correctly recovered while per-image energies are sensitive to the structure of the reference data. These results are obtained starting only from a trivial linear interpolation between reactant and product, with no privileged geometric prior at inference.

As a special case, restricting the pathway to three images (reactant, transition state, product) recovers the standard point-wise TS prediction setting addressed by recent diffusion and flow-matching models. The same training procedure transfers without modification: only the pathway resolution changes. In this regime, Drift-React jointly defines the speed--accuracy Pareto front with ReactOT, occupying the speed-optimal corner with TS-RMSD competitive with diffusion and flow-based generators at a single network evaluation per reaction, an order of magnitude faster than ReactOT and two to three orders faster than diffusion-based methods.

Together, these results show that one-step amortized generation, long thought to require either restrictive geometric priors or sacrifices in geometric fidelity, is achievable for chemical reaction pathways while operating directly at DFT-level reference accuracy. We see Drift-React as a foundation for ultra-fast exploration of complex reaction networks, and more broadly as evidence that the Drifting Models paradigm of~\citet{dengGenerativeModelingDrifting2026} extends naturally to physical scientific domains where iterative inference is the binding constraint on practical deployment.

\section*{Data and Code Availability}
The official implementation of Drift-React is available on a public GitHub repository under the MIT license at \url{https://github.com/schwallergroup/drift-react}.

\section*{Acknowledgments}
This publication was created as part of NCCR Catalysis (grant number 225147), a National Centre of Competence in Research funded by the Swiss National Science Foundation.

\bibliographystyle{unsrtnat}
\bibliography{references}


\newpage
\appendix

\section{Baseline Implementation Details and Drift-React Configuration}
\label{app:baselines}

This appendix details the data splits, baseline implementations, and Drift-React training configuration used to produce the results in Table~\ref{tab:benchmark_results} (full pathway generation) and Figure~\ref{fig:ts-gen} (point-wise TS prediction).

\subsection{Data Splits}
\label{app:splits}

\paragraph{Transition1x.}
We adopt the train/test split released with the ReactOT model of \citet{duanOptimalTransportGenerating2025}, which has become the standard partition used by recent single-structure TS generators~\cite{duanAccurateTransitionState2023,duanOptimalTransportGenerating2025}. This split groups reactions by reactant identity, ensuring our numbers are directly comparable to baselines reported on the same partition. We use this split for both the full reaction pathway evaluation (with all 10 images) and the point-wise transition state evaluation (with $F = 3$ images corresponding to the reactant--TS--product triplet only).

\paragraph{Halo8.}
The Halo8 dataset~\cite{leeDatasetChemicalReaction2025} does not provide a canonical split. We construct an 80/10/10 train/validation/test split using a reactant-grouped procedure that ensures all reactions originating from the same reactant molecule are placed in the same partition. The grouping key is the canonical SMILES of the reactant geometry, computed by atomic-number and 3D-coordinate-based bond perception via RDKit's \texttt{rdDetermineBonds}. Reactions for which RDKit fails to perceive a valid molecular graph are grouped by a deterministic geometric hash over sorted atomic numbers and rounded coordinates. Groups are shuffled with a fixed seed (\texttt{seed=42}) and assigned to splits greedily until each target fraction is reached.

This grouping procedure prevents trivial leakage: under a per-reaction random split, multiple reactions sharing a reactant, differing only in the leaving group, the substrate, the attacking nucleophile, or minor halogen substitutions (F$\rightarrow$Cl$\rightarrow$Br), would otherwise appear in both training and test, inflating performance metrics. Reactant-grouped splitting evaluates whether the model generalizes to unseen reactant chemistry rather than merely interpolating between minor variants of seen reactions.

\subsection{Classical Interpolation Baselines}
\label{app:classical_baselines}

\paragraph{Linear interpolation (L-NEB).}
For reactant $\mathbf{X}_R$ and product $\mathbf{X}_P$, the linear pathway with $F$ images is
\begin{equation}
\mathbf{X}^{\mathrm{lin}}_k = \left(1 - \frac{k}{F-1}\right) \mathbf{X}_R + \frac{k}{F-1} \mathbf{X}_P, \quad k = 0, 1, \ldots, F-1.
\label{eq:linear_interp}
\end{equation}
Where $k$ is the image number, $k=0$ is the reactant geometry, and $k=F-1$ is the product geometry. This serves as the trivial geometric reference and corresponds to the input prior $p_{\mathrm{prior}}$ of Drift-React itself. No training is required.

\paragraph{Image-Dependent Pair Potential (IDPP).}
IDPP~\cite{smidstrupImprovedInitialGuess2014} replaces the true potential energy surface with a surrogate pair-distance objective that interpolates the pairwise interatomic distances of the endpoints. Following the formulation of \citet{zhuGeodesicInterpolationReaction2019}, for image $k$ along an $F$-image path, the target interatomic distance between atoms $i$ and $j$ is the linear interpolation
\begin{equation}
d^{\,k}_{ij} = d^{\,a}_{ij} + \frac{k}{F-1}\bigl(d^{\,b}_{ij} - d^{\,a}_{ij}\bigr),
\label{eq:idpp_target}
\end{equation}
where $d^{\,a}_{ij}$ and $d^{\,b}_{ij}$ are the interatomic distances at the reactant and product respectively, and $d^{\,k}_{ij}$ is the target distance at image $k$. The IDPP objective at image $k$ is the weighted sum of squared deviations between the actual pair distances and the interpolated targets,
\begin{equation}
S^{\mathrm{IDPP}}_k(\mathbf{X}_k) = \sum_{i=1}^{N} \sum_{j > i}^{N} w(d_{ij}) \left( d^{\,k}_{ij} - \sqrt{\sum_{\sigma} (r_{i,\sigma} - r_{j,\sigma})^2} \right)^{2},
\label{eq:idpp_objective}
\end{equation}
where $\sigma \in \{x, y, z\}$, $r_{i,\sigma}$ is the $\sigma$-coordinate of atom $i$, and $w(d_{ij}) = d_{ij}^{-4}$ is the standard weighting that emphasizes short distances and penalizes atomic clashes. The $S^{\mathrm{IDPP}}_k$ functional defines an effective surrogate energy surface. Running NEB-style optimization on this surrogate produces an IDPP path. We use the ASE~\cite{hjorthlarsenAtomicSimulationEnvironment2017} implementation with default parameters: maximum 200 optimization steps, force tolerance $0.05$~\AA{}/step, with $F$ images matching the corresponding ground-truth pathway. IDPP requires no training.

\paragraph{Geodesic interpolation.}
We use the geodesic interpolation method of \citet{zhuGeodesicInterpolationReaction2019}, which generates a reaction path by formulating interpolation as the search for a geodesic curve on a Riemannian manifold whose metric is chosen to approximate Fukui's intrinsic-reaction-coordinate (IRC) metric. The key insight of \citet{zhuGeodesicInterpolationReaction2019} is that an interpolation path between reactant and product corresponds to a straight line in a curved internal-coordinate space, and that an appropriate choice of metric makes this geodesic curve a close approximation to the true minimum energy pathway, using only geometric information, with no electronic-structure evaluations.

Concretely, the configuration manifold $\mathcal{M} = \mathbb{R}^{3N}/\mathrm{SE}(3)$ is equipped with the induced Euclidean metric of a redundant internal-coordinate map $q : \mathcal{M} \to \mathbb{R}^K$,
\begin{equation}
g_{ij}(\mathbf{X}) = \frac{\partial q^k}{\partial x^i}\bigg|_\mathbf{X} \frac{\partial q^k}{\partial x^j}\bigg|_\mathbf{X},
\label{eq:geodesic_metric}
\end{equation}
where each coordinate $q^i$ is a Morse-scaled interatomic distance,
\begin{equation}
q^i \equiv \tilde R_{kl} = \exp\!\left(-\alpha\,\frac{r_{kl} - r_{kl}^{\,e}}{r_{kl}^{\,e}}\right) + \beta\,\frac{r_{kl}^{\,e}}{r_{kl}},
\label{eq:geodesic_coordinate}
\end{equation}
with $r_{kl}$ the Euclidean distance between atoms $k$ and $l$, $r_{kl}^{\,e}$ a reference equilibrium distance set to the sum of the covalent radii, and $\alpha = 1.7$, $\beta = 0.01$ following the authors' defaults. This scaling captures the dominant short-range repulsion and weaker long-range attraction that govern interatomic interactions, making the resulting Euclidean metric a qualitative approximation to Fukui's IRC metric.

Given this metric, the path length of a curve $\gamma(t) = \mathbf{X}(t)$ from reactant to product reduces to the arc length in internal-coordinate space,
\begin{equation}
L(\gamma) = \int_{t_R}^{t_P} \lVert \dot{\mathbf{q}}(t) \rVert \, dt,
\label{eq:geodesic_length}
\end{equation}
and the interpolated path is the geodesic that minimizes this length. Discretizing $\gamma$ into $F$ images $\mathbf{X}^{(1)}, \ldots, \mathbf{X}^{(F)}$ and applying the midpoint approximation derived in \citet{zhuGeodesicInterpolationReaction2019}, the optimization objective becomes
\begin{equation}
\min_{\{\mathbf{X}^{(n)}\}} \; \sum_{n=1}^{F-1} \left[ \left\lVert \mathbf{q}(\mathbf{X}^{(n)}) - \mathbf{q}\!\left(\tfrac{\mathbf{X}^{(n)} + \mathbf{X}^{(n+1)}}{2}\right) \right\rVert + \left\lVert \mathbf{q}\!\left(\tfrac{\mathbf{X}^{(n)} + \mathbf{X}^{(n+1)}}{2}\right) - \mathbf{q}(\mathbf{X}^{(n+1)}) \right\rVert \right],
\label{eq:geodesic_objective}
\end{equation}
which is minimized by sweeping over images one at a time and adjusting each in turn until convergence. We use the authors' reference implementation with default settings: Morse scaling parameters $\alpha = 1.7$, $\beta = 0.01$, convergence tolerance $5 \times 10^{-3}$ on the per-image displacement, maximum 50 outer sweeps. The number of images $F$ matches the ground-truth pathway resolution of each dataset. Geodesic interpolation requires no training and produces continuous, clash-free paths that serve as a strong purely geometric baseline.

\subsection{Machine-Learning Baselines}
\label{app:ml_baselines}

We use two distinct protocols for the machine-learning baselines, depending on the experiment.

\paragraph{Full pathway generation (Table~\ref{tab:benchmark_results}).}
For full reaction pathway generation, the only published learned baseline is NeuralNEB. Because no public NeuralNEB checkpoint is available for our exact splits, we retrain the underlying PaiNN surrogate force field~\cite{schuttEquivariantMessagePassing2021} from scratch on each dataset's training partition (Transition1x with the ReactOT split and Halo8 with our reactant-grouped split), using the original authors' default hyperparameter configuration~\cite{schreinerNeuralNEBNeuralNetworks2022}. At inference, the retrained PaiNN potential is used inside the climbing-image NEB optimizer~\cite{henkelmanClimbingImageNudged2000} with default spring constants and force-tolerance criteria. As discussed in the main text, NEB-style methods are well known to be sensitive to spring-constant tuning, climbing-image scheduling, and convergence criteria~\cite{asgeirssonNudgedElasticBand2021,lindgrenScaledDynamicOptimizations2019}. The numbers we report reflect the default configuration on our splits and should not be interpreted as the best achievable performance for this method. On Transition1x, single-point DFT energies on intermediate images can be highly sensitive to sub-ångström coordinate perturbations, occasionally producing catastrophic outliers that dominate per-image MAE statistics. 

\paragraph{Outlier filtering on Transition1x.}
On Transition1x, single-point DFT energies on intermediate images can be highly sensitive to sub-ångström coordinate perturbations, occasionally producing catastrophic outliers that dominate per-image MAE statistics. We apply a median-absolute-deviation filter with $k = 3$ to per-image energy errors, applied identically to all methods. The pass rate exceeds $97\%$ for every method, ensuring methods are compared on the same set of test reactions. No filtering is applied to Halo8 or to activation barrier statistics.

\paragraph{Point-wise TS prediction (Figure~\ref{fig:ts-gen}).}
For the point-wise TS comparison, we report numbers as published by the original authors of TSDiff~\cite{kimDiffusionbasedGenerativeAI2024}, OA-ReactDiff~\cite{duanAccurateTransitionState2023}, ReactOT~\cite{duanOptimalTransportGenerating2025}, and NeuralNEB~\cite{schreinerNeuralNEBNeuralNetworks2022} on the Transition1x reactant--TS--product split released with ReactOT. Because we adopt this same split, the published numbers are directly comparable to ours without retraining. Inference-cost estimates (network function evaluations and timings) follow the authors' reported configurations:
\begin{itemize}
\item \textbf{TSDiff~\cite{kimDiffusionbasedGenerativeAI2024}.} A graph-based diffusion model that generates TS conformers from 2D molecular graphs via 5{,}000 sequential denoising steps.
\item \textbf{OA-ReactDiff~\cite{duanAccurateTransitionState2023}.} An $\mathrm{SE}(3)$-equivariant elementary-reaction diffusion model that generates TS structures conditioned on $(\mathbf{X}_R, \mathbf{X}_P)$ via 5{,}000 sequential denoising steps. The recommender variant (OA-ReactDiff + rec.) selects among 10 generated samples using a confidence-ranking model, requiring 50{,}000 total network evaluations.
\item \textbf{ReactOT~\cite{duanOptimalTransportGenerating2025}.} An optimal-transport-based generative model for transition states that integrates a learned ODE for 50 steps at inference.
\item \textbf{NeuralNEB~\cite{schreinerNeuralNEBNeuralNetworks2022}.} As above, with $\sim$100 NEB iterations using the authors' published PaiNN checkpoint and the original CI-NEB optimizer setup.
\end{itemize}

\subsection{Drift-React Configuration}
\label{app:drift_ts_config}

We trained Drift-React with the same architecture and optimizer on both Transition1x and Halo8. Three hyperparameters differ between the two datasets: the pathway image count $F$, which matches each dataset's native discretization, the swarm size $K$, and the learning rate, both selected per dataset by validation TS-RMSD. Table~\ref{tab:drift_react_config} summarizes the full configuration.

\begin{table}[h]
\centering
\caption{Drift-React training configuration. Hyperparameters that differ between datasets are reported as Transition1x / Halo8.}
\label{tab:drift_react_config}
\small
\begin{tabular}{ll}
\toprule
\multicolumn{2}{l}{\textbf{Architecture (LEFTNet backbone~\cite{duNewPerspectiveBuilding2023a})}} \\
\midrule
Hidden channels & $256$ \\
Message-passing layers & $6$ \\
Radial basis cutoff & $10.0$~\AA \\
Radial basis functions & $96$ \\
\midrule
\multicolumn{2}{l}{\textbf{Drift configuration}} \\
\midrule
Drift mode & $\mathrm{SE}(3)$-equivariant \\
Swarm size $K$ & $10$ / $12$ \\
Temperature $\tau$ & $0.05$ \\
Prior & linear interpolation, $\sigma_{\mathrm{prior}} = 0$ \\
Images per pathway $F$ & $10$ / $8$ \\
\midrule
\multicolumn{2}{l}{\textbf{Optimization}} \\
\midrule
Optimizer & AdamW~\cite{loshchilovDecoupledWeightDecay2019} \\
Learning rate & $1 \times 10^{-3}$ / $5 \times 10^{-4}$ \\
Weight decay & $0.01$ \\
Warmup epochs & $5$ (linear from zero) \\
Batch size & $4$ reactions \\
Total epochs & $100$ \\
EMA decay & $0.9999$ \\
\bottomrule
\end{tabular}
\end{table}

\paragraph{Hyperparameter selection.}
We swept learning rate $\in \{5 \times 10^{-4},\, 1 \times 10^{-3}\}$ and swarm size $K \in \{10, 12\}$ at fixed temperature $\tau = 0.05$ and prior noise $\sigma_{\mathrm{prior}} = 0$, selecting the best configuration per dataset by validation TS-RMSD. The selected configurations are reported in Table~\ref{tab:drift_react_config}. The Transition1x configuration was selected by training-loss convergence, since the released ReactOT split does not provide a held-out validation set distinct from the test set.

\subsection{Evaluation Protocol}
\label{app:eval_protocol}

\paragraph{Inference.}
At test time, Drift-React is queried once per reaction with the reactant and product geometries $(\mathbf{X}_R, \mathbf{X}_P)$ as input, together with the requested pathway resolution $F$. The output is a single $F$-image pathway, with the reactant and product images preserved exactly by the sinusoidal endpoint envelope. Inference involves a single forward pass of $f_\theta$: no iterative refinement, no force evaluations, and no affinity computation occur at inference. The Sinkhorn-style joint affinity machinery, swarm sampling, and reference MEPs are training-time constructs only and are not used at evaluation. The trained network operates as a deterministic map from the input boundary conditions to the predicted pathway.

\paragraph{Geometric metrics.}
For each test reaction we compute three complementary geometric metrics that capture different aspects of the agreement between predicted and reference pathways. All metrics operate after optimal Kabsch alignment~\cite{kabschSolutionBestRotation1976} of each pair of structures being compared.

\begin{table}[h]
\centering
\caption{Geometric evaluation metrics. All RMSDs are computed after optimal Kabsch alignment~\cite{kabschSolutionBestRotation1976}.}
\label{tab:eval_metrics}
\small
\begin{tabular}{ll}
\toprule
\textbf{Metric} & \textbf{Definition} \\
\midrule
TS-RMSD & RMSD between predicted highest-energy image and ground-truth TS \\
IRC-RMSD & Mean per-image RMSD between predicted and reference pathways \\
Fréchet distance & Discrete Fréchet distance between predicted and reference pathways \\
\bottomrule
\end{tabular}
\end{table}

The TS-RMSD targets the geometry of the saddle point and is the metric most directly comparable to point-wise TS prediction methods. The IRC-RMSD averages per-image RMSDs over all $F$ images and measures average geometric fidelity along the pathway. The Fréchet distance captures morphological similarity beyond per-image agreement.

The discrete Fréchet distance between two ordered sequences of images $\mathbf{X} = (\mathbf{X}_1, \ldots, \mathbf{X}_F)$ and $\hat{\mathbf{X}} = (\hat{\mathbf{X}}_1, \ldots, \hat{\mathbf{X}}_F)$ is defined as
\begin{equation}
d_F[\mathbf{X}, \hat{\mathbf{X}}] = \min_{\sigma, \sigma'} \; \max_{(k, k') \in \sigma, \sigma'} \; \delta\bigl(\mathbf{X}_k,\, \hat{\mathbf{X}}_{k'}\bigr),
\label{eq:frechet}
\end{equation}
where $\sigma, \sigma'$ are monotone non-decreasing alignments between the image indices of the two pathways and $\delta$ is the per-image distance metric. We use the Kabsch-aligned RMSD as $\delta$ throughout. The Fréchet distance measures the minimum-maximum distance required to traverse both pathways while preserving image order, penalizing local detours and shortcuts that the simple per-image mean (IRC-RMSD) would miss.

\paragraph{TS-image selection.}
For metrics targeting the saddle-point structure (TS-RMSD, $|\Delta E^\ddagger|$), the predicted TS image is identified as follows. In the three-image regime ($F = 3$, used for the point-wise TS comparison of Figure~\ref{fig:ts-gen}), the TS is by construction the middle image. In the full-pathway regime ($F = 8$ for Halo8, $F = 10$ for Transition1x), the predicted TS is the image of maximum potential energy along the predicted pathway, identified by $\arg\max_k E_\text{pred}(\mathbf{X}_k)$ at each dataset's reference level of theory.

\paragraph{Energetic metrics.}
Per-reaction energy errors are computed as follows.
\begin{itemize}
    \item \textbf{Absolute activation barrier error} $|\Delta E^\ddagger|$: $|\Delta E^\ddagger_\text{pred} - \Delta E^\ddagger_\text{ref}|$, with $\Delta E^\ddagger = E_\text{TS} - E_\text{reactant}$. The TS image is selected as described above.
    \item \textbf{Per-image energy MAE}, $E_\text{image}$: the per-reaction mean of $|E_\text{pred}(\mathbf{X}_k) - E_\text{ref}(\mathbf{Y}^\star_k)|$ over all $F$ images, restricted to images where both predicted and reference single-point evaluations converged.
\end{itemize}
For both metrics, energies are evaluated at each dataset's reference level of theory: $\omega$B97x/6-31G(d) for Transition1x and $\omega$B97X-3c for Halo8. We compute single-point energies with PySCF for Transition1x and ORCA for Halo8.

\subsection{Compute Resources}
\label{app:compute}

All Drift-React training and inference runs used a single NVIDIA H100 GPU with 64~GB system memory and 12 CPU workers per task, with a 48-hour wall-clock budget per training run.

\subsection{TS-Only Energetic Performance and the Limitation of Point-wise Prediction}
\label{app:ts_energetics}

This appendix reports raw single-point DFT energy statistics for Drift-React in the point-wise TS prediction setting ($F = 3$ images: reactant, TS, product) on the 1073-reaction Transition1x test split. While Drift-React achieves competitive geometric accuracy in this regime (mean TS-RMSD $0.168$~\AA), its per-reaction absolute activation barrier error $|\Delta E^\ddagger_{\mathrm{TS}}|$ exhibits a heavy-tailed distribution: $4.977 \pm 17.470$~eV mean, $1.883$~eV median. This appendix explains why the gap with TS-only methods such as ReactOT (median $1.06$~kcal/mol $\approx 0.046$~eV) is structural to the $F = 3$ configuration rather than a property of the method itself, and motivates our use of TS-RMSD as the primary axis of comparison in Figure~\ref{fig:ts-gen}.

\paragraph{Distribution statistics.}
Per-reaction $|\Delta E^\ddagger_{\mathrm{TS}}|$ on the 1073-reaction test set is summarized in Table~\ref{tab:ts_energetics}. The distribution is heavy-tailed and dominated by a small number of outliers: the median ($1.883$~eV) is more than two orders of magnitude smaller than the maximum, and 91.7\% of predictions fall below $10$~eV.

\begin{table}[h]
\centering
\caption{Per-reaction $|\Delta E^\ddagger_{\mathrm{TS}}|$ statistics for Drift-React in the $F=3$ TS-only setting on the Transition1x test split (1073 reactions).}
\label{tab:ts_energetics}
\small
\begin{tabular}{lc}
\toprule
\textbf{Statistic} & \textbf{Value (eV)} \\
\midrule
Mean $\pm$ std & $4.977 \pm 17.470$ \\
Median & $1.883$ \\
25th / 75th percentile & $0.918$ / $4.332$ \\
95th percentile & $13.841$ \\
\midrule
Fraction below $5$~eV & $78.8\%$ \\
Fraction below $10$~eV & $91.7\%$ \\
\bottomrule
\end{tabular}
\end{table}

\paragraph{Why the $F = 3$ regime suffers structurally.}
The full-pathway Drift-React model leverages the structural context of the surrounding reaction coordinate: when generating image $k$, the network has access to the geometries of images $k \pm 1$ and the swarm-level repulsion mechanism enforces continuity along the entire pathway. The TS emerges as the highest-energy image of a globally consistent trajectory. In the $F = 3$ regime, this context collapses: the network sees only the reactant, product, and a single intermediate images, with no neighboring images to constrain the saddle geometry. Without this context, small geometric errors at the TS image are not corrected by neighboring-image agreement.

\paragraph{Single-point DFT amplifies these errors at the saddle.}
At the TS, the PES has a single negative Hessian eigenvalue along the reaction coordinate, meaning the energy varies linearly to first order along this unstable direction. A sub-ångström geometric displacement aligned with the reaction coordinate can therefore produce energy errors of several eV at the TS, even when the geometric RMSD remains small. Dedicated TS-only methods such as ReactOT and OA-ReactDiff achieve much lower energy errors not because they predict more physically grounded structures but because their architectures specialize in the saddle-point geometry, with training objectives optimized directly for TS placement rather than pathway continuity.


\section{Derivation of the \texorpdfstring{$\mathrm{SE}(3)$}{SE(3)}-Equivariant Drifting Field}
\label{app:drift_derivation}

This appendix provides a self-contained derivation of the empirical drifting field used in Drift-React, adapting the Drifting Models paradigm of \citet{dengGenerativeModelingDrifting2026} to molecular configuration space. We reproduce their core construction with explicit attribution: the anti-symmetric kernelized mean-shift field (Definitions~\ref{def:subfields} and~\ref{def:kernel_drift}, Propositions~\ref{prop:V_antisym} and~\ref{prop:unified}), the geometric-mean joint affinity from row and column softmax normalization (Definition~\ref{def:sinkhorn}, following~\citet[Algorithm 2]{dengGenerativeModelingDrifting2026}), and the self-distillation loss (Definition~\ref{def:loss}, Eq.~(6) of the original work).

The novel contributions of this appendix are twofold. First, the $\mathrm{SE}(3)$-equivariance results of Section~\ref{app:equivariance} establish that the kernelized drifting field is naturally compatible with the symmetries of molecular configuration space, a structure absent from the image-generation setting of \citet{dengGenerativeModelingDrifting2026}. Together with the per-sample Kabsch alignment of the empirical estimator (Eq.~\eqref{eq:kabsch}), they yield an $\mathrm{SE}(3)$-equivariant drift in the original (unaligned) frame (Proposition~\ref{prop:empirical_equivariance}). Second, Section~\ref{app:empirical} specializes the empirical estimator to the single-positive regime characteristic of reaction-pathway data: each prior interpolation $(\mathbf{X}_R, \mathbf{X}_P)$ has exactly one corresponding ground-truth MEP $\mathbf{Y}^\star$, rather than the i.i.d.\ multi-positive sampling ($N_{\mathrm{pos}} \geq 64$) of~\citet{dengGenerativeModelingDrifting2026}. We show that the naive plug-in Monte Carlo estimator degenerates in this regime (Remark~\ref{rem:naive_failure}), whereas the joint-normalization construction of~\citet{dengGenerativeModelingDrifting2026}, when applied to the augmented target set $\mathcal{T} = \{\mathbf{Y}^\star\} \cup \{\tilde{\mathbf{X}}_j\}$, naturally resolves the failure modes. The endpoint-pinning generator (Eq.~\eqref{eq:generator_pinned}) preserves $\mathbf{X}_R$ and $\mathbf{X}_P$ exactly along every generated pathway, completing the structural adaptation to reaction-pathway data.

\subsection{Preliminaries}
\label{app:prelim}

We begin by fixing notation for molecular configurations, probability measures over them, the kernel that defines the drifting field, and the $\mathrm{SE}(3)$ action that encodes the rigid-body symmetries of physical space. These objects appear repeatedly throughout the derivation.

\begin{definition}[Configuration space]
\label{def:config}
Let $N \in \mathbb{N}$ denote the number of atoms in a molecular system. The configuration space is $\mathcal{M} := \mathbb{R}^{3N}$, equipped with the standard Euclidean inner product $\langle \cdot, \cdot \rangle$ and induced norm $\lVert \cdot \rVert$. A configuration $\mathbf{X} \in \mathcal{M}$ is the concatenation of the Cartesian coordinates of all $N$ atoms, $\mathbf{X} = (\mathbf{x}^{(1)}, \ldots, \mathbf{x}^{(N)})$ with $\mathbf{x}^{(n)} \in \mathbb{R}^3$.
\end{definition}

\begin{definition}[Probability measures]
\label{def:proba_measure}
Let $\mathcal{P}(\mathcal{M})$ denote the space of Borel probability measures on $\mathcal{M}$ that are absolutely continuous with respect to Lebesgue measure. We identify each $p \in \mathcal{P}(\mathcal{M})$ with its density and write $\mathbf{Y} \sim p$ to denote a random variable distributed according to $p$. Throughout, all integrals are with respect to Lebesgue measure on $\mathcal{M}$.
\end{definition}

\begin{definition}[Pushforward]
\label{def:pushforward}
Given a measurable map $f: \mathcal{M} \to \mathcal{M}$ and a prior $p_{\mathrm{prior}} \in \mathcal{P}(\mathcal{M})$, the \emph{pushforward} $f_\# p_{\mathrm{prior}} \in \mathcal{P}(\mathcal{M})$ is defined by
\begin{equation}
(f_\# p_{\mathrm{prior}})(B) = p_{\mathrm{prior}}(f^{-1}(B)) \quad \text{for all Borel } B \subseteq \mathcal{M}.
\end{equation}
Equivalently, if $\mathbf{X} \sim p_{\mathrm{prior}}$ then $f(\mathbf{X}) \sim f_\# p_{\mathrm{prior}}$. We denote the pushforward of the prior under the generator $f_\theta$ by $q_\theta := [f_\theta]_\# p_{\mathrm{prior}}$.
\end{definition}

\begin{definition}[Target distribution]
\label{def:target_distri}
We denote by $p \in \mathcal{P}(\mathcal{M})$ the distribution of reference minimum energy pathway (MEP) configurations, in practice obtained from NEB-converged pathways in Transition1x or Halo8. The generative goal is to find $\theta$ such that $q_\theta = p$.
\end{definition}

\begin{definition}[Exponential kernel]
\label{def:kernel}
For a temperature parameter $\tau > 0$, define the kernel $k: \mathcal{M} \times \mathcal{M} \to \mathbb{R}_{>0}$ by
\begin{equation}
k(\mathbf{X}, \mathbf{Y}) := \exp\!\left(-\frac{\lVert \mathbf{X} - \mathbf{Y}\rVert}{\tau}\right).
\label{eq:kernel_def}
\end{equation}
The kernel is symmetric, strictly positive, bounded by $1$, and depends only on the Euclidean distance $\lVert\mathbf{X} - \mathbf{Y}\rVert$.
\end{definition}

\begin{definition}[$\mathrm{SE}(3)$ action]
\label{def:se3}
The special Euclidean group $\mathrm{SE}(3) = \mathrm{SO}(3) \ltimes \mathbb{R}^3$ acts on $\mathcal{M}$ by atom-wise rigid motions. For $g = (R, t) \in \mathrm{SE}(3)$ and $\mathbf{X} = (\mathbf{x}^{(1)}, \ldots, \mathbf{x}^{(N)}) \in \mathcal{M}$, define
\begin{equation}
(g \cdot \mathbf{X})^{(n)} := R \mathbf{x}^{(n)} + t, \qquad n = 1, \ldots, N.
\end{equation}
With slight abuse of notation, we write $g \cdot \mathbf{X} = R\mathbf{X} + t$, the translation broadcast atom-wise. The pushforward of $p \in \mathcal{P}(\mathcal{M})$ under $g$ is denoted $g_\# p$.
\end{definition}

\begin{definition}[Pathway space]
\label{def:pathway}
For a fixed number of images $F \in \mathbb{N}\setminus\{0, 1, 2\}$, the pathway space is $\mathcal{M}^F := (\mathbb{R}^{3N})^F$, the space of ordered $F$-tuples of configurations. A pathway $\mathbf{X} \in \mathcal{M}^F$ is written $\mathbf{X} = (\mathbf{X}_1, \ldots, \mathbf{X}_F)$ with $\mathbf{X}_k \in \mathcal{M}$ for $k = 1, \ldots, F$ and boundary conditions $\mathbf{X}_1 = \mathbf{X}_R$, $\mathbf{X}_F = \mathbf{X}_P$. We extend the $\mathrm{SE}(3)$ action to pathways frame-wise: $(g \cdot \mathbf{X})_k = g \cdot \mathbf{X}_k$.
\end{definition}

\subsection{Drifting Fields and the Anti-symmetry Principle}
\label{app:antisymmetry}

\begin{definition}[Drifting field]
\label{def:drifting_field}
A \emph{drifting field} is a measurable map
\begin{equation}
\mathcal{V} : \mathcal{P}(\mathcal{M}) \times \mathcal{P}(\mathcal{M}) \times \mathcal{M} \to \mathcal{M},
\end{equation}
written $\mathcal{V}_{p,q}(\mathbf{X})$, that assigns to each ordered pair of distributions $(p, q)$ and each configuration $\mathbf{X}$ a displacement vector in $\mathcal{M}$.
\end{definition}

The drifting field governs the training-time evolution of generated samples through the implicit update
\begin{equation}
\mathbf{X}^{(s+1)} = \mathbf{X}^{(s)} + \mathcal{V}_{p, q^{(s)}}(\mathbf{X}^{(s)}),
\label{eq:training_evolution}
\end{equation}
where $\mathbf{X}^{(s)} = f_{\theta^{(s)}}(\mathbf{Z})$, $q^{(s)} = [f_{\theta^{(s)}}]_\# p_{\mathrm{prior}}$, and the residual is realized by the parameter update $\theta^{(s)} \to \theta^{(s+1)}$. For convergence at the matched distribution, we require $\mathcal{V}_{p,q}(\mathbf{X}) = \mathbf{0}$ for all $\mathbf{X}$ whenever $q = p$.

\begin{proposition}[Anti-symmetry implies equilibrium]
\label{prop:antisymmetry_equilibrium}
Suppose $\mathcal{V}$ satisfies the anti-symmetry condition
\begin{equation}
\mathcal{V}_{p,q}(\mathbf{X}) = -\mathcal{V}_{q,p}(\mathbf{X}) \quad \forall p, q \in \mathcal{P}(\mathcal{M}), \; \forall \mathbf{X} \in \mathcal{M}.
\label{eq:antisymmetry}
\end{equation}
Then $q = p \implies \mathcal{V}_{p,q}(\mathbf{X}) = \mathbf{0}$ for all $\mathbf{X} \in \mathcal{M}$.
\end{proposition}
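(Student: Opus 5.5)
The argument specializes the anti-symmetry condition \eqref{eq:antisymmetry} to the diagonal $q = p$. Fix an arbitrary $p \in \mathcal{P}(\mathcal{M})$ and $\mathbf{X} \in \mathcal{M}$, and suppose $q = p$. Instantiating \eqref{eq:antisymmetry} with the ordered pair $(p,p)$ gives $\mathcal{V}_{p,p}(\mathbf{X}) = -\mathcal{V}_{p,p}(\mathbf{X})$. This holds because the hypothesis is stated for all pairs in $\mathcal{P}(\mathcal{M}) \times \mathcal{P}(\mathcal{M})$, the diagonal included.

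Moving both terms to one side gives $2\,\mathcal{V}_{p,p}(\mathbf{X}) = \mathbf{0}$ in $\mathcal{M} = \mathbb{R}^{3N}$. Dividing by $2$ is legitimate in a real vector space, so $\mathcal{V}_{p,p}(\mathbf{X}) = \mathbf{0}$. Since $\mathbf{X}$ was arbitrary and $q = p$, this yields $\mathcal{V}_{p,q}(\mathbf{X}) = \mathbf{0}$ for all $\mathbf{X}$, as claimed.

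There is essentially no obstacle. The only point needing care is the quantifier structure: the conclusion relies on anti-symmetry holding at equal arguments, not just for $p \neq q$. Definition~\ref{def:drifting_field} guarantees this, since $\mathcal{V}$ is defined on all of $\mathcal{P}(\mathcal{M}) \times \mathcal{P}(\mathcal{M}) \times \mathcal{M}$.

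I would also remark after the proof that the converse ($\mathcal{V}_{p,q} \equiv \mathbf{0} \Rightarrow q = p$) is not claimed. Establishing it would require further assumptions on the kernel, which are beyond this proposition.
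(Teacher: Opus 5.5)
Your proof is correct and is the same argument the paper gives. You substitute $q = p$ into the anti-symmetry condition to obtain $\mathcal{V}_{p,p}(\mathbf{X}) = -\mathcal{V}_{p,p}(\mathbf{X})$, hence $2\,\mathcal{V}_{p,p}(\mathbf{X}) = \mathbf{0}$ and so $\mathcal{V}_{p,p}(\mathbf{X}) = \mathbf{0}$.
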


\begin{proof}
Set $q = p$ in Eq.~\eqref{eq:antisymmetry}. Then $\mathcal{V}_{p,p}(\mathbf{X}) = -\mathcal{V}_{p,p}(\mathbf{X})$, hence $2 \mathcal{V}_{p,p}(\mathbf{X}) = \mathbf{0}$ and $\mathcal{V}_{p,p}(\mathbf{X}) = \mathbf{0}$.
\end{proof}

\subsection{Mean-Shift Construction}
\label{app:meanshift}

We now instantiate $\mathcal{V}$ via a kernel-weighted mean-shift decomposition, following classical mean-shift analysis~\cite{fukunagaEstimationGradientDensity1975, chengMeanShiftMode1995,comaniciuMeanShiftRobust2002}. The decomposition separates an attractive component drawing toward $p$ from a repulsive component pushing away from $q$.

\begin{definition}[Attractive and repulsive sub-fields]
\label{def:subfields}
For any $\mu \in \mathcal{P}(\mathcal{M})$ and $\mathbf{X} \in \mathcal{M}$ with $Z_\mu(\mathbf{X}) > 0$, define the kernel-weighted mean-shift vector
\begin{equation}
\mathcal{V}_\mu(\mathbf{X}) := \frac{1}{Z_\mu(\mathbf{X})} \int_\mathcal{M} k(\mathbf{X}, \mathbf{Y})(\mathbf{Y} - \mathbf{X})\, \mu(\mathbf{Y})\, d\mathbf{Y},
\label{eq:meanshift}
\end{equation}
where the partition function is
\begin{equation}
Z_\mu(\mathbf{X}) := \int_\mathcal{M} k(\mathbf{X}, \mathbf{Y})\, \mu(\mathbf{Y})\, d\mathbf{Y}.
\label{eq:partition}
\end{equation}
We write $\mathcal{V}^{+}_p(\mathbf{X}) := \mathcal{V}_p(\mathbf{X})$ for the \emph{attractive sub-field} and $\mathcal{V}^{-}_q(\mathbf{X}) := \mathcal{V}_q(\mathbf{X})$ for the \emph{repulsive sub-field}.
\end{definition}

\begin{remark}
\label{rem:notation}
The two sub-fields are functionally identical maps $\mu \mapsto \mathcal{V}_\mu$. The labels $\pm$ indicate only the role they play in the difference defining $\mathcal{V}_{p,q}$. This symmetry is what makes anti-symmetry of the difference automatic.
\end{remark}

\begin{definition}[Kernelized drifting field]
\label{def:kernel_drift}
The (continuous) kernelized drifting field is
\begin{equation}
\mathcal{V}_{p,q}(\mathbf{X}) := \mathcal{V}^{+}_p(\mathbf{X}) - \mathcal{V}^{-}_q(\mathbf{X}).
\label{eq:V_full}
\end{equation}
\end{definition}

\begin{proposition}[Anti-symmetry of the kernelized field]
\label{prop:V_antisym}
The drifting field of Definition~\ref{def:kernel_drift} satisfies $\mathcal{V}_{p,q}(\mathbf{X}) = -\mathcal{V}_{q,p}(\mathbf{X})$ for all $p, q \in \mathcal{P}(\mathcal{M})$ and $\mathbf{X} \in \mathcal{M}$.
\end{proposition}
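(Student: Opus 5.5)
The plan is to argue directly from Definition~\ref{def:kernel_drift}, using the observation in Remark~\ref{rem:notation} that the attractive and repulsive sub-fields are the same map $\mu \mapsto \mathcal{V}_\mu$ from Definition~\ref{def:subfields}, evaluated at different measures. Fix $p, q \in \mathcal{P}(\mathcal{M})$ and $\mathbf{X} \in \mathcal{M}$.

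First I check that both sub-fields are well defined at every $\mathbf{X}$. Since $k > 0$ and $\mu$ is a probability density, $Z_\mu(\mathbf{X}) > 0$. Since $k \le 1$, we also have $Z_\mu(\mathbf{X}) \le 1$. The numerator integral in Eq.~\eqref{eq:meanshift} converges absolutely because the exponential decay of $k(\mathbf{X}, \cdot)$ dominates the linear growth of $\lVert \mathbf{Y} - \mathbf{X}\rVert$. Concretely,
\[
\int k(\mathbf{X}, \mathbf{Y})\,\lVert \mathbf{Y} - \mathbf{X}\rVert\, \mu(\mathbf{Y})\, d\mathbf{Y} \le \sup_{r \ge 0} r e^{-r/\tau} = \tau/e < \infty .
\]
Hence $\mathcal{V}_p(\mathbf{X})$ and $\mathcal{V}_q(\mathbf{X})$ are finite vectors in $\mathcal{M}$, and no integrability assumption such as finite first moments is needed.

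Next I expand both sides. By Definition~\ref{def:kernel_drift}, $\mathcal{V}_{p,q}(\mathbf{X}) = \mathcal{V}_p(\mathbf{X}) - \mathcal{V}_q(\mathbf{X})$. Applying the same definition with the arguments swapped gives $\mathcal{V}_{q,p}(\mathbf{X}) = \mathcal{V}_q(\mathbf{X}) - \mathcal{V}_p(\mathbf{X})$. The $\pm$ labels do not change the map, so these are the same two vectors. By antisymmetry of subtraction in $\mathcal{M}$,
\[
\mathcal{V}_{q,p}(\mathbf{X}) = -\bigl(\mathcal{V}_p(\mathbf{X}) - \mathcal{V}_q(\mathbf{X})\bigr) = -\mathcal{V}_{p,q}(\mathbf{X}).
\]

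There is no substantive obstacle; the only care needed is the well-definedness step. As a closing remark I would note that, combined with Proposition~\ref{prop:antisymmetry_equilibrium}, this gives $\mathcal{V}_{p,p} \equiv \mathbf{0}$.
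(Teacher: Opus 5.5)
Your proof is correct and follows the paper's argument: expand both $\mathcal{V}_{p,q}$ and $\mathcal{V}_{q,p}$ via Definition~\ref{def:kernel_drift}, use that $\mathcal{V}^{+}_\mu \equiv \mathcal{V}^{-}_\mu \equiv \mathcal{V}_\mu$, and conclude by antisymmetry of subtraction. Your extra well-definedness step, which gives $Z_\mu(\mathbf{X}) > 0$ and bounds the numerator by $\tau/e$, is correct; the paper leaves it implicit.
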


\begin{proof}
By Remark~\ref{rem:notation}, $\mathcal{V}^{+}_\mu \equiv \mathcal{V}^{-}_\mu \equiv \mathcal{V}_\mu$ for any $\mu$. Hence
\begin{equation}
\mathcal{V}_{q,p}(\mathbf{X}) = \mathcal{V}^{+}_q(\mathbf{X}) - \mathcal{V}^{-}_p(\mathbf{X}) = \mathcal{V}_q(\mathbf{X}) - \mathcal{V}_p(\mathbf{X}) = -\bigl(\mathcal{V}_p(\mathbf{X}) - \mathcal{V}_q(\mathbf{X})\bigr) = -\mathcal{V}_{p,q}(\mathbf{X}).
\end{equation}
Combined with Proposition~\ref{prop:antisymmetry_equilibrium}, this gives $q = p \Rightarrow \mathcal{V}_{p,q} \equiv \mathbf{0}$.
\end{proof}

We now derive an alternative form that exposes the anti-symmetry as a single bilinear integral against the displacement $(\mathbf{Y}^+ - \mathbf{Y}^-)$. This form is essential for the empirical specialization in Section~\ref{app:empirical}.

\begin{proposition}[Product-kernel form, \citet{dengGenerativeModelingDrifting2026} Eq.~(11)]
\label{prop:unified}
The kernelized drifting field admits the equivalent representation
\begin{equation}
\mathcal{V}_{p,q}(\mathbf{X}) = \frac{1}{Z_p(\mathbf{X}) Z_q(\mathbf{X})} \iint_{\mathcal{M} \times \mathcal{M}} k(\mathbf{X}, \mathbf{Y}^+) k(\mathbf{X}, \mathbf{Y}^-) (\mathbf{Y}^+ - \mathbf{Y}^-)\, p(\mathbf{Y}^+) q(\mathbf{Y}^-)\, d\mathbf{Y}^+ d\mathbf{Y}^-.
\label{eq:unified}
\end{equation}
\end{proposition}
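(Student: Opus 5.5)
The plan is to start from Definition~\ref{def:kernel_drift} and put both sub-fields over the common denominator $Z_p(\mathbf{X})Z_q(\mathbf{X})$. Both partition functions are strictly positive: $k>0$ everywhere, and $p,q$ are probability densities. By Definition~\ref{def:subfields},
\begin{equation*}
\mathcal{V}_{p,q}(\mathbf{X}) = \frac{\int k(\mathbf{X},\mathbf{Y}^+)(\mathbf{Y}^+-\mathbf{X})\,p(\mathbf{Y}^+)\,d\mathbf{Y}^+}{Z_p(\mathbf{X})} - \frac{\int k(\mathbf{X},\mathbf{Y}^-)(\mathbf{Y}^--\mathbf{X})\,q(\mathbf{Y}^-)\,d\mathbf{Y}^-}{Z_q(\mathbf{X})}.
\end{equation*}
I will multiply the first term by $1 = Z_q(\mathbf{X})/Z_q(\mathbf{X})$, writing $Z_q(\mathbf{X}) = \int k(\mathbf{X},\mathbf{Y}^-)q(\mathbf{Y}^-)\,d\mathbf{Y}^-$. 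I will do the same for the second term with $Z_p$. Each numerator then becomes a double integral over $\mathcal{M}\times\mathcal{M}$ against the product measure $p(\mathbf{Y}^+)q(\mathbf{Y}^-)\,d\mathbf{Y}^+d\mathbf{Y}^-$ with weight $k(\mathbf{X},\mathbf{Y}^+)k(\mathbf{X},\mathbf{Y}^-)$. This step uses Fubini--Tonelli to write the product of two single integrals as an iterated, and hence joint, integral.

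Subtracting the two double integrals under a single integral sign, the integrand is $k(\mathbf{X},\mathbf{Y}^+)k(\mathbf{X},\mathbf{Y}^-)\bigl[(\mathbf{Y}^+-\mathbf{X})-(\mathbf{Y}^--\mathbf{X})\bigr]$. The $\mathbf{X}$ terms cancel, which leaves exactly the displacement $\mathbf{Y}^+-\mathbf{Y}^-$ in Eq.~\eqref{eq:unified}. Dividing by $Z_p(\mathbf{X})Z_q(\mathbf{X})$ then concludes the proof.

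The main obstacle is justifying the integral manipulations. I need $\int k(\mathbf{X},\mathbf{Y})\lVert\mathbf{Y}\rVert\,\mu(\mathbf{Y})\,d\mathbf{Y}<\infty$ for $\mu\in\{p,q\}$, so that the mean-shift integrals in Eq.~\eqref{eq:meanshift} are well defined and Fubini applies to the vector-valued integrand. I would handle this by noting that $k(\mathbf{X},\mathbf{Y})\lVert\mathbf{Y}-\mathbf{X}\rVert = e^{-r/\tau}r \le \tau/e$ is bounded uniformly in $\mathbf{Y}$. Hence $k(\mathbf{X},\mathbf{Y}^+)k(\mathbf{X},\mathbf{Y}^-)\lVert\mathbf{Y}^\pm-\mathbf{X}\rVert$ is bounded, and therefore integrable against the product of probability measures. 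This bound applies componentwise and makes splitting the difference into separate integrals legitimate. No moment assumptions on $p$ or $q$ are needed.
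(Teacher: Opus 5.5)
Your proof is correct and complete. The paper gives no proof of this proposition; it only attributes the identity to Eq.~(11) of Deng et al. The derivation there is exactly your common-denominator manipulation: multiply the attractive term by $1 = Z_q(\mathbf{X})/Z_q(\mathbf{X})$ and the repulsive term by $1 = Z_p(\mathbf{X})/Z_p(\mathbf{X})$, merge into one double integral, and cancel $\mathbf{X}$ to leave $\mathbf{Y}^+ - \mathbf{Y}^-$.

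Beyond the paper, you supply the justification it leaves implicit. The bound $r e^{-r/\tau} \le \tau/e$ makes every integrand bounded, so Fubini applies against $p \otimes q$. As you note, no moment assumptions on $p$ or $q$ are needed, and in fact absolute continuity is not needed either.
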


\subsection{\texorpdfstring{$\mathrm{SE}(3)$}{SE(3)}-Equivariance}
\label{app:equivariance}

We now establish that the drifting field respects the natural symmetries of molecular configuration space. This section is novel relative to~\citet{dengGenerativeModelingDrifting2026}, which considers image data without geometric symmetries.

\begin{lemma}[$\mathrm{SE}(3)$-invariance of the kernel]
\label{lem:kernel_invariance}
For all $g \in \mathrm{SE}(3)$ and $\mathbf{X}, \mathbf{Y} \in \mathcal{M}$,
\begin{equation}
k(g \cdot \mathbf{X}, g \cdot \mathbf{Y}) = k(\mathbf{X}, \mathbf{Y}).
\end{equation}
\end{lemma}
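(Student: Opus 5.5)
The plan is to reduce the statement to a single invariance of the Euclidean norm on $\mathcal{M}$. By Definition~\ref{def:kernel}, $k(\mathbf{X},\mathbf{Y})$ depends on its arguments only through $\lVert \mathbf{X}-\mathbf{Y}\rVert$. It therefore suffices to show that
\begin{equation*}
\lVert g\cdot\mathbf{X} - g\cdot\mathbf{Y}\rVert = \lVert \mathbf{X}-\mathbf{Y}\rVert \quad \text{for every } g=(R,t)\in\mathrm{SE}(3).
\end{equation*}
Once this holds, applying $\exp(-\,\cdot\,/\tau)$ to both sides gives the claim.

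The first step is to compute the difference atom-wise using Definition~\ref{def:se3}. For each $n$,
\begin{equation*}
(g\cdot\mathbf{X})^{(n)} - (g\cdot\mathbf{Y})^{(n)} = R\mathbf{x}^{(n)} + t - R\mathbf{y}^{(n)} - t = R(\mathbf{x}^{(n)}-\mathbf{y}^{(n)}).
\end{equation*}
The broadcast translation cancels, so only the rotation remains.

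The second step is to expand the squared norm on $\mathcal{M}=\mathbb{R}^{3N}$ as a sum over atoms:
\begin{equation*}
\lVert g\cdot\mathbf{X}-g\cdot\mathbf{Y}\rVert^2=\sum_{n=1}^N \lVert R(\mathbf{x}^{(n)}-\mathbf{y}^{(n)})\rVert_{\mathbb{R}^3}^2 .
\end{equation*}
Since $R\in\mathrm{SO}(3)$ satisfies $R^\top R=I$, we have $\lVert R\mathbf{v}\rVert^2=\mathbf{v}^\top R^\top R\mathbf{v}=\lVert\mathbf{v}\rVert^2$ for every $\mathbf{v}\in\mathbb{R}^3$. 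Each summand therefore equals $\lVert \mathbf{x}^{(n)}-\mathbf{y}^{(n)}\rVert^2$, and the sum is $\lVert\mathbf{X}-\mathbf{Y}\rVert^2$. Taking nonnegative square roots gives the desired norm identity.

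No step is a real obstacle. The only point needing care is the abuse of notation $g\cdot\mathbf{X}=R\mathbf{X}+t$: the action must be read as block-diagonal $I_N\otimes R$ with the same translation added to every atom, so that the translation cancels exactly and the block-diagonal orthogonal matrix preserves the $\mathbb{R}^{3N}$ norm. The argument uses only orthogonality of $R$, not $\det R = 1$, so it also covers reflections, i.e.\ the full group $\mathrm{E}(3)$. The same argument extends frame-wise to pathways in $\mathcal{M}^F$ under Definition~\ref{def:pathway}.
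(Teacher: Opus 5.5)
Your proof is correct and follows the paper's own argument: the translation cancels in $g\cdot\mathbf{X}-g\cdot\mathbf{Y}=R(\mathbf{X}-\mathbf{Y})$, orthogonality of $R$ preserves the norm, and substituting into $\exp(-\lVert\cdot\rVert/\tau)$ gives the claim. You additionally spell out the atom-wise, block-diagonal reading of $R\mathbf{X}$ that the paper leaves implicit.
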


\begin{proof}
Write $g = (R, t)$. Then $g \cdot \mathbf{X} - g \cdot \mathbf{Y} = (R\mathbf{X} + t) - (R\mathbf{Y} + t) = R(\mathbf{X} - \mathbf{Y})$. Since $R \in \mathrm{SO}(3)$ is orthogonal, $\lVert R(\mathbf{X} - \mathbf{Y}) \rVert = \lVert \mathbf{X} - \mathbf{Y}\rVert$. Substituting into Eq.~\eqref{eq:kernel_def} gives the result.
\end{proof}

\begin{lemma}[$\mathrm{SE}(3)$-invariance of the partition function]
\label{lem:Z_invariance}
For all $g \in \mathrm{SE}(3)$, $\mathbf{X} \in \mathcal{M}$, and $\mu \in \mathcal{P}(\mathcal{M})$,
\begin{equation}
Z_{g_\# \mu}(g \cdot \mathbf{X}) = Z_\mu(\mathbf{X}).
\end{equation}
\end{lemma}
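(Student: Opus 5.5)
The proof is a change of variables in the integral defining the partition function, followed by an application of Lemma~\ref{lem:kernel_invariance}. The plan is to write $Z_{g_\#\mu}(g\cdot\mathbf{X})$ as an expectation against the pushforward, pull it back to an expectation against $\mu$, and then use invariance of the kernel under the joint action.

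First, I will use the defining property of the pushforward (Definition~\ref{def:se3}, together with the pushforward characterization of Definition~\ref{def:pushforward}). For any bounded measurable $h$ we have $\int h\, d(g_\#\mu) = \int h(g\cdot\mathbf{Y})\, d\mu(\mathbf{Y})$. Applying this to $h(\mathbf{Y}') = k(g\cdot\mathbf{X}, \mathbf{Y}')$, which is bounded by $1$ and continuous, gives
\begin{equation*}
Z_{g_\#\mu}(g\cdot\mathbf{X}) = \int_{\mathcal{M}} k(g\cdot\mathbf{X}, g\cdot\mathbf{Y})\,\mu(\mathbf{Y})\,d\mathbf{Y}.
\end{equation*}

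Because the paper identifies measures with Lebesgue densities, I will also verify the density-level version of this identity. The map $\mathbf{Y}\mapsto R\mathbf{Y}+t$ acts atom-wise on $\mathbb{R}^{3N}$ by the block-diagonal orthogonal matrix $I_N\otimes R$ plus a translation. Its Jacobian determinant is therefore $\det(R)^N = 1$. Hence $(g_\#\mu)(\mathbf{Y}') = \mu(g^{-1}\cdot\mathbf{Y}')$, and substituting $\mathbf{Y}' = g\cdot\mathbf{Y}$ with $d\mathbf{Y}' = d\mathbf{Y}$ yields the same expression. This also confirms that $g_\#\mu$ stays in $\mathcal{P}(\mathcal{M})$.

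Finally, Lemma~\ref{lem:kernel_invariance} replaces $k(g\cdot\mathbf{X}, g\cdot\mathbf{Y})$ with $k(\mathbf{X},\mathbf{Y})$, and the result is exactly $Z_\mu(\mathbf{X})$ from Eq.~\eqref{eq:partition}.

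I expect no substantive obstacle. The only point needing care is to justify the change of variables at the density level, specifically the unit Jacobian and the fact that absolute continuity is preserved. This is handled by the orthogonality of $I_N\otimes R$. Integrability is immediate since $0<k\le 1$.
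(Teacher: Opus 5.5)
Your proposal is correct and follows the paper's proof essentially step for step: you pull the integral against $g_\#\mu$ back to an integral against $\mu$ via the pushforward change of variables, then apply Lemma~\ref{lem:kernel_invariance}. Your explicit unit-Jacobian check ($\det(I_N\otimes R)=\det(R)^N=1$) is a welcome addition, since the paper writes $(g_\#\mu)(\mathbf{Y}')$ as a Lebesgue density without verifying that $g_\#\mu$ remains absolutely continuous.
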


\begin{proof}
By the change-of-variables formula for pushforwards (Definition~\ref{def:pushforward}), if $\mathbf{Y} \sim \mu$ then $g \cdot \mathbf{Y} \sim g_\# \mu$. Therefore
\begin{align}
Z_{g_\# \mu}(g \cdot \mathbf{X}) &= \int_\mathcal{M} k(g \cdot \mathbf{X}, \mathbf{Y}') (g_\# \mu)(\mathbf{Y}')\, d\mathbf{Y}' \\
&= \int_\mathcal{M} k(g \cdot \mathbf{X}, g \cdot \mathbf{Y}) \mu(\mathbf{Y})\, d\mathbf{Y} \\
&= \int_\mathcal{M} k(\mathbf{X}, \mathbf{Y}) \mu(\mathbf{Y})\, d\mathbf{Y} = Z_\mu(\mathbf{X}),
\end{align}
where the second equality is the change of variables and the third uses Lemma~\ref{lem:kernel_invariance}.
\end{proof}

\begin{theorem}[$\mathrm{SE}(3)$-equivariance of the drifting field]
\label{thm:equivariance}
For all $g = (R, t) \in \mathrm{SE}(3)$, $p, q \in \mathcal{P}(\mathcal{M})$, and $\mathbf{X} \in \mathcal{M}$,
\begin{equation}
\mathcal{V}_{g_\# p,\, g_\# q}(g \cdot \mathbf{X}) = R \cdot \mathcal{V}_{p,q}(\mathbf{X}).
\label{eq:equivariance}
\end{equation}
\end{theorem}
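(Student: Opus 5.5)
The plan is to prove the statement for each sub-field separately and then subtract, using Definition~\ref{def:kernel_drift}. Concretely, I will show that for every $\mu \in \mathcal{P}(\mathcal{M})$, every $g = (R,t) \in \mathrm{SE}(3)$ and every $\mathbf{X}$,
\begin{equation*}
\mathcal{V}_{g_\#\mu}(g\cdot\mathbf{X}) = R\,\mathcal{V}_\mu(\mathbf{X}).
\end{equation*}
Applying this with $\mu = p$ and $\mu = q$ and taking the difference gives Eq.~\eqref{eq:equivariance} at once. Here $R$ acts on $\mathcal{M}$ atom-wise, as the block-diagonal matrix $I_N \otimes R$. This matrix is linear and orthogonal on $\mathbb{R}^{3N}$. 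Note that the translation $t$ must drop out: $\mathcal{V}_\mu$ returns a displacement, not a position, so it transforms only under the linear part of $g$.

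For the sub-field identity, I start from Eq.~\eqref{eq:meanshift} evaluated at $g\cdot\mathbf{X}$ with measure $g_\#\mu$.

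\textbf{Denominator.} By Lemma~\ref{lem:Z_invariance}, $Z_{g_\#\mu}(g\cdot\mathbf{X}) = Z_\mu(\mathbf{X})$. This is positive whenever $Z_\mu(\mathbf{X})$ is, so the left side is well defined exactly when the right side is.

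\textbf{Numerator.} I use the same change of variables as in the proof of Lemma~\ref{lem:Z_invariance}: if $\mathbf{Y}\sim\mu$ then $g\cdot\mathbf{Y}\sim g_\#\mu$. The numerator therefore becomes
\begin{equation*}
\int_\mathcal{M} k(g\cdot\mathbf{X}, g\cdot\mathbf{Y})\,(g\cdot\mathbf{Y} - g\cdot\mathbf{X})\,\mu(\mathbf{Y})\,d\mathbf{Y}.
\end{equation*}
Two facts simplify it. By Lemma~\ref{lem:kernel_invariance}, $k(g\cdot\mathbf{X}, g\cdot\mathbf{Y}) = k(\mathbf{X},\mathbf{Y})$. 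By the computation in that lemma's proof, the translations cancel, so $g\cdot\mathbf{Y} - g\cdot\mathbf{X} = R(\mathbf{Y}-\mathbf{X})$. Since $R$ is a constant linear map, it can be pulled out of the (vector-valued, componentwise) integral. Dividing by the invariant partition function then gives $R\,\mathcal{V}_\mu(\mathbf{X})$.

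The only step needing care is the change of variables for densities. Because $g$ is an affine isometry, $|\det(I_N\otimes R)| = 1$, so the density satisfies $(g_\#\mu)(\mathbf{Y}') = \mu(g^{-1}\cdot\mathbf{Y}')$ with no Jacobian factor. The substitution is therefore exact, and absolute continuity is preserved, so $g_\#\mu \in \mathcal{P}(\mathcal{M})$. Integrability of the numerator is inherited from $\mu$, since $\lVert R(\mathbf{Y}-\mathbf{X})\rVert = \lVert\mathbf{Y}-\mathbf{X}\rVert$.

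As a consistency check, the same conclusion follows directly from the product-kernel form of Proposition~\ref{prop:unified}. There the same three facts apply: both kernels are invariant, the partition functions are invariant, and $g\cdot\mathbf{Y}^+ - g\cdot\mathbf{Y}^- = R(\mathbf{Y}^+-\mathbf{Y}^-)$. Doing the double change of variables for $p$ and $q$ again yields $R\,\mathcal{V}_{p,q}(\mathbf{X})$.
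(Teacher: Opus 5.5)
Your proposal is correct and follows the paper's proof essentially step for step. Both arguments use the invariance of the partition function (Lemma~\ref{lem:Z_invariance}), the pushforward change of variables, kernel invariance together with the cancellation of $t$ in $g\cdot\mathbf{Y}-g\cdot\mathbf{X}=R(\mathbf{Y}-\mathbf{X})$, pulling the linear map $R$ out of the integral, and finally subtracting the attractive and repulsive sub-fields. Your additional remarks on the unit Jacobian of $I_N\otimes R$, on well-definedness when $Z_\mu(\mathbf{X})>0$, and on integrability are valid rigor that the paper leaves implicit.
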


\begin{proof}
We compute the transformed attractive sub-field. By Definition~\ref{def:subfields} and Lemma~\ref{lem:Z_invariance},
\begin{align}
\mathcal{V}^{+}_{g_\# p}(g \cdot \mathbf{X}) &= \frac{1}{Z_{g_\# p}(g \cdot \mathbf{X})} \int k(g \cdot \mathbf{X}, \mathbf{Y}')(\mathbf{Y}' - g \cdot \mathbf{X}) (g_\# p)(\mathbf{Y}') d\mathbf{Y}' \\
&= \frac{1}{Z_p(\mathbf{X})} \int k(g \cdot \mathbf{X}, g \cdot \mathbf{Y})(g \cdot \mathbf{Y} - g \cdot \mathbf{X}) p(\mathbf{Y}) d\mathbf{Y} \label{eq:cov_step}\\
&= \frac{1}{Z_p(\mathbf{X})} \int k(\mathbf{X}, \mathbf{Y})\, R(\mathbf{Y} - \mathbf{X})\, p(\mathbf{Y}) d\mathbf{Y} \label{eq:translation_cancel}\\
&= R \cdot \frac{1}{Z_p(\mathbf{X})} \int k(\mathbf{X}, \mathbf{Y})(\mathbf{Y} - \mathbf{X}) p(\mathbf{Y}) d\mathbf{Y} \label{eq:linearity}\\
&= R \cdot \mathcal{V}^{+}_p(\mathbf{X}).
\end{align}
The change of variables in Eq.~\eqref{eq:cov_step} uses the pushforward relation. Eq.~\eqref{eq:translation_cancel} uses Lemma~\ref{lem:kernel_invariance} on the kernel and the identity $g \cdot \mathbf{Y} - g \cdot \mathbf{X} = R(\mathbf{Y} - \mathbf{X})$ on the displacement: the translation $t$ cancels in the difference. Eq.~\eqref{eq:linearity} factors $R$ out of the linear integral.

The same argument applied to $\mathcal{V}^{-}_q$ yields $\mathcal{V}^{-}_{g_\# q}(g \cdot \mathbf{X}) = R \cdot \mathcal{V}^{-}_q(\mathbf{X})$. Linearity of the difference gives Eq.~\eqref{eq:equivariance}.
\end{proof}

\begin{corollary}[Equivariance of the update]
\label{cor:update_equivariance}
The update map $\Phi_{p,q}: \mathbf{X} \mapsto \mathbf{X} + \mathcal{V}_{p,q}(\mathbf{X})$ commutes with the $\mathrm{SE}(3)$ action:
\begin{equation}
\Phi_{g_\# p, g_\# q}(g \cdot \mathbf{X}) = g \cdot \Phi_{p,q}(\mathbf{X}).
\end{equation}
\end{corollary}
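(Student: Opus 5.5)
The plan is to reduce the claim directly to Theorem~\ref{thm:equivariance}, together with the affine form of the $\mathrm{SE}(3)$ action from Definition~\ref{def:se3}. Fix $g = (R,t) \in \mathrm{SE}(3)$, $p, q \in \mathcal{P}(\mathcal{M})$ and $\mathbf{X} \in \mathcal{M}$. Expand the left-hand side by the definition of $\Phi$:
\begin{equation*}
\Phi_{g_\# p, g_\# q}(g \cdot \mathbf{X}) = g\cdot\mathbf{X} + \mathcal{V}_{g_\# p,\, g_\# q}(g\cdot \mathbf{X}).
\end{equation*}

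The first key step is to apply Theorem~\ref{thm:equivariance} and replace the drift term by $R\,\mathcal{V}_{p,q}(\mathbf{X})$. This gives $R\mathbf{X} + t + R\,\mathcal{V}_{p,q}(\mathbf{X})$. The second step is to regroup this as $R\bigl(\mathbf{X} + \mathcal{V}_{p,q}(\mathbf{X})\bigr) + t$, which is exactly $g \cdot \Phi_{p,q}(\mathbf{X})$ by Definition~\ref{def:se3}.

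The only point needing care, and the main (mild) obstacle, is that displacements and positions transform differently under $g$. The drift is a difference of positions, so it picks up only the rotation $R$ and not the translation $t$. This asymmetry is already built into Theorem~\ref{thm:equivariance}: the translation cancels in Eq.~\eqref{eq:translation_cancel}.

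We must therefore check that adding a purely rotated displacement to the affinely transformed point reproduces the affine action applied to the sum. This holds because $g$ is affine with linear part $R$, so $g\cdot(\mathbf{X} + \mathbf{v}) = g\cdot\mathbf{X} + R\mathbf{v}$ for any $\mathbf{v} \in \mathcal{M}$, with the translation broadcast atom-wise. We verify this identity once and then apply it with $\mathbf{v} = \mathcal{V}_{p,q}(\mathbf{X})$. No integrals need to be revisited.
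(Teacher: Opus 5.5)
Your proposal is correct and is essentially the paper's own argument: expand $\Phi_{g_\# p, g_\# q}(g\cdot\mathbf{X})$, replace the drift by $R\,\mathcal{V}_{p,q}(\mathbf{X})$ via Theorem~\ref{thm:equivariance}, and regroup $R\mathbf{X} + t + R\,\mathcal{V}_{p,q}(\mathbf{X}) = R\bigl(\mathbf{X} + \mathcal{V}_{p,q}(\mathbf{X})\bigr) + t = g\cdot\Phi_{p,q}(\mathbf{X})$. Your identity $g\cdot(\mathbf{X}+\mathbf{v}) = g\cdot\mathbf{X} + R\mathbf{v}$ simply names the regrouping step that the paper performs inline.
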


\begin{proof}
$\Phi_{g_\# p, g_\# q}(g \cdot \mathbf{X}) = (g \cdot \mathbf{X}) + R \cdot \mathcal{V}_{p,q}(\mathbf{X}) = R\mathbf{X} + t + R \mathcal{V}_{p,q}(\mathbf{X}) = R(\mathbf{X} + \mathcal{V}_{p,q}(\mathbf{X})) + t = g \cdot \Phi_{p,q}(\mathbf{X})$.
\end{proof}

\begin{remark}[Geometric origin of equivariance]
The equivariance theorem rests on two structural facts: the kernel is an $\mathrm{SE}(3)$-invariant scalar (Lemma~\ref{lem:kernel_invariance}), and the integrand of $\mathcal{V}_p$ is a displacement $(\mathbf{Y} - \mathbf{X})$, in which translations cancel. Building $\mathcal{V}$ from absolute positions instead of displacements would break translation equivariance. The choice of displacement vectors in Definition~\ref{def:subfields} is therefore not incidental but essential for the $\mathrm{SE}(3)$ symmetry of reaction pathways.
\end{remark}

\subsection{Empirical Estimator with Finite Samples}
\label{app:empirical}

We now specialize the continuous drifting field of Definition~\ref{def:kernel_drift} to the empirical regime used at training time, taking care to track the structure of reaction-pathway data: each training example consists of one ground-truth pathway and a swarm of generated pathways, and the kernel is taken over full pathways rather than single configurations.

\paragraph{Pathway-level data structure.}
Each training example provides:
\begin{itemize}
    \item A single reference MEP pathway $\mathbf{Y}^\star \in \mathcal{M}^F$ drawn from a precomputed NEB pathway in Transition1x or Halo8, with $\mathbf{Y}^\star_1 = \mathbf{X}_R$ and $\mathbf{Y}^\star_F = \mathbf{X}_P$.
    \item A swarm of $K$ generated pathways $\{\mathbf{X}_1, \ldots, \mathbf{X}_K\} \subset \mathcal{M}^F$, computed by passing $K$ independently sampled prior interpolations through the current generator $f_\theta$. Following~\citet{dengGenerativeModelingDrifting2026}, the swarm itself serves as the empirical sample for both the points at which $\mathcal{V}$ is evaluated and the negatives used to estimate $\mathcal{V}^{-}_q$.
\end{itemize}
The use of a single positive sample $\mathbf{Y}^\star$ paired with the specific endpoints $(\mathbf{X}_R, \mathbf{X}_P)$ of each training example is a structural feature of reaction-pathway data that distinguishes our setting from the i.i.d.\ multi-positive regime ($N_{\mathrm{pos}} \geq 64$) studied in~\citet{dengGenerativeModelingDrifting2026}: each prior interpolation has exactly one corresponding ground-truth pathway, rather than a pool of unpaired positives drawn from a shared dataset.

\paragraph{Endpoint-pinning generator.}
The generator $f_\theta$ produces an $F$-images pathway from a prior interpolation while preserving the reactant and product geometries exactly. Given an $\mathrm{SE}(3)$-equivariant displacement network $\boldsymbol{\delta}_\theta : \mathcal{M}^F \to \mathcal{M}^F$ (LEFTNet in our experiments) and the endpoint-pinning envelope
\begin{equation}
w(\alpha) := \sin(\pi \alpha), \qquad w(0) = w(1) = 0,
\label{eq:envelope}
\end{equation}
the generator is defined image-wise by
\begin{equation}
\bigl[f_\theta(\mathbf{Z})\bigr]_k := \mathbf{Z}_k + w(\alpha_k) \cdot \boldsymbol{\delta}_\theta(\mathbf{Z})_k, \qquad \alpha_k = \frac{k - 1}{F - 1},
\label{eq:generator_pinned}
\end{equation}
where $\mathbf{Z}$ is the linear-interpolation prior and $\boldsymbol{\delta}_\theta(\mathbf{Z})_k$ is the predicted displacement at image $k$. Because $w(\alpha_1) = w(\alpha_F) = 0$, the generator preserves $\mathbf{X}_R$ and $\mathbf{X}_P$ exactly: every swarm pathway shares these boundary images with $\mathbf{Y}^\star$. Consequently, image $k = 1$ and $k = F$ contribute zero to all pairwise pathway distances, and the affinity computation below is effectively driven by the $F - 2$ intermediate images where the pathways differ.

\paragraph{Pathway distance and kernel.}
We treat each pathway as a flat point cloud of $M = F \cdot N$ atoms in $\mathbb{R}^3$. The pathway distance is the per-coordinate root-mean-square deviation:
\begin{equation}
d(\mathbf{X}, \mathbf{Y}) := \frac{1}{\sqrt{3M}} \sqrt{\sum_{k=1}^F \sum_{n=1}^N \lVert \mathbf{X}^{(n)}_k - \mathbf{Y}^{(n)}_k \rVert^2},
\label{eq:pathway_distance}
\end{equation}
and the pathway kernel is
\begin{equation}
k(\mathbf{X}, \mathbf{Y}) := \exp\!\left(-\frac{d(\mathbf{X}, \mathbf{Y})}{\tau}\right).
\label{eq:pathway_kernel}
\end{equation}
The $1/\sqrt{3M}$ normalization places the temperature $\tau$ on a per-coordinate RMSD scale, decoupling its physical meaning from the molecule size $N$ or the pathway resolution $F$. In practice, we usually set $\tau = 0.05$~\AA{}.

\paragraph{Per-sample Kabsch alignment.}
Because the generator's outputs and the reference $\mathbf{Y}^\star$ live in different reference frames (rotations and translations are not constrained), we align each swarm pathway to the reference before computing affinities. For each $i$, let $R_i \in \mathrm{SO}(3)$ and $\mathbf{c}_i \in \mathbb{R}^3$ be the optimal rotation and translation determined by the Kabsch algorithm~\cite{kabschSolutionBestRotation1976}. The aligned pathway is
\begin{equation}
\tilde{\mathbf{X}}_i := R_i (\mathbf{X}_i - \bar{\mathbf{X}}_i) + \bar{\mathbf{Y}}^\star,
\label{eq:kabsch}
\end{equation}
where $\bar{\mathbf{X}}_i$ and $\bar{\mathbf{Y}}^\star$ denote the centroids over all $M$ atoms. All pairwise distances and affinity computations below are evaluated on the aligned coordinates $\tilde{\mathbf{X}}_i$. After the drift is computed, it is rotated back into the original frame by $R_i^\top$ (Proposition~\ref{prop:empirical_equivariance}).

\subsubsection{Naive plug-in estimator}

\begin{definition}[Plug-in estimator]
\label{def:plugin}
The naive plug-in estimator of Eq.~\eqref{eq:V_full} at a generated pathway $\mathbf{X}_i$, with one positive and $K$ negatives, is
\begin{equation}
\hat{\mathcal{V}}_{p,q}^{\mathrm{naive}}(\mathbf{X}_i) := \underbrace{\frac{k(\mathbf{X}_i, \mathbf{Y}^\star)(\mathbf{Y}^\star - \mathbf{X}_i)}{k(\mathbf{X}_i, \mathbf{Y}^\star)}}_{=\, \mathbf{Y}^\star - \mathbf{X}_i} \;-\; \frac{\sum_{j=1}^K k(\mathbf{X}_i, \mathbf{X}_j)(\mathbf{X}_j - \mathbf{X}_i)}{\sum_{j=1}^K k(\mathbf{X}_i, \mathbf{X}_j)}.
\label{eq:naive}
\end{equation}
\end{definition}

\begin{remark}[Failure modes of the naive estimator]
\label{rem:naive_failure}
The naive estimator has two structural deficiencies in the single-positive regime:
\begin{enumerate}
    \item \textbf{Unconditional attraction.} With a single positive sample, the empirical attractive partition $\hat Z_p(\mathbf{X}_i) = k(\mathbf{X}_i, \mathbf{Y}^\star)$ cancels the numerator entirely, reducing the attractive term to $\mathbf{Y}^\star - \mathbf{X}_i$. The kernel, and thus the temperature $\tau$, has no effect on the attractive term: the estimator cannot modulate the strength of attraction by distance to $\mathbf{Y}^\star$.
    \item \textbf{Decoupled normalization.} The attractive and repulsive terms are normalized by independent partition functions, $\hat Z_p$ and $\hat Z_q$ respectively. When the swarm is far from $\mathbf{Y}^\star$, $\hat Z_q$ may be much larger than $\hat Z_p$, leading to repulsion dominating attraction and the swarm drifting away from the target rather than toward it.
\end{enumerate}
The multi-positive regime of~\citet{dengGenerativeModelingDrifting2026} mitigates the first deficiency through positive-sample diversity, but the second persists. Both are addressed simultaneously by the joint normalization introduced below.
\end{remark}

\subsubsection{Joint normalization via row-and-column softmax}

To address both deficiencies, we replace independent row-wise softmax normalization with a jointly normalized affinity matrix that treats $\mathbf{Y}^\star$ and the aligned swarm members $\{\tilde{\mathbf{X}}_j\}$ on equal footing. The construction below follows~\citet[Algorithm 2]{dengGenerativeModelingDrifting2026} closely: we make the empirical correspondence with the continuous drift formula explicit and adapt it to the single-positive pathway-level setting.

\begin{definition}[Augmented target set and logit matrix]
\label{def:logits}
Define the augmented target set $\mathcal{T} = (\mathcal{T}_0, \mathcal{T}_1, \ldots, \mathcal{T}_K)$ with $\mathcal{T}_0 := \mathbf{Y}^\star$ and $\mathcal{T}_j := \tilde{\mathbf{X}}_j$ for $j \in \{1, \ldots, K\}$. The pairwise logit matrix $L \in \mathbb{R}^{K \times (K+1)}$ is
\begin{equation}
L_{ij} := -\frac{d(\tilde{\mathbf{X}}_i, \mathcal{T}_j)}{\tau}, \quad i \in \{1, \ldots, K\}, \; j \in \{0, 1, \ldots, K\},
\label{eq:logits_def}
\end{equation}
with self-distance masking applied to the diagonal-shifted entry corresponding to a swarm pathway being its own negative:
\begin{equation}
L_{i, i+1} := -\infty.
\label{eq:self_mask}
\end{equation}
Eq.~\eqref{eq:self_mask} ensures that each swarm pathway does not appear as its own negative sample, which would otherwise dominate the repulsive sum and induce trivial self-cancellation.
\end{definition}

\begin{definition}[Row and column softmaxes]
\label{def:softmax}
For $L \in \mathbb{R}^{K \times (K+1)}$,
\begin{equation}
\sigma_{\mathrm{row}}(L)_{ij} := \frac{\exp(L_{ij})}{\sum_{j'=0}^K \exp(L_{ij'})}, \qquad \sigma_{\mathrm{col}}(L)_{ij} := \frac{\exp(L_{ij})}{\sum_{i'=1}^K \exp(L_{i'j})}.
\label{eq:softmax}
\end{equation}
\end{definition}

\begin{definition}[Geometric-mean joint affinity, \citet{dengGenerativeModelingDrifting2026} Algorithm 2]
\label{def:sinkhorn}
Following~\citet[Algorithm 2]{dengGenerativeModelingDrifting2026}, the joint affinity matrix $\mathbf{A} \in [0,1]^{K \times (K+1)}$ is the entrywise geometric mean of the row and column softmaxes:
\begin{equation}
\mathbf{A}_{ij} := \sqrt{\sigma_{\mathrm{row}}(L)_{ij} \cdot \sigma_{\mathrm{col}}(L)_{ij}}.
\label{eq:sinkhorn_A}
\end{equation}
\end{definition}

\begin{remark}[Connection to optimal transport]
\label{rem:sinkhorn_comparison}
The geometric-mean form of Eq.~\eqref{eq:sinkhorn_A} is a single-step symmetrization that preserves both row-and-column rescaling structure. A doubly stochastic matrix could be obtained from $\exp(L)$ by alternating row and column normalization to convergence. The geometric-mean form is a one-iteration approximation adopted for computational simplicity in both~\citet{dengGenerativeModelingDrifting2026} and the present work.
\end{remark}

\begin{definition}[Attractive and repulsive affinities]
\label{def:affinity_split}
We partition $\mathbf{A}$ as
\begin{equation}
A^{+}_i := \mathbf{A}_{i,0}, \qquad A^{-}_{ij} := \mathbf{A}_{i,j} \quad \text{for } j \in \{1, \ldots, K\}.
\end{equation}
\end{definition}

\begin{proposition}[Partition functions are absorbed by softmax normalization]
\label{prop:Z_absorbed}
The row-softmax normalization in Eq.~\eqref{eq:softmax} is the empirical Monte Carlo estimator of the joint partition function over the augmented target set:
\begin{equation}
\sigma_{\mathrm{row}}(L)_{ij} = \frac{k(\mathbf{X}_i, \mathcal{T}_j)}{\hat Z_{p \cup q}(\mathbf{X}_i)}, \qquad \hat Z_{p \cup q}(\mathbf{X}_i) := \sum_{j'=0}^K k(\mathbf{X}_i, \mathcal{T}_{j'}).
\label{eq:Z_absorption}
\end{equation}
Both the empirical attractive partition $\hat Z_p(\mathbf{X}_i) = k(\mathbf{X}_i, \mathbf{Y}^\star)$ and the empirical repulsive partition $\hat Z_q(\mathbf{X}_i) = \sum_j k(\mathbf{X}_i, \mathbf{X}_j)$ are merged into the single joint normalizer $\hat Z_{p \cup q}$, the column-softmax extends this normalization symmetrically across the swarm.
\end{proposition}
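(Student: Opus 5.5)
The claim is essentially definitional: it says the row softmax of the logit matrix equals a kernel weight divided by an empirical joint partition function. I would unfold Definition~\ref{def:logits} and Definition~\ref{def:softmax} directly and use the pathway kernel of Eq.~\eqref{eq:pathway_kernel}.

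The first step is to note that for every unmasked entry, $\exp(L_{ij}) = \exp\!\bigl(-d(\tilde{\mathbf{X}}_i,\mathcal{T}_j)/\tau\bigr) = k(\tilde{\mathbf{X}}_i,\mathcal{T}_j)$. For the masked entry $L_{i,i+1} = -\infty$ we have $\exp(L_{i,i+1}) = 0$. Substituting into the row softmax then gives
\[
\sigma_{\mathrm{row}}(L)_{ij} = \frac{k(\tilde{\mathbf{X}}_i,\mathcal{T}_j)}{\sum_{j'=0}^K k(\tilde{\mathbf{X}}_i,\mathcal{T}_{j'})},
\]
which is Eq.~\eqref{eq:Z_absorption}. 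This identification requires two conventions, which I would state explicitly.

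\begin{itemize}
\item The kernel is evaluated at the aligned configuration $\tilde{\mathbf{X}}_i$. The statement writes $\mathbf{X}_i$, so I would reconcile the two using Lemma~\ref{lem:kernel_invariance} and Eq.~\eqref{eq:kabsch}. Kabsch alignment is a rigid motion, so distances to the reference are computed consistently in the aligned frame. Reading the statement's $k(\mathbf{X}_i,\cdot)$ as shorthand for the aligned quantity is then harmless.
\item The self term $j = i+1$ is excluded from $\hat Z_{p\cup q}$ by the mask. Equivalently, the sum in $\hat Z_{p\cup q}$ runs over $j' \neq i+1$.
\end{itemize}

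Next I would split the normalizer into its $j'=0$ term and its $j' \ge 1$ terms:
\[
\hat Z_{p\cup q}(\mathbf{X}_i) = k(\tilde{\mathbf{X}}_i,\mathbf{Y}^\star) + \sum_{j\ge1,\, j\ne i} k(\tilde{\mathbf{X}}_i,\tilde{\mathbf{X}}_j) = \hat Z_p(\mathbf{X}_i) + \hat Z_q(\mathbf{X}_i).
\]
This exhibits the merging of the empirical attractive and repulsive partitions from Definition~\ref{def:plugin} into a single normalizer. As a consequence, $A^+_i$ no longer collapses to $1$ as it did in Remark~\ref{rem:naive_failure}: it now depends on $\tau$ and on the distances to the swarm.

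For the column softmax, the same substitution gives $\sigma_{\mathrm{col}}(L)_{ij} = k(\tilde{\mathbf{X}}_i,\mathcal{T}_j)/\sum_{i'} k(\tilde{\mathbf{X}}_{i'},\mathcal{T}_j)$. This is the analogous Monte Carlo normalizer taken over the swarm for the fixed target $\mathcal{T}_j$, which justifies the phrase ``extends this normalization symmetrically.''

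The main obstacle is interpretive rather than technical. The statement's claim that this is an \emph{estimator} of the joint partition function needs some care. I would make precise that $\hat Z_{p\cup q}/(K+1)$, or with $K$ in place of $K+1$ after masking, is the Monte Carlo estimate of $Z_\mu(\mathbf{X}_i)$ for the mixture $\mu$ that places weight $1/(K+1)$ on the empirical $p$ and the remaining weight on the empirical $q$. The mixture weight cancels in the softmax ratio, so the exact form of $\mu$ does not affect the identity itself.
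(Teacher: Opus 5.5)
Your proposal is correct and follows the paper's own argument: substitute $\exp(L_{ij}) = k(\tilde{\mathbf X}_i,\mathcal T_j)$ into the row softmax of Eq.~\eqref{eq:softmax}, then read the denominator as the $j'=0$ attractive term plus the swarm terms. Your explicit handling of the self-mask is more careful than the paper's one-line proof, which overlooks that the masked entry has $\exp(L_{i,\mathrm{self}})=0$ whereas $k(\tilde{\mathbf X}_i,\tilde{\mathbf X}_i)=1$, so the literal sum $\sum_{j'=0}^{K}$ in the statement must omit the self term.

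There are two small slips. First, Lemma~\ref{lem:kernel_invariance} does not let you identify $k(\mathbf X_i,\mathcal T_j)$ with $k(\tilde{\mathbf X}_i,\mathcal T_j)$. The lemma needs the rigid motion applied to both arguments, while Kabsch moves only $\mathbf X_i$ and in general lowers its distance to $\mathbf Y^\star$. The statement therefore has to be read in aligned coordinates as a convention, which is what you ultimately do. Second, with $\mathcal T_j=\tilde{\mathbf X}_j$ the self term is $j'=i$, not $j'=i+1$; the paper's $L_{i,i+1}$ presumably reflects 0-based swarm indexing. Your later split over $j\neq i$ uses the correct index.
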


\begin{proof}
By Eqs.~\eqref{eq:pathway_kernel} and \eqref{eq:logits_def}, $\exp(L_{ij}) = \exp(-d(\mathbf{X}_i, \mathcal{T}_j)/\tau) = k(\mathbf{X}_i, \mathcal{T}_j)$. Substituting into Eq.~\eqref{eq:softmax} gives the first equality of Eq.~\eqref{eq:Z_absorption}. The denominator combines the contributions $k(\mathbf{X}_i, \mathbf{Y}^\star)$ and $\sum_j k(\mathbf{X}_i, \mathbf{X}_j)$, identifiable as $\hat Z_p$ and $\hat Z_q$ respectively.
\end{proof}

\subsubsection{Final empirical drifting field}

We can now assemble the empirical drifting field used in Drift-React.

\begin{theorem}[Empirical drifting field, aligned frame]
\label{thm:empirical_drift}
Let $\tilde{\mathbf{X}}_i, \mathbf{Y}^\star, \{\tilde{\mathbf{X}}_j\}_{j=1}^K \in \mathcal{M}^F$ be the Kabsch-aligned swarm pathway, the reference pathway, and the aligned swarm of negatives. With affinities $A^+_i, A^-_{ij}$ defined by Eqs.~\eqref{eq:logits_def}~through~\eqref{eq:sinkhorn_A} and Definition~\ref{def:affinity_split}, the empirical drifting field at $\tilde{\mathbf{X}}_i$, expressed in the aligned frame, is
\begin{equation}
\boxed{\quad \tilde{\mathcal{V}}(\tilde{\mathbf{X}}_i) = A^+_i \sum_{j=1}^{K} A^-_{ij}\, (\mathbf{Y}^\star - \tilde{\mathbf{X}}_j). \quad}
\label{eq:final_drift_aligned}
\end{equation}
The drift in the original (unaligned) frame is recovered by the inverse rotation
\begin{equation}
\mathcal{V}(\mathbf{X}_i) = R_i^\top \cdot \tilde{\mathcal{V}}(\tilde{\mathbf{X}}_i),
\label{eq:final_drift_unaligned}
\end{equation}
where $R_i$ is the Kabsch rotation of Eq.~\eqref{eq:kabsch}.
\end{theorem}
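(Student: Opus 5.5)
The plan is to derive Eq.~\eqref{eq:final_drift_aligned} as a plug-in estimator of the product-kernel form of Proposition~\ref{prop:unified}, with the separate partition functions replaced by the joint normalizer of Proposition~\ref{prop:Z_absorbed}. In the aligned frame, the empirical measures are taken as $\hat p = \delta_{\mathbf{Y}^\star}$ and $\hat q = \frac{1}{K}\sum_j \delta_{\tilde{\mathbf{X}}_j}$. The double integral in Eq.~\eqref{eq:unified} then collapses to a single sum,
\begin{equation}
\frac{1}{\hat Z_p \hat Z_q}\sum_{j=1}^K k(\tilde{\mathbf{X}}_i,\mathbf{Y}^\star)\,k(\tilde{\mathbf{X}}_i,\tilde{\mathbf{X}}_j)\,(\mathbf{Y}^\star-\tilde{\mathbf{X}}_j).
\end{equation}
In this sum the attractive weight $k(\tilde{\mathbf{X}}_i,\mathbf{Y}^\star)$ factors out of the sum over $j$. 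This is exactly the factorized structure $A^+_i\sum_j A^-_{ij}$ of the boxed formula.

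The second step replaces the normalizers by softmaxes. Following Remark~\ref{rem:naive_failure}, dividing each weight by its own partition function degenerates the estimator, since the single-positive weight simply cancels. I will therefore normalize both weights by the joint row normalizer $\hat Z_{p\cup q}(\tilde{\mathbf{X}}_i)$. By Proposition~\ref{prop:Z_absorbed}, this turns $k(\tilde{\mathbf{X}}_i,\mathcal{T}_j)/\hat Z_{p\cup q}$ into $\sigma_{\mathrm{row}}(L)_{ij}$. I then symmetrize each row weight with the corresponding column softmax through the geometric mean of Definition~\ref{def:sinkhorn}, which gives $A^+_i=\mathbf{A}_{i,0}$ and $A^-_{ij}=\mathbf{A}_{ij}$.

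The self-mask of Eq.~\eqref{eq:self_mask} sets $A^-_{ii}=0$, so the $j=i$ term drops out. This agrees with the plug-in repulsive term, to which $j=i$ contributes only $\tilde{\mathbf{X}}_i-\tilde{\mathbf{X}}_i=\mathbf{0}$ anyway. Two sanity checks follow directly. If $\mathbf{Y}^\star$ coincides with all swarm members, every displacement vanishes, which is the empirical counterpart of Proposition~\ref{prop:antisymmetry_equilibrium}. Swapping the roles of the positive and the negatives flips the sign of $(\mathbf{Y}^\star-\tilde{\mathbf{X}}_j)$, which gives an empirical analogue of Proposition~\ref{prop:V_antisym}.

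The last step is Eq.~\eqref{eq:final_drift_unaligned}. Eq.~\eqref{eq:kabsch} is a rigid motion $g_i=(R_i,\,\bar{\mathbf{Y}}^\star-R_i\bar{\mathbf{X}}_i)$, so $\tilde{\mathbf{X}}_i=g_i\cdot\mathbf{X}_i$. The update is $\tilde\Phi(\tilde{\mathbf{X}}_i)=\tilde{\mathbf{X}}_i+\tilde{\mathcal{V}}(\tilde{\mathbf{X}}_i)$, and I pull it back by $g_i^{-1}$. By the computation of Corollary~\ref{cor:update_equivariance} applied to $g_i^{-1}$, the translation cancels and only $R_i^\top$ acts on the displacement, so $g_i^{-1}\tilde\Phi(\tilde{\mathbf{X}}_i)=\mathbf{X}_i+R_i^\top\tilde{\mathcal{V}}(\tilde{\mathbf{X}}_i)$.

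The main obstacle is that the boxed formula is not literally an identity obtained from Eq.~\eqref{eq:unified}. The geometric-mean affinities are not products of kernels over partition functions. So the step from the plug-in expression to the softmax form has to be presented as a definitional replacement, justified by Proposition~\ref{prop:Z_absorbed} and Remark~\ref{rem:sinkhorn_comparison}, and not as a derivation. I will state precisely which factors are substituted and argue that the substitution preserves the displacement structure and the vanishing-at-equilibrium property.
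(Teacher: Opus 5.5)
Your route is the paper's construction. You plug $\delta_{\mathbf{Y}^\star}$ and the aligned swarm into the product-kernel form of Proposition~\ref{prop:unified}, factor out $k(\tilde{\mathbf{X}}_i,\mathbf{Y}^\star)$, and replace the partition functions by the joint row normalizer of Proposition~\ref{prop:Z_absorbed} followed by the geometric-mean column symmetrization. You are right that this last step is a substitution rather than an identity; the paper's proof likewise only says the ratios ``become'' $A^+_i$ and $A^-_{ij}$. For the unaligned frame, you pull the updated point back by the fixed rigid motion $g_i^{-1}$, where the paper uses the infinitesimal statement $\delta\tilde{\mathbf{X}}_i=R_i\,\delta\mathbf{X}_i$. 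Your version is exact and a little more careful. The paper's statement holds only with $R_i$ and the centroids frozen: the actual map $\mathbf{X}_i\mapsto\tilde{\mathbf{X}}_i$ is invariant under rigid perturbations of $\mathbf{X}_i$, so its Jacobian is not $R_i$. Fixing $g_i$ makes exactly this explicit.

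One sentence in your self-mask paragraph is wrong. In the single sum you derived, the $j=i$ term is
\[
k(\tilde{\mathbf{X}}_i,\mathbf{Y}^\star)\,k(\tilde{\mathbf{X}}_i,\tilde{\mathbf{X}}_i)\,(\mathbf{Y}^\star-\tilde{\mathbf{X}}_i)=k(\tilde{\mathbf{X}}_i,\mathbf{Y}^\star)\,(\mathbf{Y}^\star-\tilde{\mathbf{X}}_i).
\]
This vanishes only if $\tilde{\mathbf{X}}_i=\mathbf{Y}^\star$, and it carries the largest kernel weight $k(\tilde{\mathbf{X}}_i,\tilde{\mathbf{X}}_i)=1$; that is why the paper says it would dominate the repulsive sum. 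The cancellation $\tilde{\mathbf{X}}_i-\tilde{\mathbf{X}}_i=\mathbf{0}$ you cite applies only to the numerator of the mean-shift form $\mathcal{V}^{-}_q$, and even there the self term adds $1$ to $\hat Z_q$. A direct computation shows that both the product form and the mean-shift difference equal $\mathbf{Y}^\star-\sum_{j=1}^K k_{ij}\tilde{\mathbf{X}}_j/\sum_{j=1}^K k_{ij}$, with $k_{ij}:=k(\tilde{\mathbf{X}}_i,\tilde{\mathbf{X}}_j)$, and there $\tilde{\mathbf{X}}_i$ enters with weight $1$. So the mask genuinely changes the estimator: it is a leave-one-out estimate of $q$ at $\tilde{\mathbf{X}}_i$. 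List it among the definitional substitutions of your last paragraph, next to the joint normalization and the geometric mean, rather than presenting it as agreeing with the plug-in. The paper's proof glosses over the same point, since Eq.~\eqref{eq:empirical_unified} sums over all $j$, including $j=i$. Your reading of Eq.~\eqref{eq:self_mask} as $A^-_{ii}=0$ is the intended one; with $\mathcal{T}_j=\tilde{\mathbf{X}}_j$, the subscript $i+1$ written there is an off-by-one.
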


\begin{proof}[Construction]
The product-kernel form of Proposition~\ref{prop:unified} expresses $\mathcal{V}_{p,q}$ as a single bilinear integral against the displacement $(\mathbf{Y}^+ - \mathbf{Y}^-)$. With one positive ($\mathbf{Y}^+ = \mathbf{Y}^\star$) and $K$ aligned negatives ($\mathbf{Y}^- \in \{\tilde{\mathbf{X}}_j\}$), the empirical Monte Carlo estimate of Eq.~\eqref{eq:unified} is
\begin{equation}
\hat{\mathcal{V}}(\tilde{\mathbf{X}}_i) = \frac{1}{\hat Z_p(\tilde{\mathbf{X}}_i)\, \hat Z_q(\tilde{\mathbf{X}}_i)} \sum_{j=1}^K k(\tilde{\mathbf{X}}_i, \mathbf{Y}^\star)\, k(\tilde{\mathbf{X}}_i, \tilde{\mathbf{X}}_j)\, (\mathbf{Y}^\star - \tilde{\mathbf{X}}_j).
\label{eq:empirical_unified}
\end{equation}
By Proposition~\ref{prop:Z_absorbed}, the joint normalization absorbs both partition functions: $k(\tilde{\mathbf{X}}_i, \mathbf{Y}^\star) / \hat Z_{p \cup q}(\tilde{\mathbf{X}}_i)$ becomes $A^+_i$ (the row-softmax weight on $\mathcal{T}_0$, with column symmetrization), and $k(\tilde{\mathbf{X}}_i, \tilde{\mathbf{X}}_j) / \hat Z_{p \cup q}(\tilde{\mathbf{X}}_i)$ becomes $A^-_{ij}$. Substituting yields Eq.~\eqref{eq:final_drift_aligned}.

To recover the drift in the original frame, observe that under the Kabsch alignment $\tilde{\mathbf{X}}_i = R_i (\mathbf{X}_i - \bar{\mathbf{X}}_i) + \bar{\mathbf{Y}}^\star$, infinitesimal displacements transform as $\delta \tilde{\mathbf{X}}_i = R_i\, \delta \mathbf{X}_i$. Inverting, $\delta \mathbf{X}_i = R_i^\top\, \delta \tilde{\mathbf{X}}_i$, which is Eq.~\eqref{eq:final_drift_unaligned}.
\end{proof}

\begin{remark}[Implementation note]
\label{rem:implementation}
The Drifting Field Eq.~\eqref{eq:final_drift_aligned} is implemented in matrix form as
\begin{equation}
\tilde{\mathcal{V}}(\tilde{\mathbf{X}}_i) = \Bigl(A^+_i \sum_j A^-_{ij}\Bigr) \mathbf{Y}^\star - A^+_i \sum_j A^-_{ij}\, \tilde{\mathbf{X}}_j,
\end{equation}
which is algebraically identical to Eq.~\eqref{eq:final_drift_aligned} (since $A^+_i$ does not depend on $j$) but expresses the computation as two batched matrix-vector products amenable to GPU acceleration.
\end{remark}

\begin{proposition}[$\mathrm{SE}(3)$-equivariance of the empirical drift]
\label{prop:empirical_equivariance}
Let $g = (R, t) \in \mathrm{SE}(3)$ act jointly on all configurations. The empirical drift $\mathcal{V}(\mathbf{X}_i)$ defined by Theorem~\ref{thm:empirical_drift} (in the unaligned frame) is $\mathrm{SE}(3)$-equivariant:
\begin{equation}
\mathcal{V}(g \cdot \mathbf{X}_i)\bigr|_{g \cdot \mathbf{Y}^\star,\, \{g \cdot \mathbf{X}_j\}} = R \cdot \mathcal{V}(\mathbf{X}_i)\bigr|_{\mathbf{Y}^\star,\, \{\mathbf{X}_j\}}.
\end{equation}
\end{proposition}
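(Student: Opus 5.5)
The plan is to track how each ingredient of Theorem~\ref{thm:empirical_drift} transforms when every configuration is replaced by its image under $g=(R,t)$. The key observation is that the Kabsch alignment absorbs the rigid motion. Every aligned pathway is therefore unchanged, and only the rotations $R_i$ pick up a factor of $R$.

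\textbf{Step 1: transformation of the Kabsch rotation.} I first determine how $R_i$ changes. The Kabsch problem for the pair $(g\cdot\mathbf{X}_i, g\cdot\mathbf{Y}^\star)$ is to minimize over $Q\in\mathrm{SO}(3)$ the quantity
\[
\sum \lVert Q(R\mathbf{X}_i^{(n)}-R\bar{\mathbf{X}}_i)-(R\mathbf{Y}^{\star(n)}-R\bar{\mathbf{Y}}^\star)\rVert^2 .
\]
Centroids transform as $\overline{g\cdot\mathbf{X}}=R\bar{\mathbf{X}}+t$, so the translation $t$ cancels after centering. Writing $Q=RQ'R^\top$ turns the objective into the original Kabsch objective in $Q'$, because $R$ is orthogonal. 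Hence the optimal rotation is $R_i' = R R_i R^\top$.

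\textbf{Step 2: aligned coordinates and affinities.} Substituting into Eq.~\eqref{eq:kabsch} gives
\[
\tilde{\mathbf{X}}_i' = RR_iR^\top R(\mathbf{X}_i-\bar{\mathbf{X}}_i)+R\bar{\mathbf{Y}}^\star+t = g\cdot\tilde{\mathbf{X}}_i .
\]
The augmented target set therefore becomes $g\cdot\mathcal{T}$. By the pathway analogue of Lemma~\ref{lem:kernel_invariance}, $d$ in Eq.~\eqref{eq:pathway_distance} is invariant under a common rigid motion. So every logit $L_{ij}$ is unchanged, including the masked entries. Consequently $\sigma_{\mathrm{row}}$, $\sigma_{\mathrm{col}}$, $\mathbf{A}$, $A^+_i$ and $A^-_{ij}$ are all invariant.

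\textbf{Step 3: the drift.} The displacements transform as $g\cdot\mathbf{Y}^\star-g\cdot\tilde{\mathbf{X}}_j=R(\mathbf{Y}^\star-\tilde{\mathbf{X}}_j)$, with $t$ cancelling as in Theorem~\ref{thm:equivariance}. Since the affinities are invariant, linearity gives $\tilde{\mathcal{V}}'=R\,\tilde{\mathcal{V}}$. Mapping back with Eq.~\eqref{eq:final_drift_unaligned} yields
\[
\mathcal{V}'(g\cdot\mathbf{X}_i) = (RR_iR^\top)^\top R\,\tilde{\mathcal{V}} = R R_i^\top R^\top R\,\tilde{\mathcal{V}} = R\,R_i^\top\tilde{\mathcal{V}} = R\,\mathcal{V}(\mathbf{X}_i),
\]
which is the claim.

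\textbf{Main obstacle.} The delicate step is Step~1. The Kabsch minimizer need not be unique, for example when the point clouds are degenerate (collinear or planar configurations) or when the cross-covariance matrix has repeated or zero singular values. I would first prove the statement under a uniqueness assumption, since the transformed problem is exactly conjugate to the original. In the degenerate case I would then state equivariance for a consistent (conjugation-covariant) choice of minimizer, noting that any minimizer $Q'$ of the original problem yields the minimizer $RQ'R^\top$ of the transformed one. A secondary point to check is that the reference itself is moved: the statement fixes $\mathbf{Y}^\star\mapsto g\cdot\mathbf{Y}^\star$, so the alignment target moves along with the swarm, and this is what makes the result hold.
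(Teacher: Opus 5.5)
Your proof is correct. It has the same skeleton as the paper's: determine how the Kabsch rotation transforms, observe that the logits and affinities are invariant, then push the result through $R_i^\top$. The intermediate bookkeeping differs, though. You derive $R_i \mapsto R R_i R^\top$, $\tilde{\mathbf{X}}_i \mapsto g\cdot\tilde{\mathbf{X}}_i$ and $\tilde{\mathcal{V}} \mapsto R\,\tilde{\mathcal{V}}$. The paper instead asserts $R_i \mapsto R_i R^\top$, with the aligned pathways and the aligned drift unchanged.

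For the action in the statement, which also sends $\mathbf{Y}^\star$ to $g\cdot\mathbf{Y}^\star$, your law is the right one. If $R_i = I$, the jointly moved pathways are still aligned, so the new rotation is $I$, not $R^\top$. The paper's law is exactly what one gets when only the swarm is moved and $\mathbf{Y}^\star$ is held fixed.

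Both computations land on $R R_i^\top \tilde{\mathcal{V}}$ because the original-frame drift is invariant under a rigid motion $h=(S,u)$ of $\mathbf{Y}^\star$ alone. In that case $R_i \mapsto S R_i$ and $\tilde{\mathcal{V}} \mapsto S\,\tilde{\mathcal{V}}$, so $R_i^\top \tilde{\mathcal{V}}$ is unchanged. This also means your closing remark, that moving the reference ``is what makes the result hold'', overstates things: the conclusion holds whether or not $\mathbf{Y}^\star$ is moved.

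Your treatment of non-unique Kabsch minimizers, via a conjugation-covariant choice of minimizer, is a genuine refinement that the paper's proof does not address.
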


\begin{proof}
Equivariance has two parts: (i) the aligned-frame drift $\tilde{\mathcal{V}}(\tilde{\mathbf{X}}_i)$ is $\mathrm{SE}(3)$-invariant, and (ii) the inverse rotation $R_i^\top$ correctly transforms $g$-equivariantly.

\textbf{(i) Invariance of the aligned drift.} Kabsch alignment removes the global rotation and translation between $\mathbf{X}_i$ and $\mathbf{Y}^\star$. If $g$ is applied to all inputs, the optimal alignment rotation transforms as $R_i \mapsto R_i R^\top$ (the new alignment cancels the additional $R$), so the aligned coordinates $\tilde{\mathbf{X}}_i$ are unchanged: $\widetilde{(g \cdot \mathbf{X}_i)} = \tilde{\mathbf{X}}_i$. Similarly $\widetilde{(g \cdot \mathbf{X}_j)} = \tilde{\mathbf{X}}_j$. The aligned-frame logits $L_{ij} = -d(\tilde{\mathbf{X}}_i, \mathcal{T}_j)/\tau$, the affinities $A^+_i, A^-_{ij}$, and the drift $\tilde{\mathcal{V}}(\tilde{\mathbf{X}}_i)$ defined by Eq.~\eqref{eq:final_drift_aligned} are therefore all $\mathrm{SE}(3)$-invariant scalars/vectors (with respect to joint $g$-action on the inputs).

\textbf{(ii) Equivariance of the inverse rotation.} The Kabsch rotation transforms as $R_i \mapsto R_i R^\top$ when $g = (R, t)$ acts jointly. Applying Eq.~\eqref{eq:final_drift_unaligned} to the transformed inputs:
\begin{equation}
\mathcal{V}(g \cdot \mathbf{X}_i) = (R_i R^\top)^\top \cdot \tilde{\mathcal{V}}(\tilde{\mathbf{X}}_i) = R\, R_i^\top \cdot \tilde{\mathcal{V}}(\tilde{\mathbf{X}}_i) = R \cdot \mathcal{V}(\mathbf{X}_i),
\end{equation}
which is the $\mathrm{SE}(3)$-equivariance condition. The translation $t$ does not appear because the drift is a displacement vector, not a position.
\end{proof}

\begin{remark}[The Kabsch step is essential for empirical equivariance]
\label{rem:kabsch_essential}
Without per-sample Kabsch alignment, the affinities $A^+_i, A^-_{ij}$ would depend on the global frame in which the swarm is expressed: a swarm rotated relative to $\mathbf{Y}^\star$ would yield different affinities than the same swarm in alignment, breaking equivariance of the empirical drift. The alignment-then-inverse-rotation protocol of Eqs.~\eqref{eq:kabsch} and~\eqref{eq:final_drift_unaligned} guarantees that the affinities are computed in a canonical frame (where they are $\mathrm{SE}(3)$-invariant) while the drift vector itself transforms equivariantly.
\end{remark}

\subsection{Fixed-Point Training Objective}
\label{app:loss}

The equilibrium condition $\mathcal{V}_{p, q_\theta} = \mathbf{0}$ implied by Proposition~\ref{prop:antisymmetry_equilibrium} can be rewritten as a fixed-point relation for the optimal generator. Let $\hat\theta$ denote the parameters at which $q_{\hat\theta} = p$. Then for $\mathbf{Z} \sim p_{\mathrm{prior}}$,
\begin{equation}
f_{\hat\theta}(\mathbf{Z}) = f_{\hat\theta}(\mathbf{Z}) + \mathcal{V}_{p, q_{\hat\theta}}\bigl(f_{\hat\theta}(\mathbf{Z})\bigr).
\label{eq:fixed_point}
\end{equation}

Following \citet{dengGenerativeModelingDrifting2026}, we convert this fixed-point equation to a self-distillation loss using the stop-gradient operator $\mathrm{sg}(\cdot)$, which acts as the identity on its argument's value but blocks gradient flow.

\begin{definition}[Self-distillation loss, \citet{dengGenerativeModelingDrifting2026} Eq.~(6)]
\label{def:loss}
The Drift-React training loss is
\begin{equation}
\mathcal{L}(\theta) := \mathbb{E}_{\mathbf{Z} \sim p_{\mathrm{prior}}} \left[\bigl\lVert f_\theta(\mathbf{Z}) - \mathrm{sg}\bigl(f_\theta(\mathbf{Z}) + \mathcal{V}_{p, q_\theta}\bigl(f_\theta(\mathbf{Z})\bigr)\bigr)\bigr\rVert^2\right].
\label{eq:loss_def}
\end{equation}
The drift $\mathcal{V}_{p, q_\theta}$ is computed using the empirical estimator of Theorem~\ref{thm:empirical_drift} in the original (unaligned) frame, so that the regression target $\mathrm{sg}(f_\theta(\mathbf{Z}) + \mathcal{V})$ lives in the same frame as the generator output $f_\theta(\mathbf{Z})$.
\end{definition}



\end{document}

%% file: figures/tikz/drifting.tex
\begin{tikzpicture}[
    >=Stealth,
    line width=1.0pt,
    dot/.style        ={circle, fill, draw=white, line width=0.7pt, inner sep=2.4pt},
    anchorDot/.style ={circle, fill=neutralDark, draw=white,
                       line width=0.9pt, inner sep=2.6pt},
    panelSub/.style  ={font=\large, text=neutralDark},
    endpointLab/.style={font=\large, text=neutralDark, inner sep=1pt},
    tsLab/.style     ={font=\large\bfseries, text=tsAccent, inner sep=1pt},
    annotLab/.style  ={font=\large, text=neutralDark},
    annotLabBlue/.style={font=\large, text=pathBlue},
    fieldArrow/.style={-{Stealth[length=3.0pt, width=3.0pt]},
                       line width=0.8pt, draw=neutralMid},
    tsRing/.style    ={circle, draw=tsAccent, line width=1.2pt, fill=white,
                       inner sep=2.2pt}
]

\definecolor{pathBlue}{HTML}{0000FF}      
\definecolor{pathBlueSoft}{HTML}{C7C7F5}  
\definecolor{neutralLine}{HTML}{9A9A9A}   
\definecolor{neutralMid}{HTML}{6A6A6A}    
\definecolor{neutralDark}{HTML}{2B2B2B}   
\definecolor{tsAccent}{HTML}{FA8775}      

\def\plotW{5.2}
\def\plotH{3.8}
\def\panelGap{0.6}
\pgfmathsetmacro{\panelStride}{\plotW+\panelGap}

\def\Rx{0.15}\def\Ry{0.15}
\def\Px{5.05}\def\Py{3.65}

\def\particles{0.55, 1.35, 2.15, 2.95, 3.75, 4.55}

\def\ylin{0.15 + 0.71428*(\x - 0.15)}
\def\bump{sin( 180*(\x - 0.15)/4.90 )
        + 0.35*sin( 360*(\x - 0.15)/4.90 )}
\def\amp{1.55}
\def\mepCurve{\ylin + \amp*(\bump)}

\def\molW{1.85cm}
\def\tsW{1.70cm}

\def\molTopY{3.7}
\def\molBotY{0.05}
\def\titleY{-1.9}

\newcommand{\panelTitle}[1]{%
    \node[panelSub, anchor=north] at (\plotW/2, \titleY) {#1};
}

\begin{scope}[shift={(0,0)}]
    \coordinate (R1) at (\Rx,\Ry);
    \coordinate (P1) at (\Px,\Py);

    \draw[pathBlueSoft, dashed, line width=0.9pt, smooth,
          domain=\Rx:\Px, samples=80]
        plot (\x, {\mepCurve});

    \draw[neutralLine, line width=1.7pt] (R1) -- (P1);

    \foreach \x in \particles {
        \pgfmathsetmacro{\yLin}{\ylin}
        \node[dot, fill=neutralLine] at (\x,\yLin) {};
    }

    \node[anchorDot] at (R1) {};
    \node[anchorDot] at (P1) {};

    \node[annotLabBlue, anchor=south] at (1.7, 3.55) {$\mathbf{Y}^{\star}$};

    \node[anchor=north west, inner sep=1pt] at (-0.15,\molBotY)
        {\includegraphics[width=\molW]{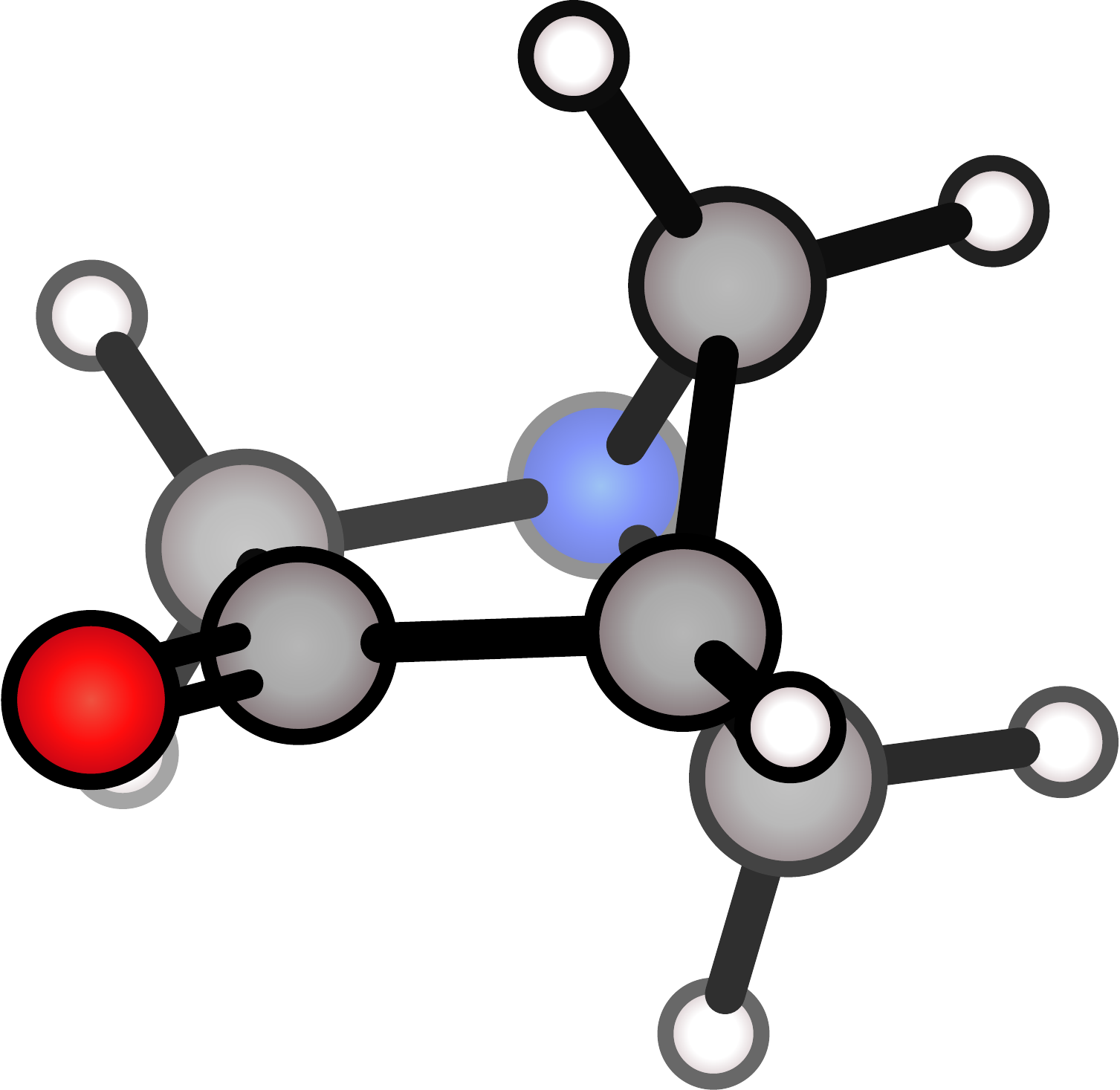}};
    \node[endpointLab, anchor=north west] at (1.85,-0.20) {$\mathbf{X}_R$};

    \node[anchor=south east, inner sep=1pt] at (5.20,\molTopY)
        {\includegraphics[width=\molW]{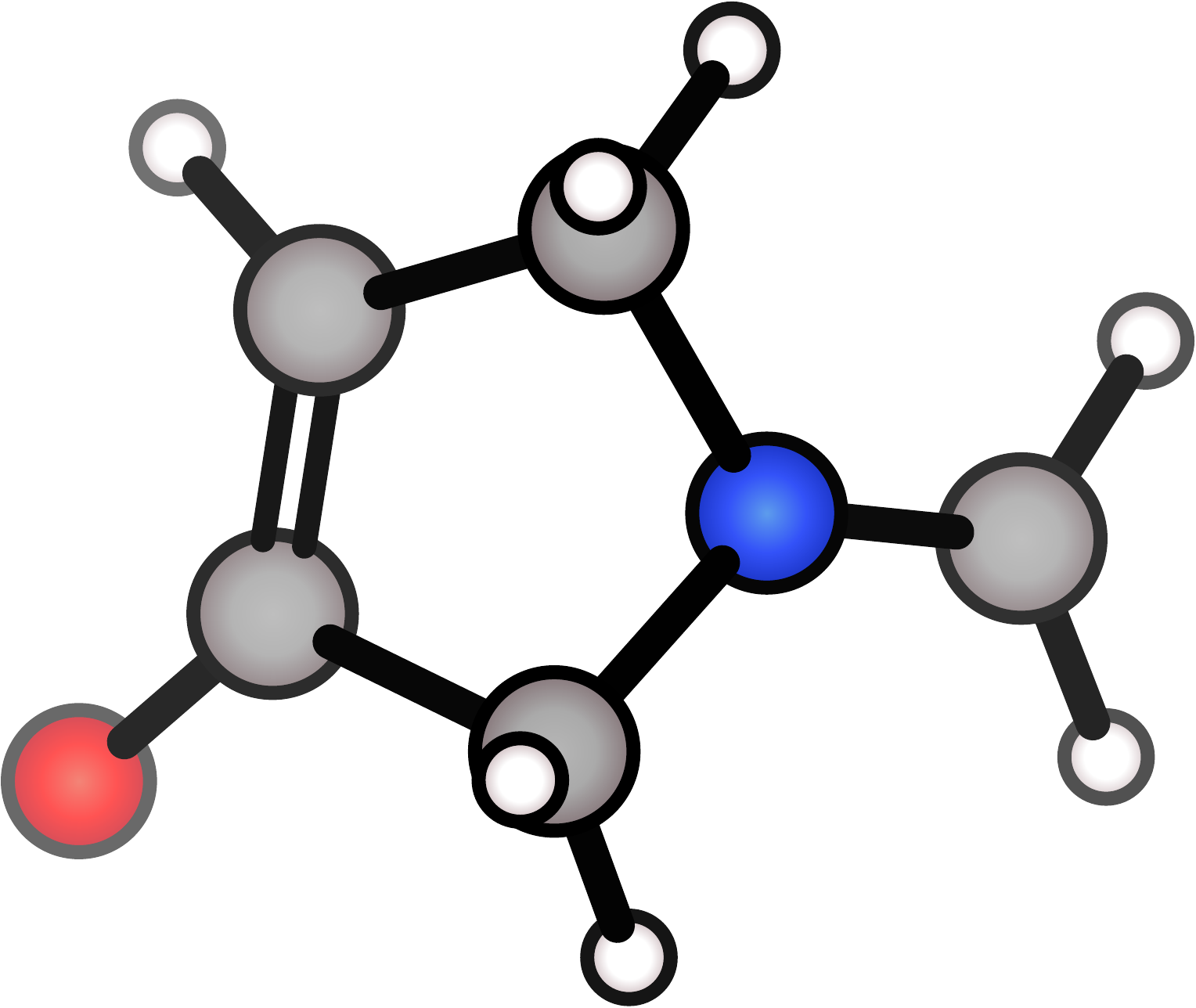}};
    \node[endpointLab, anchor=south east] at (3.40, 4.15) {$\mathbf{X}_P$};

    \panelTitle{Linear prior \quad $p_{\mathrm{prior}}$}
\end{scope}

\foreach \k / \scaleFact / \mix in {
    1 / 0.35 / 35,
    2 / 0.65 / 65,
    3 / 0.88 / 88
} {
    \pgfmathsetmacro{\xShift}{\k*\panelStride}
    \begin{scope}[shift={(\xShift,0)}]
        \colorlet{currentPath}{pathBlue!\mix!white}

        \coordinate (R) at (\Rx,\Ry);
        \coordinate (P) at (\Px,\Py);

        \draw[neutralLine!35, line width=0.7pt] (R) -- (P);
        \draw[pathBlueSoft, dashed, line width=0.9pt, smooth,
              domain=\Rx:\Px, samples=80]
            plot (\x, {\mepCurve});

        \draw[currentPath, line width=2.0pt, smooth,
              domain=\Rx:\Px, samples=80]
            plot (\x, {\ylin + \scaleFact*\amp*(\bump)});

        \foreach \x in \particles {
            \pgfmathsetmacro{\yT}{\mepCurve}
            \pgfmathsetmacro{\yC}{\ylin + \scaleFact*\amp*(\bump)}
            \pgfmathsetmacro{\vecY}{\yC + 0.45*(\yT-\yC)}

            \node[dot, fill=currentPath] at (\x,\yC) {};
            \draw[fieldArrow] (\x,\yC) -- (\x,\vecY);
        }

        \node[anchorDot] at (R) {};
        \node[anchorDot] at (P) {};
    \end{scope}
}

\pgfmathsetmacro{\midPanels}{2*\panelStride + \plotW/2}
\node[panelSub, anchor=north] at (\midPanels, \titleY) {Training Evolution};
\draw[->, >=Stealth, line width=1.2pt]
    (1*\panelStride + 0.5, \titleY - 0.55) -- (3*\panelStride + \plotW - 0.5, \titleY - 0.55);

\pgfmathsetmacro{\xShiftAnno}{1*\panelStride}
\begin{scope}[shift={(\xShiftAnno,0)}]
    \node[annotLab, anchor=west] at (3.85, 2.55) {$\mathcal{V}(\mathbf{X}_i)$};
\end{scope}

\pgfmathsetmacro{\xShiftAnnoQ}{2*\panelStride}
\begin{scope}[shift={(\xShiftAnnoQ,0)}]
    \node[annotLabBlue, anchor=west] at (3.10, 1.50) {$q_{\theta^{(s)}}$};
\end{scope}

\pgfmathsetmacro{\xShiftFinal}{4*\panelStride}
\begin{scope}[shift={(\xShiftFinal,0)}]
    \coordinate (R5) at (\Rx,\Ry);
    \coordinate (P5) at (\Px,\Py);

    \pgfmathsetmacro{\xTS}{2.15}
    \pgfmathsetmacro{\yTS}{0.15 + 0.71428*(\xTS-0.15)
                          + \amp*( sin(180*(\xTS-0.15)/4.90)
                                  + 0.35*sin(360*(\xTS-0.15)/4.90) )}

    \draw[pathBlue, line width=2.3pt, smooth,
          domain=\Rx:\Px, samples=100]
        plot (\x, {\mepCurve});

    \foreach \x in \particles {
        \pgfmathsetmacro{\yT}{\mepCurve}
        \node[dot, fill=pathBlue] at (\x,\yT) {};
    }

    \node[tsRing] at (\xTS,\yTS) {};

    \node[anchorDot] at (R5) {};
    \node[anchorDot] at (P5) {};

    \node[anchor=north west, inner sep=1pt] at (-0.15,\molBotY)
        {\includegraphics[width=\molW]{figures/tikz/reactant.pdf}};
    \node[endpointLab, anchor=north west] at (1.85,-0.20) {$\mathbf{X}_R$};

    \node[anchor=south east, inner sep=1pt] at (5.30,\molTopY)
        {\includegraphics[width=\molW]{figures/tikz/product.pdf}};
    \node[endpointLab, anchor=south east] at (5.40, 5.0) {$\mathbf{X}_P$};

    \node[anchor=south, inner sep=4pt] (TSmol) at (\xTS, \molTopY)
    {\includegraphics[width=\tsW]{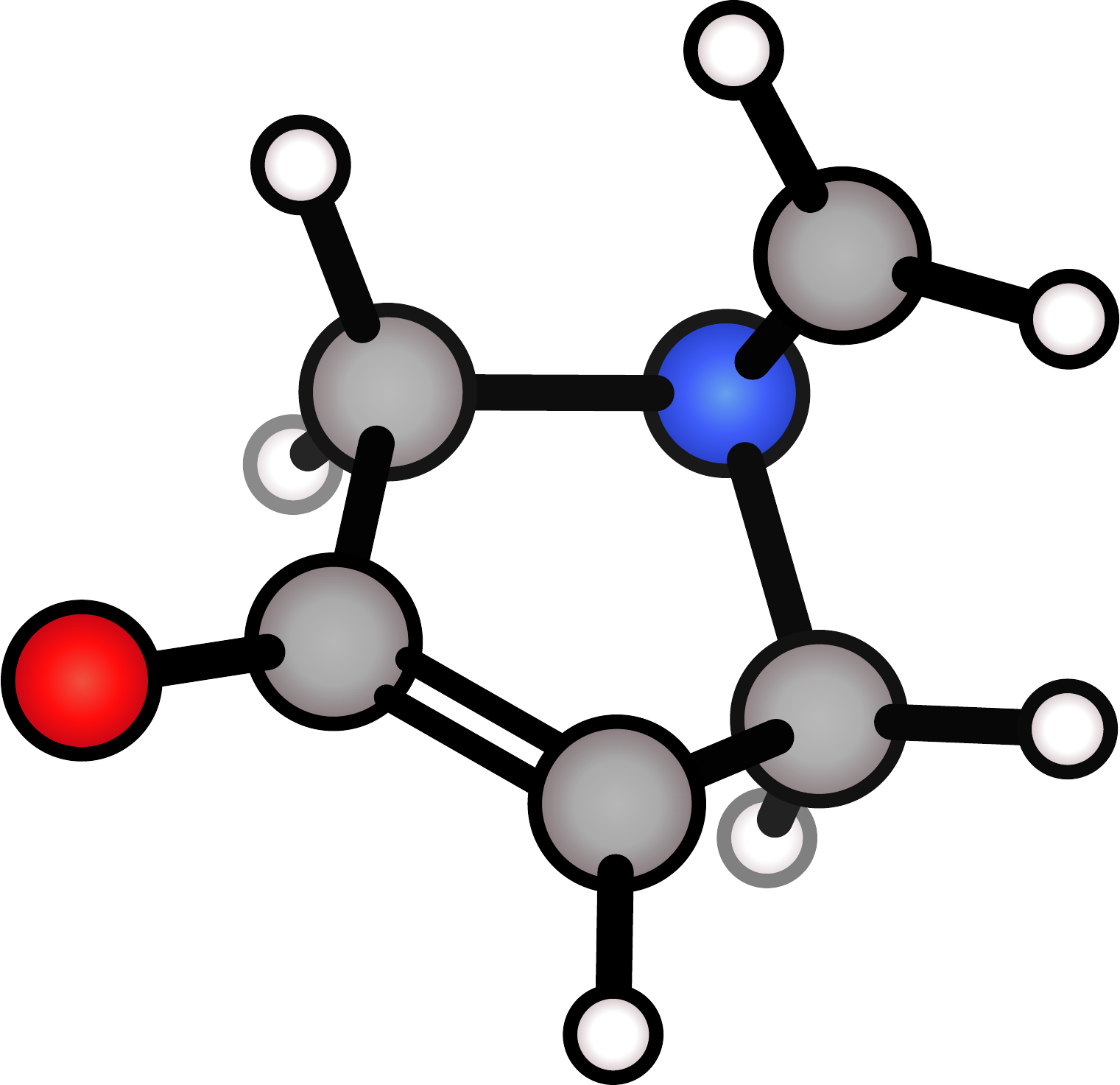}};

\def\brTick{0.18}   
\def\brPad{0.08}    

\draw[black, line width=1.1pt, line cap=round, line join=round]
    ([xshift={-\brPad cm + \brTick cm}, yshift=-\brPad cm] TSmol.south west)
        -- ([xshift=-\brPad cm, yshift=-\brPad cm] TSmol.south west)
        -- ([xshift=-\brPad cm, yshift= \brPad cm] TSmol.north west)
        -- ([xshift={-\brPad cm + \brTick cm}, yshift= \brPad cm] TSmol.north west);

\draw[black, line width=1.1pt, line cap=round, line join=round]
    ([xshift={ \brPad cm - \brTick cm}, yshift=-\brPad cm] TSmol.south east)
        -- ([xshift= \brPad cm, yshift=-\brPad cm] TSmol.south east)
        -- ([xshift= \brPad cm, yshift= \brPad cm] TSmol.north east)
        -- ([xshift={ \brPad cm - \brTick cm}, yshift= \brPad cm] TSmol.north east);

\node[font=\large\bfseries, text=black, anchor=south west, inner sep=1pt]
    at ([xshift=\brPad cm, yshift=-0.05cm] TSmol.north east)
    {$\ddagger$};

    \panelTitle{Converged \quad $q_\theta \approx p_{\mathrm{data}}$}
\end{scope}

\foreach \k in {0,1,2,3} {
    \pgfmathsetmacro{\arrX}{\k*\panelStride + \plotW + 0.05}
    \draw[->, black, line width=1.2pt]
        (\arrX, 1.85) -- ({\arrX+\panelGap-0.1}, 1.85);
}

\end{tikzpicture}

%% file: references.bib
@article{asgeirssonNudgedElasticBand2021,
  title = {Nudged {{Elastic Band Method}} for {{Molecular Reactions Using Energy-Weighted Springs Combined}} with {{Eigenvector Following}}},
  author = {{\'A}sgeirsson, Vilhj{\'a}lmur and Birgisson, Benedikt Orri and Bjornsson, Ragnar and Becker, Ute and Neese, Frank and Riplinger, Christoph and J{\'o}nsson, Hannes},
  year = 2021,
  month = aug,
  journal = {Journal of Chemical Theory and Computation},
  volume = {17},
  number = {8},
  pages = {4929--4945},
  publisher = {American Chemical Society},
  issn = {1549-9618},
  doi = {10.1021/acs.jctc.1c00462},
  url = {https://doi.org/10.1021/acs.jctc.1c00462},
  urldate = {2026-05-06}
}

@misc{beagleholeMachineLearningTransition2025,
  title = {Machine Learning Transition State Geometries and Applications in Reaction Property Prediction},
  author = {Beaglehole, Isaac W. and Pemberton, Miles J. and Farrar, Elliot H. E. and Grayson, Matthew N.},
  year = 2025,
  month = mar,
  publisher = {ChemRxiv},
  doi = {10.26434/chemrxiv-2024-wdtz9-v3},
  url = {https://chemrxiv.org/engage/chemrxiv/article-details/67cec10c6dde43c908420904},
  urldate = {2025-12-04},
  archiveprefix = {ChemRxiv},
  langid = {english}
}

@article{chengMeanShiftMode1995,
  title = {Mean Shift, Mode Seeking, and Clustering},
  author = {Cheng, Yizong},
  year = 1995,
  month = aug,
  journal = {IEEE Transactions on Pattern Analysis and Machine Intelligence},
  volume = {17},
  number = {8},
  pages = {790--799},
  issn = {1939-3539},
  doi = {10.1109/34.400568},
  url = {https://ieeexplore.ieee.org/document/400568},
  urldate = {2026-04-29}
}

@article{comaniciuMeanShiftRobust2002,
  title = {Mean Shift: A Robust Approach toward Feature Space Analysis},
  shorttitle = {Mean Shift},
  author = {Comaniciu, D. and Meer, P.},
  year = 2002,
  month = may,
  journal = {IEEE Transactions on Pattern Analysis and Machine Intelligence},
  volume = {24},
  number = {5},
  pages = {603--619},
  issn = {1939-3539},
  doi = {10.1109/34.1000236},
  url = {https://ieeexplore.ieee.org/document/1000236},
  urldate = {2026-04-29}
}

@misc{darouichAdaptiveTransitionState2025,
  title = {Adaptive {{Transition State Refinement}} with {{Learned Equilibrium Flows}}},
  author = {Darouich, Samir and Tong, Vinh and Bien, Tanja and K{\"a}stner, Johannes and Niepert, Mathias},
  year = 2025,
  month = jul,
  number = {arXiv:2507.16521},
  eprint = {2507.16521},
  primaryclass = {physics},
  publisher = {arXiv},
  doi = {10.48550/arXiv.2507.16521},
  url = {http://arxiv.org/abs/2507.16521},
  urldate = {2025-12-04},
  archiveprefix = {arXiv}
}

@misc{darouichTrainingDomainRobust2026,
  title = {Beyond the {{Training Domain}}: {{Robust Generative Transition State Models}} for {{Unseen Chemistry}}},
  shorttitle = {Beyond the {{Training Domain}}},
  author = {Darouich, Samir and Toney, Jacob W. and Luo, Weiliang and K{\"a}stner, Johannes and Niepert, Mathias and Kulik, Heather J.},
  year = 2026,
  month = jan,
  number = {arXiv:2601.16469},
  eprint = {2601.16469},
  primaryclass = {physics},
  publisher = {arXiv},
  doi = {10.48550/arXiv.2601.16469},
  url = {http://arxiv.org/abs/2601.16469},
  urldate = {2026-01-29},
  archiveprefix = {arXiv}
}

@misc{dengGenerativeModelingDrifting2026,
  title = {Generative {{Modeling}} via {{Drifting}}},
  author = {Deng, Mingyang and Li, He and Li, Tianhong and Du, Yilun and He, Kaiming},
  year = 2026,
  month = feb,
  number = {arXiv:2602.04770},
  eprint = {2602.04770},
  primaryclass = {cs},
  publisher = {arXiv},
  doi = {10.48550/arXiv.2602.04770},
  url = {http://arxiv.org/abs/2602.04770},
  urldate = {2026-02-18},
  archiveprefix = {arXiv}
}

@article{duanAccurateTransitionState2023,
  title = {Accurate Transition State Generation with an Object-Aware Equivariant Elementary Reaction Diffusion Model},
  author = {Duan, Chenru and Du, Yuanqi and Jia, Haojun and Kulik, Heather J.},
  year = 2023,
  month = dec,
  journal = {Nature Computational Science},
  volume = {3},
  number = {12},
  pages = {1045--1055},
  publisher = {Nature Publishing Group},
  issn = {2662-8457},
  doi = {10.1038/s43588-023-00563-7},
  url = {https://www.nature.com/articles/s43588-023-00563-7},
  urldate = {2026-02-23},
  copyright = {2023 The Author(s), under exclusive licence to Springer Nature America, Inc.},
  langid = {english}
}

@article{duanOptimalTransportGenerating2025,
  title = {Optimal Transport for Generating Transition States in Chemical Reactions},
  author = {Duan, Chenru and Liu, Guan-Horng and Du, Yuanqi and Chen, Tianrong and Zhao, Qiyuan and Jia, Haojun and Gomes, Carla P. and Theodorou, Evangelos A. and Kulik, Heather J.},
  year = 2025,
  month = apr,
  journal = {Nature Machine Intelligence},
  volume = {7},
  number = {4},
  pages = {615--626},
  publisher = {Nature Publishing Group},
  issn = {2522-5839},
  doi = {10.1038/s42256-025-01010-0},
  url = {https://www.nature.com/articles/s42256-025-01010-0},
  urldate = {2025-12-12},
  copyright = {2025 The Author(s)},
  langid = {english}
}

@misc{duNewPerspectiveBuilding2023a,
  title = {A New Perspective on Building Efficient and Expressive {{3D}} Equivariant Graph Neural Networks},
  author = {Du, Weitao and Du, Yuanqi and Wang, Limei and Feng, Dieqiao and Wang, Guifeng and Ji, Shuiwang and Gomes, Carla and Ma, Zhi-Ming},
  year = 2023,
  month = apr,
  number = {arXiv:2304.04757},
  eprint = {2304.04757},
  primaryclass = {cs},
  publisher = {arXiv},
  doi = {10.48550/arXiv.2304.04757},
  url = {http://arxiv.org/abs/2304.04757},
  urldate = {2026-04-27},
  archiveprefix = {arXiv}
}

@article{eStringMethodStudy2002,
  title = {String {{Method}} for the {{Study}} of {{Rare Events}}},
  author = {E, Weinan and Ren, Weiqing and {Vanden-Eijnden}, Eric},
  year = 2002,
  month = aug,
  journal = {Physical Review B},
  volume = {66},
  number = {5},
  eprint = {cond-mat/0205527},
  pages = {052301},
  issn = {0163-1829, 1095-3795},
  doi = {10.1103/PhysRevB.66.052301},
  url = {http://arxiv.org/abs/cond-mat/0205527},
  urldate = {2026-03-18},
  archiveprefix = {arXiv}
}

@article{eyringActivatedComplexChemical1935,
  title = {The {{Activated Complex}} in {{Chemical Reactions}}},
  author = {Eyring, Henry},
  year = 1935,
  month = feb,
  journal = {The Journal of Chemical Physics},
  volume = {3},
  number = {2},
  pages = {107--115},
  issn = {0021-9606},
  doi = {10.1063/1.1749604},
  url = {https://doi.org/10.1063/1.1749604},
  urldate = {2026-03-18}
}

@article{fukuiPathChemicalReactions1981,
  title = {The Path of Chemical Reactions - the {{IRC}} Approach},
  author = {Fukui, Kenichi},
  year = 1981,
  month = dec,
  journal = {Accounts of Chemical Research},
  volume = {14},
  number = {12},
  pages = {363--368},
  publisher = {American Chemical Society},
  issn = {0001-4842},
  doi = {10.1021/ar00072a001},
  url = {https://doi.org/10.1021/ar00072a001},
  urldate = {2025-12-12}
}

@article{fukuiVariationalPrinciplesChemical1981,
  title = {Variational Principles in a Chemical Reaction},
  author = {Fukui, Kenichi},
  year = 1981,
  journal = {International Journal of Quantum Chemistry},
  volume = {20},
  number = {S15},
  pages = {633--642},
  issn = {1097-461X},
  doi = {10.1002/qua.560200866},
  url = {https://onlinelibrary.wiley.com/doi/abs/10.1002/qua.560200866},
  urldate = {2026-02-20},
  copyright = {Copyright \copyright{} 1981 John Wiley \& Sons, Inc.},
  langid = {english}
}

@article{fukunagaEstimationGradientDensity1975,
  title = {The Estimation of the Gradient of a Density Function, with Applications in Pattern Recognition},
  author = {Fukunaga, K. and Hostetler, L.},
  year = 1975,
  month = jan,
  journal = {IEEE Transactions on Information Theory},
  volume = {21},
  number = {1},
  pages = {32--40},
  issn = {1557-9654},
  doi = {10.1109/TIT.1975.1055330},
  url = {https://ieeexplore.ieee.org/document/1055330},
  urldate = {2026-04-29}
}

@article{galustianGoFlowEfficientTransition2025a,
  title = {{{GoFlow}}: Efficient Transition State Geometry Prediction with Flow Matching and {{E}}(3)-Equivariant Neural Networks},
  shorttitle = {{{GoFlow}}},
  author = {Galustian, Leonard and Mark, Konstantin and Karwounopoulos, Johannes and Kovar, Maximilian P.-P. and Heid, Esther},
  year = 2025,
  month = dec,
  journal = {Digital Discovery},
  volume = {4},
  number = {12},
  pages = {3492--3501},
  publisher = {RSC},
  issn = {2635-098X},
  doi = {10.1039/D5DD00283D},
  url = {https://pubs.rsc.org/en/content/articlelanding/2025/dd/d5dd00283d},
  urldate = {2026-02-17},
  langid = {english}
}

@article{goedeckerLinearScalingElectronic1999,
  title = {Linear Scaling Electronic Structure Methods},
  author = {Goedecker, Stefan},
  year = 1999,
  month = jul,
  journal = {Reviews of Modern Physics},
  volume = {71},
  number = {4},
  pages = {1085--1123},
  publisher = {American Physical Society},
  doi = {10.1103/RevModPhys.71.1085},
  url = {https://link.aps.org/doi/10.1103/RevModPhys.71.1085},
  urldate = {2026-03-18}
}

@article{gomez-bombarelliDesignEfficientMolecular2016,
  title = {Design of Efficient Molecular Organic Light-Emitting Diodes by a High-Throughput Virtual Screening and Experimental Approach},
  author = {{G{\'o}mez-Bombarelli}, Rafael and {Aguilera-Iparraguirre}, Jorge and Hirzel, Timothy D. and Duvenaud, David and Maclaurin, Dougal and {Blood-Forsythe}, Martin A. and Chae, Hyun Sik and Einzinger, Markus and Ha, Dong-Gwang and Wu, Tony and Markopoulos, Georgios and Jeon, Soonok and Kang, Hosuk and Miyazaki, Hiroshi and Numata, Masaki and Kim, Sunghan and Huang, Wenliang and Hong, Seong Ik and Baldo, Marc and Adams, Ryan P. and {Aspuru-Guzik}, Al{\'a}n},
  year = 2016,
  month = oct,
  journal = {Nature Materials},
  volume = {15},
  number = {10},
  pages = {1120--1127},
  publisher = {Nature Publishing Group},
  issn = {1476-4660},
  doi = {10.1038/nmat4717},
  url = {https://www.nature.com/articles/nmat4717},
  urldate = {2026-03-18},
  copyright = {2016 Springer Nature Limited},
  langid = {english}
}

@article{harePosttransitionStateBifurcations2017,
  title = {Post-Transition State Bifurcations Gain Momentum -- Current State of the Field},
  author = {Hare, Stephanie R. and Tantillo, Dean J.},
  year = 2017,
  month = jun,
  journal = {Pure and Applied Chemistry},
  volume = {89},
  number = {6},
  pages = {679--698},
  publisher = {De Gruyter},
  issn = {1365-3075},
  doi = {10.1515/pac-2017-0104},
  url = {https://www.degruyterbrill.com/document/doi/10.1515/pac-2017-0104/html?lang=en},
  urldate = {2026-04-27},
  copyright = {De Gruyter expressly reserves the right to use all content for commercial text and data mining within the meaning of Section 44b of the German Copyright Act.},
  langid = {english}
}

@article{henkelmanClimbingImageNudged2000,
  title = {A Climbing Image Nudged Elastic Band Method for Finding Saddle Points and Minimum Energy Paths},
  author = {Henkelman, Graeme and Uberuaga, Blas P. and J{\'o}nsson, Hannes},
  year = 2000,
  month = dec,
  journal = {The Journal of Chemical Physics},
  volume = {113},
  number = {22},
  pages = {9901--9904},
  issn = {0021-9606},
  doi = {10.1063/1.1329672},
  url = {https://doi.org/10.1063/1.1329672},
  urldate = {2026-03-18}
}

@article{hjorthlarsenAtomicSimulationEnvironment2017,
  title = {The Atomic Simulation Environment---a {{Python}} Library for Working with Atoms},
  author = {Hjorth Larsen, Ask and J{\o}rgen Mortensen, Jens and Blomqvist, Jakob and Castelli, Ivano E and Christensen, Rune and Du{\l}ak, Marcin and Friis, Jesper and Groves, Michael N and Hammer, Bj{\o}rk and Hargus, Cory and Hermes, Eric D and Jennings, Paul C and Bjerre Jensen, Peter and Kermode, James and Kitchin, John R and Leonhard Kolsbjerg, Esben and Kubal, Joseph and Kaasbjerg, Kristen and Lysgaard, Steen and Bergmann Maronsson, J{\'o}n and Maxson, Tristan and Olsen, Thomas and Pastewka, Lars and Peterson, Andrew and Rostgaard, Carsten and Schi{\o}tz, Jakob and Sch{\"u}tt, Ole and Strange, Mikkel and Thygesen, Kristian S and Vegge, Tejs and Vilhelmsen, Lasse and Walter, Michael and Zeng, Zhenhua and Jacobsen, Karsten W},
  year = 2017,
  month = jun,
  journal = {Journal of Physics: Condensed Matter},
  volume = {29},
  number = {27},
  pages = {273002},
  publisher = {IOP Publishing},
  issn = {0953-8984},
  doi = {10.1088/1361-648X/aa680e},
  url = {https://doi.org/10.1088/1361-648X/aa680e},
  urldate = {2026-04-28},
  langid = {english}
}

@article{kabschSolutionBestRotation1976,
  title = {A Solution for the Best Rotation to Relate Two Sets of Vectors},
  author = {Kabsch, W.},
  year = 1976,
  journal = {Acta Crystallographica Section A},
  volume = {32},
  number = {5},
  pages = {922--923},
  issn = {1600-5724},
  doi = {10.1107/S0567739476001873},
  url = {https://onlinelibrary.wiley.com/doi/abs/10.1107/S0567739476001873},
  urldate = {2026-04-17},
  langid = {english}
}

@article{kimDiffusionbasedGenerativeAI2024,
  title = {Diffusion-Based Generative {{AI}} for Exploring Transition States from {{2D}} Molecular Graphs},
  author = {Kim, Seonghwan and Woo, Jeheon and Kim, Woo Youn},
  year = 2024,
  month = jan,
  journal = {Nature Communications},
  volume = {15},
  number = {1},
  pages = {341},
  publisher = {Nature Publishing Group},
  issn = {2041-1723},
  doi = {10.1038/s41467-023-44629-6},
  url = {https://www.nature.com/articles/s41467-023-44629-6},
  urldate = {2025-12-12},
  copyright = {2024 The Author(s)},
  langid = {english}
}

@article{kissComputationalEnzymeDesign2013,
  title = {Computational {{Enzyme Design}}},
  author = {Kiss, Gert and {\c C}elebi-{\"O}l{\c c}{\"u}m, Nihan and Moretti, Rocco and Baker, David and Houk, K. N.},
  year = 2013,
  journal = {Angewandte Chemie International Edition},
  volume = {52},
  number = {22},
  pages = {5700--5725},
  issn = {1521-3773},
  doi = {10.1002/anie.201204077},
  url = {https://onlinelibrary.wiley.com/doi/abs/10.1002/anie.201204077},
  urldate = {2026-03-18},
  copyright = {Copyright \copyright{} 2013 WILEY-VCH Verlag GmbH \& Co. KGaA, Weinheim},
  langid = {english}
}

@article{kozuchHowConceptualizeCatalytic2011,
  title = {How to {{Conceptualize Catalytic Cycles}}? {{The Energetic Span Model}}},
  shorttitle = {How to {{Conceptualize Catalytic Cycles}}?},
  author = {Kozuch, Sebastian and Shaik, Sason},
  year = 2011,
  month = feb,
  journal = {Accounts of Chemical Research},
  volume = {44},
  number = {2},
  pages = {101--110},
  publisher = {American Chemical Society},
  issn = {0001-4842},
  doi = {10.1021/ar1000956},
  url = {https://doi.org/10.1021/ar1000956},
  urldate = {2026-03-18}
}

@article{leeDatasetChemicalReaction2025,
  title = {A Dataset of Chemical Reaction Pathways Incorporating Halogen Chemistry},
  author = {Lee, Minhyeok and Jeong, Jinyoung and Ashyrmamatov, Islambek and Ucaxk, Umit V. and Kim, Sunwoo and Lee, Juyong and Sim, Eunji},
  year = 2025,
  month = oct,
  journal = {Scientific Data},
  volume = {12},
  number = {1},
  pages = {1655},
  publisher = {Nature Publishing Group},
  issn = {2052-4463},
  doi = {10.1038/s41597-025-05944-3},
  url = {https://www.nature.com/articles/s41597-025-05944-3},
  urldate = {2025-12-04},
  copyright = {2025 The Author(s)},
  langid = {english}
}

@article{lindgrenScaledDynamicOptimizations2019,
  title = {Scaled and {{Dynamic Optimizations}} of {{Nudged Elastic Bands}}},
  author = {Lindgren, Per and Kastlunger, Georg and Peterson, Andrew A.},
  year = 2019,
  month = nov,
  journal = {Journal of Chemical Theory and Computation},
  volume = {15},
  number = {11},
  eprint = {1906.10257},
  primaryclass = {physics},
  pages = {5787--5793},
  issn = {1549-9618, 1549-9626},
  doi = {10.1021/acs.jctc.9b00633},
  url = {http://arxiv.org/abs/1906.10257},
  urldate = {2026-05-06},
  archiveprefix = {arXiv}
}

@misc{loshchilovDecoupledWeightDecay2019,
  title = {Decoupled {{Weight Decay Regularization}}},
  author = {Loshchilov, Ilya and Hutter, Frank},
  year = 2019,
  month = jan,
  number = {arXiv:1711.05101},
  eprint = {1711.05101},
  primaryclass = {cs},
  publisher = {arXiv},
  doi = {10.48550/arXiv.1711.05101},
  url = {http://arxiv.org/abs/1711.05101},
  urldate = {2026-05-04},
  archiveprefix = {arXiv}
}

@misc{luoGeneratingTransitionStates2025,
  title = {Generating Transition States of Chemical Reactions via Distance-Geometry-Based Flow Matching},
  author = {Luo, Yufei and Gu, Xiang and Sun, Jian},
  year = 2025,
  month = nov,
  number = {arXiv:2511.17229},
  eprint = {2511.17229},
  primaryclass = {cs},
  publisher = {arXiv},
  doi = {10.48550/arXiv.2511.17229},
  url = {http://arxiv.org/abs/2511.17229},
  urldate = {2025-12-04},
  archiveprefix = {arXiv}
}

@article{namTransferableLearningReaction2025,
  title = {Transferable {{Learning}} of {{Reaction Pathways}} from {{Geometric Priors}}},
  author = {Nam, Juno and Steiner, Miguel and Misterka, Max and Yang, Soojung and Singhal, Avni and {G{\'o}mez-Bombarelli}, Rafael},
  year = 2025,
  month = nov,
  journal = {The Journal of Physical Chemistry Letters},
  volume = {16},
  number = {45},
  pages = {11690--11699},
  publisher = {American Chemical Society},
  doi = {10.1021/acs.jpclett.5c02620},
  url = {https://doi.org/10.1021/acs.jpclett.5c02620},
  urldate = {2026-05-03}
}

@misc{nikitinRightSaddleStereochemistryAware2026,
  title = {Right into the {{Saddle}}: {{Stereochemistry-Aware Generation}} of {{Molecular Transition States}}},
  shorttitle = {Right into the {{Saddle}}},
  author = {Nikitin, Filipp and Anstine, Dylan M. and Isayev, Olexandr},
  year = 2026,
  month = apr,
  doi = {10.26434/chemrxiv.15001681/v1},
  url = {https://chemrxiv.org/doi/full/10.26434/chemrxiv.15001681/v1},
  urldate = {2026-04-09},
  langid = {english}
}

@misc{rajaActionMinimizationMeetsGenerative2025,
  title = {Action-{{Minimization Meets Generative Modeling}}: {{Efficient Transition Path Sampling}} with the {{Onsager-Machlup Functional}}},
  shorttitle = {Action-{{Minimization Meets Generative Modeling}}},
  author = {Raja, Sanjeev and {\v S}{\'i}pka, Martin and Psenka, Michael and Kreiman, Tobias and Pavelka, Michal and Krishnapriyan, Aditi S.},
  year = 2025,
  month = apr,
  number = {arXiv:2504.18506},
  eprint = {2504.18506},
  primaryclass = {cs},
  publisher = {arXiv},
  doi = {10.48550/arXiv.2504.18506},
  url = {http://arxiv.org/abs/2504.18506},
  urldate = {2026-02-24},
  archiveprefix = {arXiv}
}

@article{schlegelGeometryOptimization2011,
  title = {Geometry Optimization},
  author = {Schlegel, H. Bernhard},
  year = 2011,
  journal = {WIREs Computational Molecular Science},
  volume = {1},
  number = {5},
  pages = {790--809},
  issn = {1759-0884},
  doi = {10.1002/wcms.34},
  url = {https://onlinelibrary.wiley.com/doi/abs/10.1002/wcms.34},
  urldate = {2026-04-27},
  copyright = {Copyright \copyright{} 2011 John Wiley \& Sons, Ltd.},
  langid = {english}
}

@article{schrammEnzymaticTransitionStates2018,
  title = {Enzymatic {{Transition States}} and {{Drug Design}}},
  author = {Schramm, Vern L.},
  year = 2018,
  month = nov,
  journal = {Chemical Reviews},
  volume = {118},
  number = {22},
  pages = {11194--11258},
  issn = {0009-2665, 1520-6890},
  doi = {10.1021/acs.chemrev.8b00369},
  url = {https://pubs.acs.org/doi/10.1021/acs.chemrev.8b00369},
  urldate = {2026-03-18},
  langid = {english}
}

@article{schreinerNeuralNEBNeuralNetworks2022,
  title = {{{NeuralNEB}}---Neural Networks Can Find Reaction Paths Fast},
  author = {Schreiner, Mathias and Bhowmik, Arghya and Vegge, Tejs and J{\o}rgensen, Peter Bj{\o}rn and Winther, Ole},
  year = 2022,
  month = dec,
  journal = {Machine Learning: Science and Technology},
  volume = {3},
  number = {4},
  pages = {045022},
  publisher = {IOP Publishing},
  issn = {2632-2153},
  doi = {10.1088/2632-2153/aca23e},
  url = {https://doi.org/10.1088/2632-2153/aca23e},
  urldate = {2026-01-19},
  langid = {english}
}

@article{schreinerTransition1xDatasetBuilding2022,
  title = {Transition1x - a Dataset for Building Generalizable Reactive Machine Learning Potentials},
  author = {Schreiner, Mathias and Bhowmik, Arghya and Vegge, Tejs and Busk, Jonas and Winther, Ole},
  year = 2022,
  month = dec,
  journal = {Scientific Data},
  volume = {9},
  number = {1},
  pages = {779},
  publisher = {Nature Publishing Group},
  issn = {2052-4463},
  doi = {10.1038/s41597-022-01870-w},
  url = {https://www.nature.com/articles/s41597-022-01870-w},
  urldate = {2025-12-04},
  copyright = {2022 The Author(s)},
  langid = {english}
}

@misc{schuttEquivariantMessagePassing2021,
  title = {Equivariant Message Passing for the Prediction of Tensorial Properties and Molecular Spectra},
  author = {Sch{\"u}tt, Kristof T. and Unke, Oliver T. and Gastegger, Michael},
  year = 2021,
  month = jun,
  number = {arXiv:2102.03150},
  eprint = {2102.03150},
  primaryclass = {cs},
  publisher = {arXiv},
  doi = {10.48550/arXiv.2102.03150},
  url = {http://arxiv.org/abs/2102.03150},
  urldate = {2026-04-28},
  archiveprefix = {arXiv}
}

@misc{shenDrivingReactionTrajectories2026,
  title = {Driving {{Reaction Trajectories}} via {{Latent Flow Matching}}},
  author = {Shen, Yili and Zhang, Xiangliang},
  year = 2026,
  month = feb,
  number = {arXiv:2602.10476},
  eprint = {2602.10476},
  primaryclass = {cs},
  publisher = {arXiv},
  doi = {10.48550/arXiv.2602.10476},
  url = {http://arxiv.org/abs/2602.10476},
  urldate = {2026-02-17},
  archiveprefix = {arXiv}
}

@misc{shprintsFragmentFlowScalableTransition2026,
  title = {{{FragmentFlow}}: {{Scalable Transition State Generation}} for {{Large Molecules}}},
  shorttitle = {{{FragmentFlow}}},
  author = {Shprints, Ron and Holderrieth, Peter and Nam, Juno and {G{\'o}mez-Bombarelli}, Rafael and Jaakkola, Tommi},
  year = 2026,
  month = feb,
  number = {arXiv:2602.02310},
  eprint = {2602.02310},
  primaryclass = {physics},
  publisher = {arXiv},
  doi = {10.48550/arXiv.2602.02310},
  url = {http://arxiv.org/abs/2602.02310},
  urldate = {2026-02-03},
  archiveprefix = {arXiv}
}

@article{smidstrupImprovedInitialGuess2014,
  title = {Improved Initial Guess for Minimum Energy Path Calculations},
  author = {Smidstrup, S{\o}ren and Pedersen, Andreas and Stokbro, Kurt and J{\'o}nsson, Hannes},
  year = 2014,
  month = jun,
  journal = {The Journal of Chemical Physics},
  volume = {140},
  number = {21},
  publisher = {AIP Publishing},
  issn = {0021-9606},
  doi = {10.1063/1.4878664},
  url = {https://pubs.aip.org/aip/jcp/article/140/21/214106/566679/Improved-initial-guess-for-minimum-energy-path},
  urldate = {2026-02-24},
  langid = {english}
}

@misc{songDenoisingDiffusionImplicit2022,
  title = {Denoising {{Diffusion Implicit Models}}},
  author = {Song, Jiaming and Meng, Chenlin and Ermon, Stefano},
  year = 2022,
  month = oct,
  number = {arXiv:2010.02502},
  eprint = {2010.02502},
  primaryclass = {cs},
  publisher = {arXiv},
  doi = {10.48550/arXiv.2010.02502},
  url = {http://arxiv.org/abs/2010.02502},
  urldate = {2026-04-27},
  archiveprefix = {arXiv}
}

@article{tachibanaDifferentialGeometryChemically1978,
  title = {Differential Geometry of Chemically Reacting Systems},
  author = {Tachibana, Akitomo and Fukui, Kenichi},
  year = 1978,
  month = dec,
  journal = {Theoretica chimica acta},
  volume = {49},
  number = {4},
  pages = {321--347},
  issn = {1432-2234},
  doi = {10.1007/BF00552483},
  url = {https://doi.org/10.1007/BF00552483},
  urldate = {2026-04-27},
  langid = {english}
}

@article{truhlarVariationalTransitionState,
  title = {Variational {{Transition State Theory}}},
  author = {Truhlar, D G and Garrett, B C},
  year = 1984,
  journal = {Ann. Rev. Phys. Chem.},
  doi = {10.1146/annurev.pc.35.100184.001111},
  url = {https://www.annualreviews.org/content/journals/10.1146/annurev.pc.35.100184.001111},
  langid = {english}
}

@misc{tuoFlowMatchingReaction2025,
  title = {Flow Matching for Reaction Pathway Generation},
  author = {Tuo, Ping and Chen, Jiale and Li, Ju},
  year = 2025,
  month = nov,
  number = {arXiv:2507.10530},
  eprint = {2507.10530},
  primaryclass = {physics},
  publisher = {arXiv},
  doi = {10.48550/arXiv.2507.10530},
  url = {http://arxiv.org/abs/2507.10530},
  urldate = {2025-12-04},
  archiveprefix = {arXiv}
}

@article{unkeMachineLearningForce2021,
  title = {Machine {{Learning Force Fields}}},
  author = {Unke, Oliver T. and Chmiela, Stefan and Sauceda, Huziel E. and Gastegger, Michael and Poltavsky, Igor and Sch{\"u}tt, Kristof T. and Tkatchenko, Alexandre and M{\"u}ller, Klaus-Robert},
  year = 2021,
  month = aug,
  journal = {Chemical Reviews},
  volume = {121},
  number = {16},
  pages = {10142--10186},
  publisher = {American Chemical Society},
  issn = {0009-2665},
  doi = {10.1021/acs.chemrev.0c01111},
  url = {https://doi.org/10.1021/acs.chemrev.0c01111},
  urldate = {2026-03-18}
}

@misc{wuMachineLearnedLeftmostHessian2026,
  title = {Machine-{{Learned Leftmost Hessian Eigenvectors}} for {{Robust Transition State Finding}}},
  author = {Wu, Guanchen and Yuan, Chung-Yueh and Hegazy, Kareem and Blau, Samuel M. and {Head-Gordon}, Teresa},
  year = 2026,
  month = mar,
  number = {arXiv:2603.21323},
  eprint = {2603.21323},
  primaryclass = {physics},
  publisher = {arXiv},
  doi = {10.48550/arXiv.2603.21323},
  url = {http://arxiv.org/abs/2603.21323},
  urldate = {2026-03-25},
  archiveprefix = {arXiv}
}

@misc{xuECTSUltrafastDiffusion2025,
  title = {{{ECTS}}: {{An}} Ultra-Fast Diffusion Model for Exploring Chemical Reactions with Equivariant Consistency},
  shorttitle = {{{ECTS}}},
  author = {Xu, Mingyuan and Li, Bowen and Dong, Zhaojia and Dral, Pavlo and Zhu, Tong and Chen, Hongming},
  year = 2025,
  month = mar,
  doi = {10.26434/chemrxiv-2025-f9vdp},
  url = {https://chemrxiv.org/doi/full/10.26434/chemrxiv-2025-f9vdp},
  urldate = {2026-03-01},
  copyright = {https://creativecommons.org/licenses/by-nc-nd/4.0/},
  langid = {english}
}

@article{zhuGeodesicInterpolationReaction2019,
  title = {Geodesic Interpolation for Reaction Pathways},
  author = {Zhu, Xiaolei and Thompson, Keiran C. and Mart{\'i}nez, Todd J.},
  year = 2019,
  month = apr,
  journal = {The Journal of Chemical Physics},
  volume = {150},
  number = {16},
  pages = {164103},
  issn = {0021-9606, 1089-7690},
  doi = {10.1063/1.5090303},
  url = {https://pubs.aip.org/jcp/article/150/16/164103/198363/Geodesic-interpolation-for-reaction-pathways},
  urldate = {2026-04-17},
  langid = {english}
}
